\renewcommand{\mathbb}[1]{\mathbbmss{#1}}
\numberwithin{equation}{section}
\numberwithin{figure}{section}
\newcommand{\rE}{\mathbb{E}}
\newcommand{\rR}{\mathbb{R}}
\newcommand{\rZ}{\mathbb{Z}}
\newcommand{\RR}{\mathcal{R}}
\newcommand{\rP}{\mathbb{P}}
\newcommand{\rd}{\mathrm{d}}
\newcommand{\re}{\mathrm{e}}
\newcommand{\ZZ}{\mathcal{Z}}
\newcommand{\FF}{\mathcal{F}}
\newcommand{\GG}{\mathcal{G}}
\newcommand{\lref}{\lambda_{\textrm{ref}}}
\newcommand{\xx}{\bm{x}}
\newcommand{\ZZZ}{\bm{Z}}
\newcommand{\zz}{\bm{z}}
\newcommand{\vv}{\bm{v}}
\newcommand{\AAA}{\mathcal{A}}
\newcommand{\lann}{\langle\!\langle}
\newcommand{\rann}{\rangle\!\rangle}
\newcommand{\tx}{\widetilde{X}}
\newcommand{\tv}{\widetilde{V}}
\newcommand{\grad}{\nabla}
\newcommand{\var}{\mathrm{Var}}
\newcommand{\inner}[2]{\ensuremath{%
		\left\langle#1,#2\right\rangle%
}}
\renewcommand{\r}{\right}
\renewcommand{\l}{\left}
\newcommand{\Tr}{\mathrm{Tr}}
\newcommand{\aos}{Ann. Statist.}
\newcommand{\aap}{Ann. Appl. Probab.}
\setlist[enumerate,1]{label=(\roman*),ref=(\roman*)}
\setlist[enumerate,2]{label=(\alph*),ref=(\roman{enumi}-\alph*)}
\setlist[enumerate,3]{label=(\Alph*),ref=(\roman{enumi}-\alph{enumii}-\Alph*)}
\setlist[enumerate,4]{label=(\arabic*),ref=(\roman{enumi}-\alph{enumii}-\Alph{enumiii}-\arabic*)}
\newtheorem{theorem}{Theorem}
\newtheorem{proposition}[theorem]{Proposition}
\newtheorem{lemma}{Lemma}
\newtheorem{assumption}{Assumption}
\newtheorem*{assumption*}{Assumptions}
\theoremstyle{definition}
\newtheorem{remark}{Remark}
\begin{document}

\title{Randomized Hamiltonian Monte Carlo as Scaling Limit of the Bouncy Particle Sampler and Dimension-Free Convergence Rates}

\author[1]{George Deligiannidis}
\author[2]{Daniel Paulin}
\author[3]{Alexandre Bouchard-C\^ot\'e}
\author[4]{Arnaud Doucet}
\affil[1,4]{\small Department of Statistics, University of Oxford, UK.}
\affil[2]{\small School of Mathematics, University of Edinburgh, UK.}
\affil[3]{\small Department of Statistics, University of British Columbia, Canada.}
{

    \makeatletter
    \renewcommand\AB@affilsepx{, \protect\Affilfont}
    \makeatother

\affil[1]{\small\texttt{deligian@stats.ox.ac.uk}}
\affil[2]{\texttt{dpaulin@ed.ac.uk}}
\affil[3]{\texttt{bouchard@stat.ubc.ca}}
\affil[4]{\texttt{doucet@stats.ox.ac.uk}}
}
\maketitle
\thispagestyle{fancy}

\begin{abstract}
The Bouncy Particle Sampler is a Markov chain Monte Carlo method based on a non-reversible piecewise deterministic Markov process.
In this scheme, a particle explores the state space of interest by evolving according to a linear dynamics which is altered by bouncing on the hyperplane perpendicular to the gradient of 
the negative log-target density at the arrival times of an inhomogeneous Poisson Process (PP) and by randomly perturbing its velocity at the arrival times of a homogeneous PP.
Under regularity conditions, we show here that the process corresponding to the first component of the particle and its corresponding velocity converges weakly towards a
Randomized Hamiltonian Monte Carlo (RHMC) process as the dimension of the ambient space goes to infinity. RHMC is another piecewise deterministic non-reversible
Markov process where a Hamiltonian dynamics is altered at the arrival times of a homogeneous PP by randomly perturbing the momentum component.
We then establish dimension-free convergence rates for RHMC for strongly log-concave targets with bounded Hessians using coupling ideas and hypocoercivity techniques.
We use our understanding of the mixing properties of the limiting RHMC process to choose the refreshment rate parameter of BPS. This results in significantly better performance in our simulation study than previously suggested guidelines.\end{abstract}

{\small{}{}Keywords: }Bouncy particle sampler; Coupling; Randomized Hamiltonian Monte Carlo; Weak Convergence; Hypocoercivity.

\section{Introduction}
Assume one is interested in sampling from a target probability density on $\mathbb{R}^d$ which can be evaluated pointwise up to an intractable normalizing constant.
In this context one can use Markov chain Monte Carlo (MCMC) algorithms to sample from, and compute expectations with respect to the target measure.
Despite their great success, standard MCMC methods, such as the ubiquitous Metropolis--Hastings algorithm,
tend to perform poorly on high-dimensional targets. To address this issue,
several new methods have been proposed over the past few decades. Popular alternatives include the Metropolis-adjusted Langevin algorithm (MALA)
\cite{rossky1978brownian, roberts1996exponential}, Hamiltonian, or Hybrid, Monte Carlo (HMC) \cite{duane1987hybrid} and slice sampling \cite{neal2003slice}.

Recently, a novel class of non-reversible, continuous-time MCMC algorithms based on piecewise-deterministic Markov processes (PDMP) has appeared in applied probability \cite{monmarche2016piecewise,bierkens2017piecewise}, automatic control \cite{mesquita2012jump},
physics \cite{peters2012rejection,michel2014generalized,nishikawa2016event} statistics and machine learning
\cite{bouncy2018,bierkens2016zig,RHMC,vanetti2017piecewise,B_BC_D_D_F_R_J_16,pakman2016stochastic,wu2017generalized}.
Most of the current literature revolves around two piecewise-deterministic MCMC (PDMCMC) schemes: the Bouncy Particle Sampler (BPS) \cite{peters2012rejection,bouncy2018} and the Zig-Zag sampler \cite{bierkens2016zig}.
A practical advantage of the  BPS and Zig-Zag algorithms is that in many models it is possible to simulate their piecewise linear paths without time-discretization \cite{bouncy2018}. In contrast, methods based on either diffusions or Hamiltonian paths require time discretization and moreover their performance is known to collapse if the discretization is too coarse.
Despite the increasing interest in these piecewise linear PDMCMC algorithms, our theoretical understanding of their properties remains limited, although
a fair amount of progress has been achieved recently in establishing geometric ergodicity, see \cite{deligiannidis2017exponential,durmus2018geometric} for BPS and
\cite{fetique2017long,bierkens2017ergodicity} for Zig-Zag. However, all of these results tend to provide convergence rates that deteriorate with the dimension and thus fail to capture the
empirical performance of these PDMCMC algorithms on high-dimensional targets.

Scaling limits have become a very popular tool for analysing and comparing MCMC algorithms in high-dimensional scenarios since their introduction in the seminal paper \cite{roberts1997weak}; see, e.g., \cite{roberts1998optimal,beskos2013optimal}. They have been used to establish the computational complexity  of
the most popular MCMC algorithms, which is $O(d^2)$ for Random Walk Metropolis (RWM), $O(d^{4/3})$ for MALA and $O(d^{5/4})$ for HMC; here computational complexity is defined in terms of the expected squared jump distance.
In this direction, the recent work of \citet{bierkens2018high} has established scaling limits for both Zig-Zag and global BPS for high-dimensional standard Gaussian targets.
They obtain the scaling limits of several finite dimensional statistics, namely the angular velocity, the log-density and the first coordinate.
In this context, it is shown that Zig-Zag has algorithmic complexity $O(d)$ for all three types of statistics, whereas global BPS has complexity $O(d)$ for angular momentum and
$O(d^2)$ for the other two types of statistics. Benefits of Zig-Zag over global BPS are to be expected in this scenario. Indeed, when applied to a product target, the Zig-Zag sampler
factorises into independent components and is closely related to Local-BPS (LBPS); see
\cite{peters2012rejection,bouncy2018}. The standard (global) BPS studied herein and in \citet{bierkens2018high}, just like RWM, MALA and HMC, is an algorithm whose dynamics do not distinguish between product and non-product targets.

In the present paper, we also study scaling limits for BPS on a very general class of targets that greatly extends the i.i.d.\ scenario, and its variants, often considered in the literature, see e.g. \cite{roberts1997weak,roberts1998optimal,beskos2013optimal,bierkens2018high}.
We concentrate on the first coordinate and its corresponding velocity in a regime which differs from the one considered in \cite{bierkens2018high} in the following three ways:
(a) \cite{bierkens2018high} considers BPS with the location evolving at unit speed, whereas in our scenario the velocity is Gaussian, therefore with speed scaling like $\sqrt{d}$ in the dimension;
(b) \cite{bierkens2018high} considers scaling limits for the first coordinate of the location process only,
whereas we look at both location and velocity; and finally (c)\cite{bierkens2018high} rescales time with a factor $d$, whereas we obtain our limiting process on the natural time scale. As a result we obtain a different scaling limit
which suggests that BPS has algorithmic complexity $O(d^{3/2})$ if one is interested on low-dimensional projections, at least on weakly dependent targets. This is in agreement with the empirical results reported in \cite{bouncy2018}.   Given the different regimes and different objects studied in \cite{bierkens2018high} and the present paper, it is not surprising that the
two scaling limits differ significantly, with our bound being tighter and seemingly better at capturing the empirical behaviour of the process. In \cite{bierkens2018high} the first location coordinate converges to a Langevin diffusion, whereas in the present paper the process
tracking the first location and velocity components converges to a piecewise deterministic Markov process known as Randomized Hamiltonian Monte Carlo (RHMC).
Although the corresponding Fokker-Planck equation  was studied in \citet{dolbeault2015hypocoercivity}, using a related approach to ours,
RHMC was first studied in a Monte Carlo context in \cite{RHMC}.

To the best of our knowledge, our result is the first in the literature establishing a direct link between BPS and Hamiltonian dynamics.
It is our understanding that the Langevin diffusion obtained in \cite{bierkens2018high} can be obtained from RHMC by a further limiting procedure similar to the \textit{overdamped} regime of the Langevin equation.
In addition, the assumptions under which our scaling limit is obtained allow much more complex dependence structures than those considered in the literature, see e.g. \cite{bedard2007weak,bedard2019hierarchical,yang2019optimal,breyer2000metropolis,roberts1997weak,roberts1998optimal,beskos2013optimal}, where the target is assumed to factorise or to possess a hierarchical structure. In addition, in the scenario we consider all dimensions have an impact, in contrast with the Hilbert-space setting, see e.g.\ \cite{mattingly2012diffusion}, where only a fixed, finite number of dimensions is significant.

The second part of the paper is concerned with the convergence properties of RHMC. This process was studied in \cite{RHMC} where it was established that it is \textit{geometrically ergodic}. However, it is not clear whether such an approach  can provide dimension independent convergence rates. The earlier work of \cite{dolbeault2015hypocoercivity} studies the corresponding Fokker-Planck equation, tracking the evolution of densities rather than conditional expectations.
In recent years, there has been great success in obtaining dimension-free convergence rates of MCMC schemes for strongly log-concave targets  with bounded Hessians; see for example \cite{dalalyan2017theoretical,durmus2017nonasymptotic,mangoubi2017rapid,bou2018coupling,dwivedi2018log}.
In particular, in relation to HMC, the papers \cite{mangoubi2017rapid,bou2018coupling} use coupling techniques to obtain convergence
rates in terms of Wasserstein or total variation distances, but these usually leverage independent momentum refreshment to obtain a Markov process in the location components only.
We establish here these convergence rates in weighted Wasserstein distance using coupling ideas, and also
in $L^2$ using \textit{hypocoercivity}; see, e.g., \cite{dric2009hypocoercivity,pavliotis2014stochastic}.
The rates we provide may generally not be the optimal ones for specific scenarios. However, the optimal rates for a specific scenario can be obtained by solving a multivariate optimisation problem. \citet{dolbeault2015hypocoercivity} also uses hypocoercivity, albeit with a much different flavour, and does not seem to provide explicit rates. After the first version of the present paper appeared online, the approach of \cite{dolbeault2015hypocoercivity} was extended in \citet{andrieu2018hypercoercivity} to cover several PDMPs, including BPS, Zig-Zag and RHMC. Even more recently, the paper \cite{lu2020explicit} appeared online, proving $L^2$ rates for three PDMPs (BPS, Zig-Zag and RHMC).

The approach in  \cite{dolbeault2015hypocoercivity} and \cite{andrieu2018hypercoercivity} is quite distinct to ours.
In particular \cite{andrieu2018hypercoercivity} also obtain dimension-free bounds for RHMC under similar assumptions; their explicit rates have a complex dependency on various parameters of the problem and therefore a detailed comparison with the explicit rates in our Theorem~\ref{thm:hypoco} was not performed in \cite{andrieu2018hypercoercivity}. In Remark \ref{rem:RHMCcomparison} we perform a comparison, and find that in the strongly convex and smooth setting, neither of these two approaches outperforms the other in all cases, sometimes the bound of \cite{andrieu2018hypercoercivity} is sharper, while in other scenarios our bound is sharper. Their approach is quite general but much less direct for RHMC than ours, as they rely on generic results by Dolbeault, Mouhot, and Schmeiser.
The approach in \cite{lu2020explicit} is entirely different from \cite{andrieu2018hypercoercivity} and ours, using sophisticated PDE methods to analyse the Fokker-Planck equations of the PDMP directly. In Remark \ref{rem:RHMCcomparison}, we include a detailed comparison with our results. In general, we find that the bounds in \cite{lu2020explicit} for RHMC are sharper than ours in the condition number $M/m$, but the constant of proportionality is not explicitly stated, and might be non-trivial to obtain reasonably small constants.

In addition the bounds of \cite{andrieu2018hypercoercivity} and \cite{lu2020explicit} for BPS suggest that its computational cost scales like $O(d^2)$. This seems to capture the worst case scenario and agrees for example with results \cite{bierkens2018high} for the log-density of the target, which recommends scaling the refreshment rate with the dimension. Our results suggest that when one is interested in low-dimensional projections, then it is computationally more efficient to not scale the refreshment rate with the dimension, achieving computational cost of order $O(d^{3/2})$. Empirical results in  Section \ref{sec:empirical} seem to suggest that this may also be the case for certain classes of functions depending on all the coordinates, such as the sum of all coordinates. A common scenario where this type of scaling limit is extremely relevant is for example that of Bayesian inference where typically one may only be interested in estimating the posterior means, variances and covariances of the high-dimensional state components (this is a set of one and two dimensional marginals). Finally, it is intuitively clear that the log-density will not mix well in a high-dimensional target for the global BPS, see \cite{bierkens2018high} for a detailed study. We conjecture that the functions that exhibit this type of behaviour form a low-dimensional sub-space of $L^2(\pi)$.  Recently \cite{bierkens2019spectral} has obtained very detailed results on the whole spectrum of the one-dimensional Zig-Zag process, it would be interesting if similar results could be obtained for BPS in high dimensional scenarios.

Apart from the intrinsic interest of the RHMC process, our motivation for studying its convergence rates is as follows. In the scaling literature for MCMC the limiting processes are usually Langevin diffusions. These have very well understood convergence rates which, at least under additional assumptions, are dimension-free. Therefore, in high-dimensions the cost of running the (time-rescaled) algorithm serves as a proxy for its computational complexity. In our case, the algorithm ran on its natural time scale converges to RHMC, which
as we establish here, also enjoys dimension-free convergence rates under appropriate assumptions. Therefore the cost of running BPS for a unit of process time serves as a proxy
for its algorithmic complexity.

The next section contains the statements of the main results of the paper along with necessary notation and definitions.
The remaining sections contain the proofs of the main results.

\section{Main results}
\subsection{Notation}
For $x\in \mathbb{R}$, let $x_+=\max\{x,0\}$.
Let $k\geq 1$. For vectors $u,v\in \mathbb{R}^k$ we write $|v|$ and $(u, v)$ for the Euclidean norm and inner product respectively. For matrices $A, B \in \mathbb{R}^{k\times k}$ we write
$A\preceq B$ if $B-A$ is positive-definite. For a function $f:\mathbb{R}^k \mapsto \mathbb{R}$ we write $\nabla f, \nabla^2 f$ for its (weak) gradient and Hessian respectively.
When considering functions $f=f(a,b)$, where $a,b\in\mathbb{R}^k$, that is $f:\mathbb{R}^{2k}\mapsto \mathbb{R}$, we will write $\nabla_a f$, $\nabla_b f$ to denote the gradient with respect to the variables $a\in\mathbb{R}^{k}$ and $b\in\mathbb{R}^{k}$  respectively.
Allowing a slight abuse of notation, for vector valued functions $f:\mathbb{R}^{d}\to \mathbb{R}^k$, we will also write $\nabla f$ for the Jacobian matrix of derivatives.

For $\ZZ=\mathbb{R}^k$, with $k\in \mathbb{N}$, let $C_0(\ZZ)$ denote the space of continuous functions $f:\mathcal{Z}\mapsto \mathbb{R}$ that vanish at infinity.
Recall that $C_0(\ZZ)$ is a Banach space with respect to the $\|\cdot\|_\infty$ norm, which is defined as usual through $\|f\|_\infty = \sup |f|$.
Also let $C_c^\infty (\ZZ)$ be the space of infinitely differentiable functions $f:\ZZ\mapsto \mathbb{R}$ with compact support.

For a measure $\pi$ on $\ZZ$, we will write $L^2(\pi)$ for the usual, real Hilbert space, and
$\langle \cdot,\cdot\rangle, \| \cdot\|$
to denote the inner product and norm in $L^2(\pi)$ respectively, whereas $L_0^2(\pi)$ will denote the orthogonal complement of the constant functions, i.e., functions with mean zero under the distribution $\pi$. Finally for $f:\mathcal{Z}\to \mathbb{R}^d$ and $g:\mathcal{Z}\to \mathbb{R}^d$, with $d\geq 1$,  we also write
$$\langle f, g\rangle = \int \pi(\rd z) ( f(z), g(z) ).$$
It will be clear from the context whether $\langle \cdot, \cdot\rangle$ is applied to $\mathbb{R}-$ or $\mathbb{R}^d$-valued functions.
We also define
$$H^1:= H^1(\pi):= \left\{h\in L_0^2(\pi): \nabla_x h, \nabla_v h\in L^2(\pi) \right\},$$
the Sobolev space of centred functions in $L^2(\pi)$ with weak derivatives in $L^2(\pi)$ and for $f,g\in H^1(\pi)$
we will denote the inner product and norm on $H^1(\pi)$ with $\lann \cdot ,\cdot\rann_{H^1(\pi)}$ and $\|\cdot \|_{H^1(\pi)}$ respectively, where
$$\lann f, g\rann_{H^1(\pi)}= \langle \nabla_x f, \nabla_x g\rangle+ \langle \nabla_v f, \nabla_v g\rangle.$$

\subsection{The Bouncy Particle Sampler}
Let $\ZZ:=\mathbb{R}\times \mathbb{R}$ and for $n\geq1$, define the Borel probability measure $\pi_n(\rd \zz)$ on $\ZZ^n$ with density w.r.t.\ Lebesgue measure given by
$$\pi_n(\zz) = \pi_n(\xx,\vv)\propto \exp\left\{ -U_n(\xx) - |\vv|^2/2\right\},\quad (\bm{x},\bm{v}) \in \ZZ^n,$$
where $U_n:\mathbb{R}^{n}\mapsto \mathbb{R}_+$ is a potential.

For $(\bm{x},\bm{v}) \in \ZZ^n$, define
\begin{equation}\label{eq:bounceoperator}
R_n(\bm{x})\bm{v} := \bm{v} - 2 \frac{( \nabla U_n(\bm{x}), \bm{v} )}{|\nabla U_n(\bm{x})|^2}\nabla U_n(\bm{x}).
\end{equation}
The vector $R_n(\bm{x})\bm{v}$ can be interpreted as a Newtonian collision on the hyperplane orthogonal to the gradient of the potential $U_n$, hence the interpretation of $\bm{x}$ as a position, and $\bm{v}$, as a velocity.

The Bouncy Particle Sampler (BPS), first introduced in \cite{peters2012rejection} and in a statistical context in \cite{bouncy2018}, defines a ${\pi_n}$-invariant, non-reversible, piecewise deterministic Markov process (PDMP)
$\{\ZZZ_n(t):t\geq 0\}=\{(\bm{X}(t), \bm{V}(t)): t\geq 0\}$ taking values in $\mathcal{Z}^n$ whose generator
$\mathcal{A}_n$, for smooth enough functions $f:\mathcal{Z}^n \mapsto \mathbb{R}$, is given by
\begin{align*}
 \AAA_n f(\bm{x},\bm{v})
 &= ( \nabla f(\bm{x}, \bm{v}), \bm{v} ) +  \max\{0, ( \nabla U_n(\bm{x}), \bm{v})\}
 \left[\mathfrak{R}_nf\left(\bm{x},\bm{v}\right) - f\left(\bm{x},\bm{v}\right)  \right]\\
 &\qquad
 + \lref \left[ Q_{\alpha,n} f \left(\bm{x},\bm{v}\right)-f\left(\bm{x},\bm{v}\right)\right],
 \end{align*}
where
\begin{equation*}
 \mathfrak{R}_n f\left(\bm{x},\bm{v}\right)
 := f\left(\bm{x},R_n(\bm{x})\bm{v}\right),
 \quad Q_{\alpha,n} f\left(\bm{x},\bm{v}\right)
 := \frac{1}{(2\pi)^{n/2}}\int_{\mathbb{R}^{n}} \re^{-|\boldsymbol{\xi}|^2/2} f\left(\bm{x},\alpha\bm{v}+\sqrt{1-\alpha^2} \boldsymbol{\xi}\right) \rd \boldsymbol{\xi},
\end{equation*}
for $0\leq\alpha<1$ and a positive refreshment rate $\lref>0$.
We also write $\ZZZ_n(t)=\left(Z_n^{(1)}(t), \dots, Z_n^{(n)}(t)\right)$ where $Z_n^{(k)}(t)=(X_n^{(k)}(t), V_n^{(k)}(t)) \in \ZZ$ is the $k$-th component.
The original formulation of the BPS algorithm corresponds to $\alpha=0$, that is refreshment occurs independently. The generalization $\alpha > 0$ \cite{vanetti2017piecewise} consists in refreshments that are performed according to an auto-regressive process.

\subsection{Randomized Hamiltonian Monte Carlo}
We define here RHMC as this is the process we will obtain as the weak limit of  $Z_n^{(1)}(t)=(X_n^{(1)}(t), V_n^{(1)}(t)) \in \ZZ$ as $n\to\infty$.    Define the Hamiltonian
\begin{equation}\label{eq:Hamiltonian}
H(x,v)=W(x) + |v|^2/2,
\end{equation}
for $(x,v)\in \ZZ$ and the corresponding probability density on $\ZZ$
\begin{equation}\label{targetRHMC}
\pi(x,v)=\bar{\pi}(x) \cdot \psi(v)  \propto \exp\{-W(x)-|v|^2/2\}.
\end{equation}
The \textit{Hamiltonian dynamics} associated to (\ref{eq:Hamiltonian}) is an ordinary differential equation in $\ZZ$ of drift $(\nabla_v H,-\nabla_x H)=(v,-\nabla W)$.
The RHMC process, denoted $\{Z_t:t \geq 0\}$, can then be defined following \citet[Section~24]{D_93}, as a PDMP with deterministic dynamics given by Hamiltonian dynamics with respect to $H$, fixed jump rate $\lref>0$ and jump kernel
\begin{equation}
Q_\alpha f(x,v):= \frac{1}{(2\pi)^{n/2}}\int \re^{-|{\xi}|^{2}/2}
f\left(x,\alpha v+\sqrt{1-\alpha^2} \xi\right) \rd \xi,
\end{equation} for some $0\leq\alpha<1$.
We will write $\{P^t: t\geq 0\}$ for the semi-group corresponding to $\{Z_t: t\geq 0\}$, that is
$$P^t f(z) = \rE\left[ \left. f(Z_t) \right| Z_0=z \right].$$
It has been shown, \cite{RHMC},  that RHMC admits $\pi$ as an invariant distribution.

It can also be shown that for $f\in C^{\infty}_c (\ZZ)$, the generator of the semigroup $\{P^t: t \geq 0\}$ is given by
\begin{equation}\label{eq:RHMC_generator}
\mathcal{A}f(x,v) = ( \nabla_x f, v) - ( \nabla_v f, \nabla W)
+ \lref \left[Q_\alpha f(x,v) - f(x,v)\right].
\end{equation}

The refreshment is done in an auto-regressive manner. From now on, we will restrict ourselves for BPS and RHMC to $0<\alpha<1$. The reason for using $\alpha>0$ is that it allows us to establish the Feller property which greatly simplifies the rest of the proofs.
 Since the autoregressive process mixes exponentially fast there is no loss in terms of mixing potentially at the cost of more frequent refreshments, something which has also been observed empirically.
\begin{remark}
As one of the referees kindly suggested, one may attempt to couple the process with $\alpha=0$ with the process at $\alpha_n=o(1)$ in order to extend the result to the case $\alpha=0$. Unfortunately, the obvious line of attack requires one to couple the full $n$-dimensional velocity vector at refreshments, so the maximal coupling deteriorates with the growing dimension; this approach would require a quantitative version of Theorem~\ref{thm:weakconv}. It is possible that a different coupling can be used, but we did not pursue this issue further.
\end{remark}

\subsection{Main results}
\subsubsection{RHMC as Scaling Limit of BPS}
Before stating our weak convergence result, we will make some assumptions.
We consider a sequence of targets $\pi_n$ on $\mathbb{R}^n\times \mathbb{R}^n$ where $\pi_n(\bm{x}, \bm{v}) = \bar{\pi}_n(\bm{x})  \psi_n(\bm{v})$, with $\psi_n$ a standard $n$-dimensional Gaussian and $\bar{\pi}_n(\bm{x}) = \exp\left[ - U_n(\bm{x}) \right]$ for a sequence of potentials $U_n: \mathbb{R}^n \to [0,\infty)$ satisfying the following assumptions.
\begin{assumption}\label{ass:potential}
The potential $U_n \in C^2(\mathbb{R}^n)$ is $m$-strongly convex with $M$-Lipschitz gradient
\begin{equation}
mI \preceq \nabla^2 U_n (\bm{x}) \preceq M I, \qquad x \in \mathbb{R}^n, \text{ with }0<m\le M<\infty,
\end{equation}
and $U_n$ achieves its minimum at 0, that is $U_n(0)=0$ and $\nabla U_n(0)=0$.
\end{assumption}
\begin{assumption}\label{ass:marginal}
The marginal density of the first component of $\bar{\pi}_n$ is fixed and is given by
$$f(x) := \int \bar{\pi}_n(x, \bm{x}_{2:n}) \rd \bm{x}_{2:n}.$$
We assume that $f(x) = \exp[- W(x)]$ for a potential  $W\in C^{\infty}(\mathbb{R}; [0,\infty))$ such that $\lim_{|x|\to\infty} W(x)=\infty$ and
$$\int \re^{-W(x)} \left( |W''(x)| + |W'(x)|^2\right) \rd x < \infty.$$
\end{assumption}
Let $\{Z_t: t \geq 0\}$ be the RHMC process with potential $W$ and write $\mathcal{A}$ for its  generator given in \eqref{eq:RHMC_generator}.
The following theorem is our first main result.
\begin{theorem}\label{thm:weakconv}
Suppose Assumptions~\ref{ass:potential} and \ref{ass:marginal} hold,  $0<\alpha<1$, $\lref>0$ and that the BPS process  $\{\ZZZ_n(t):t\geq 0\}$  is initialized
at stationarity, i.e., $\ZZZ_n(0)\sim \pi_n$. Then the process $\{Z_n^{(1)}(t):t\geq 0\}$ corresponding to the first location and velocity components of the BPS process converges weakly to the RHMC process  $\{Z_t:t\geq 0\}$ as $n\to\infty$.
\end{theorem}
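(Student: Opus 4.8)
The plan is to prove weak convergence via the standard martingale-problem route: show that the generators $\AAA_n$ restricted to the first component converge to the RHMC generator $\AAA$ on a suitable core, establish tightness of the laws of $\{Z_n^{(1)}(t)\}$ in the Skorokhod space $D([0,\infty),\ZZ)$, and then invoke uniqueness of the martingale problem for $\AAA$ to identify the limit. The reason this is tractable is Assumption~\ref{ass:iid}: because $U_n$ is a product potential and $\ZZZ_n(0)\sim\pi_n$, the $n$ coordinates are exchangeable and at stationarity i.i.d., so the ``mean-field'' quantities that drive the first coordinate concentrate. Concretely, the bounce rate $\max\{0,\langle\nabla U_n(\xx),\vv\rangle\} = \max\{0,\sum_{i=1}^n U'(x_i)v_i\}$ and the reflected velocity $R_n(\xx)\vv$ both depend on $\xx,\vv$ only through the inner product $\langle\nabla U_n(\xx),\vv\rangle$ and $|\nabla U_n(\xx)|^2=\sum_i U'(x_i)^2$. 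Under $\pi_n$, by the law of large numbers $|\nabla U_n(\xx)|^2 / n \to \rE_{\overline\pi}[U'(X)^2] =: \sigma^2 \in(0,\infty)$ a.s. (finite by Assumption~\ref{ass:gradintegral}), and for fixed first coordinate the remaining sum $\sum_{i\ge 2} U'(x_i)v_i$ is, conditionally, a sum of $n-1$ i.i.d.\ mean-zero terms with variance $\sigma^2$, hence $\approx \sqrt{n}\,\GG$ for a standard Gaussian $\GG$. Thus the bounce ``looks from the point of view of the first coordinate'' like an event whose effect on $(X^{(1)},V^{(1)})$ is negligible: the reflection $R_n(\xx)\vv$ changes $v_1$ by $-2\frac{\langle\nabla U_n,\vv\rangle}{|\nabla U_n|^2}U'(x_1)$, which is $O_p(\sqrt n \cdot n^{-1}) = O_p(n^{-1/2}) \to 0$, while bounces arrive at rate $O(\sqrt n)$; the product is $O_p(1)$ but one must show the drift it induces on $v_1$ vanishes and the quadratic variation it contributes to $v_1$ vanishes too. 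This is precisely where the Hamiltonian term $-\langle\nabla_v f,\nabla U\rangle$ of $\AAA$ must emerge — not from the refreshment term but as the limit of the bounce contribution.

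The key computation is therefore the following. Fix $f\in C_c^\infty(\ZZ)$, regard it as a function of $(x_1,v_1)$ only, and compute $\AAA_n f$. The transport term $\langle\nabla f,\vv\rangle$ becomes exactly $\partial_{x_1}f\cdot v_1 = \langle\nabla_x f,v\rangle$, matching the first term of $\AAA$. The refreshment term $\lref[Q_{\alpha,n}f - f]$: since $f$ depends only on $v_1$ and refreshment acts coordinatewise as $v_1\mapsto \alpha v_1 + \sqrt{1-\alpha^2}\xi_1$, this is exactly $\lref[Q_\alpha f - f]$ — no limit needed. The bounce term is the crux: writing $B_n := \langle\nabla U_n(\xx),\vv\rangle$, $G_n:=|\nabla U_n(\xx)|^2$, $c := U'(x_1)$, the bounce changes $v_1$ to $v_1 - 2(B_n/G_n)c$ and leaves $x_1$ fixed, so the bounce contribution to $\AAA_n f$ is
\begin{equation*}
\max\{0,B_n\}\Bigl[f\bigl(x_1, v_1 - 2(B_n/G_n)c\bigr) - f(x_1,v_1)\Bigr].
\end{equation*}
Taylor expand $f$ in its velocity argument: the contribution is approximately $\max\{0,B_n\}\cdot\bigl(-2(B_n/G_n)c\bigr)\partial_{v_1}f + \tfrac12\max\{0,B_n\}\cdot 4(B_n/G_n)^2 c^2\,\partial_{v_1}^2 f + \cdots$. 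Now $\max\{0,B_n\}\cdot(B_n/G_n) = B_n\mathbbm{1}(B_n>0)/G_n$; since $B_n \approx c v_1 + \sqrt{n-1}\,\sigma\GG$ and $G_n\approx n\sigma^2$, we have $B_n^2/G_n \approx (n\sigma^2\GG^2 + 2\sqrt n\,\sigma\GG c v_1 + \cdots)/(n\sigma^2) \to \GG^2$ and $B_n\mathbbm{1}(B_n>0)/G_n \to 0$ in an appropriate sense, BUT the first-order term also carries a deterministic piece: one must track $B_n$ more carefully. The clean way is: $B_n\mathbbm{1}(B_n>0) = \tfrac12(B_n + |B_n|)$, so $\max\{0,B_n\}(B_n/G_n) = \tfrac12(B_n^2 + |B_n|B_n)/G_n$. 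The term $B_n^2/(2G_n) \to \GG^2/2$ (deterministic-looking in the limit), and $|B_n|B_n/(2G_n)$: since $|B_n| \approx \sqrt n\,\sigma|\GG|$ and $B_n/\sqrt n \to \sigma\GG$ while the ``signal'' $cv_1$ in $B_n$ is $O(1)$, we get $|B_n|B_n/(2G_n) = |B_n/\sqrt n|\cdot(B_n/\sqrt n)/(2\sigma^2) \to |\GG|\sigma\GG/(2\sigma^2) = |\GG|\GG/(2\sigma)\cdot$... — I need to be more careful, but the upshot one expects is that after taking the conditional expectation over the $(n-1)$ ``bath'' coordinates, the odd symmetry in $\GG$ kills the pure-noise pieces and isolates the deterministic drift. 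Averaging against the Gaussian law of $\GG$: $\rE[\max\{0,B_n\}(B_n/G_n)]$ over the bath $\to$ something proportional to $\rE[\GG\mathbbm{1}(\GG>0)]\cdot(\text{signal})$ — and the signal term is exactly $cv_1 = U'(x_1)v_1$; with the normalization one obtains a drift term $-\kappa\,U'(x_1)\partial_{v_1}f$ for an explicit constant $\kappa$. (Here the precise constant — and whether one needs to rescale $\lref$ or absorb a factor — is what the authors' Assumption on the exact regime pins down; I expect $\kappa = 1$ after their choices, giving the $-\langle\nabla_v f,\nabla U\rangle$ term of $\AAA$.) The second-order term $\max\{0,B_n\}(B_n/G_n)^2 c^2 \approx (\text{rate }O(\sqrt n))\cdot O(1/n)\cdot O(1) = O(n^{-1/2}) \to 0$, so no diffusion appears. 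Hence $\AAA_n f \to \AAA f$ uniformly on compacts (using $\|U''\|_\infty\le M$ from Assumption~\ref{ass:weakconv} to control Taylor remainders and equicontinuity, and Assumption~\ref{ass:gradintegral} to control $G_n/n$).

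Having established generator convergence on $C_c^\infty(\ZZ)$, which is a core for $\AAA$ (this should follow from the Feller property asserted in the text for $0<\alpha<1$, together with a Hille–Yosida / range-density argument, or be quoted from earlier in the paper), the proof concludes in three moves. First, tightness: since $\ZZZ_n(0)\sim\pi_n$ the marginal of $Z_n^{(1)}(0)$ is $\pi$ for all $n$, hence trivially tight; for path-tightness in $D([0,\infty),\ZZ)$ one uses the Aldous–Rebolledo criterion, bounding increments via $\sup_n\|\AAA_n f\|_\infty < \infty$ for a separating family of $f$'s — the position coordinate moves with bounded-in-$L^2$ velocity and the velocity coordinate has bounded-rate jumps of size $O(n^{-1/2})$ plus refreshments at bounded rate, so compact containment and the modulus-of-continuity control both hold. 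Second, by Ethier–Kurtz (Theorem 4.8.2 / Corollary 4.8.9 in their book, or the analogous statement the paper may already cite), generator convergence on a core plus tightness plus convergence of initial distributions gives convergence of the finite-dimensional-via-martingale-problem, i.e.\ $Z_n^{(1)}\Rightarrow Z$ provided the martingale problem for $(\AAA, C_c^\infty(\ZZ))$ is well-posed. Third, well-posedness of that martingale problem: uniqueness holds because $\AAA$ generates a Feller semigroup (stated in the excerpt), so the limit is the RHMC process. The main obstacle, and where essentially all the work lies, is the bounce-term limit in the middle paragraph: showing that the stationary product structure makes the reflected-velocity update, despite arriving at rate $\Theta(\sqrt n)$, contribute in the limit exactly the Hamiltonian drift $-U'(x_1)\partial_{v_1}f$ and nothing else — no residual diffusion, no residual jump intensity — and doing so with uniform-on-compacts control of the error. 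This requires a careful conditional CLT/LLN for $(B_n,G_n)$ given the first coordinate, careful bookkeeping of which term of the Taylor expansion of $f$ survives, and uniform integrability (supplied by Assumption~\ref{ass:gradintegral}) to push expectations through.
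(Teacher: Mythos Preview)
Your identification of the limiting mechanism is correct and matches the paper: the transport and refreshment terms pass through unchanged, and the bounce term, after conditioning on $(X_1,V_1)$, produces the Hamiltonian drift $-U'(x_1)\partial_{v_1}f$ via the computation $-2\,\rE[Z_n^2\mathbb{1}\{Z_n>0\}\mid X_1,V_1]\to -1$, where $Z_n=B_n/\sqrt{G_n}$ is the self-normalised sum converging to a standard Gaussian (note $\max\{0,B_n\}\cdot B_n/G_n=Z_n^2\mathbb{1}\{Z_n>0\}$ directly, which cleans up your computation). The second-order Taylor term indeed vanishes at the claimed rate.

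However, the technical framework you propose has a genuine gap. The projection $Z_n^{(1)}$ is \emph{not} Markov: for $f$ depending only on $(x_1,v_1)$, the function $\AAA_n f$ lives on $\ZZ^n$ and genuinely depends on all bath coordinates through the bounce term. There is therefore no statement of the form ``$\AAA_n f\to\AAA f$ uniformly on compacts''; what converges is only the conditional expectation $\rE[\AAA_n f(\ZZZ_n(0))\mid X_1,V_1]$, in $L^1(\pi)$, using stationarity. The standard route you invoke (Ethier--Kurtz Theorem~4.8.2 or Corollary~4.8.9) assumes a sequence of Markov processes with convergent generators, which you do not have here. If you attempt the martingale-problem argument directly you must control $\rE\bigl[\int_s^t(\AAA_n f(\ZZZ_n(r))-\AAA f(Z_n^{(1)}(r)))\,\rd r\cdot h\bigr]$ for $h$ measurable with respect to $\GG_s^n$; the integrand does not vanish in $L^1$, and its conditional expectation given $\GG_s^n$ (as opposed to given $Z_n^{(1)}(r)$) is not obviously small, because the bath at time $r>s$ is correlated with the past of the first component through the bounces.

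The paper handles this via \cite[Chapter~4, Corollary~8.6]{EK_05}, which is designed precisely for such non-Markovian projections. One introduces
\[
\phi_n(t):=\epsilon_n^{-1}\,\rE\bigl[f(Z_n^{(1)}(t+\epsilon_n))-f(Z_n^{(1)}(t))\,\big|\,\GG_t^n\bigr],
\]
together with a companion $\xi_n$, and verifies $\rE|\phi_n(t)-\AAA f(Z_n^{(1)}(t))|\to 0$ plus several uniform-in-$t$ auxiliary conditions. A second application of Dynkin's formula yields
\[
\phi_n(0)=\rE\bigl[\AAA_n f(\ZZZ_n(0))\mid\GG_0^n\bigr]+\mathcal{R}_n,\qquad \mathcal{R}_n=\tfrac{\epsilon_n}{2}\,\rE\bigl|\AAA_n\AAA_n f(\ZZZ_n(0))\bigr|,
\]
so the main term is exactly your averaged generator; but the price is bounding all nine cross-terms $\AAA_n^{(i)}\AAA_n^{(j)}f$ in $L^1(\pi_n)$ (the worst is $O(n^2)$, forcing $\epsilon_n=o(n^{-2})$). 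This iterated-generator control, together with a pathwise estimate $\rE\sup_{t\le T}|\xi_n(t)-f(Z_n^{(1)}(t))|\to 0$, constitutes the bulk of the technical work in the paper's proof and is entirely absent from your proposal.
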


We would like to stress that there is no time-rescaling in the above result, and that the sequence of targets is not assumed to factorise into independent components, or to converge towards an infinite dimensional measure as the dimension $n\to \infty$.
\begin{remark}
Notice that Assumption~\ref{ass:potential} allows for the standard scenario where the target factorises in $n$ i.i.d.\ copies which corresponds to
$U_n(\bm{x}) = \sum_{i=1}^n U(x_i)$, for an $m$-strongly convex potential $U\in C^2(\mathbb{R})$ with $U''\leq M$. Indeed in this case the Hessian matrix is diagonal and given by
$\left(\nabla^2 U_n(\bm{x})\right)_{i,j} = U''(x_i) \delta_{i,j} \geq 0$. This was the scenario considered in an earlier version of the present paper. In fact in this i.i.d.\ scenario the convexity assumption can be removed and the upper bound on $U''$ can be replaced by an upper bound on $U^{(k)}$ for any $k$, at the expense of additional technical complexity.
\end{remark}

\begin{remark}
From the proof (in particular, the bounds \eqref{eq:bndRn}, \eqref{eq:bndeps1}, \eqref{eq:bndeps21}, \eqref{eq:bndeps22}, \eqref{eq:bndeps3}) it is clear that the result remains true when $m, M$ in Assumption~\ref{ass:potential} are allowed to depend on $n$, if in addition we assume that
\begin{equation}
 m_n n \to \infty,\quad \frac{M_n}{m_n}= o(n^{1/4}),\quad  \frac{M_n^3}{m_n^{3/2}}=o(n^{1/2}),\quad \frac{M_n^3}{m_n^{2}}=o(n^{1/2}),\quad  \frac{M_n^2}{m_n}=o\left(\frac{n^{1/2}}{(\log(n))^{1/2}}\right).
\end{equation}
\end{remark}

\begin{remark}
Scaling limits for non i.i.d.\ targets have appeared in the past. \citet{bedard2007weak} studied targets that factorise into independent, but not identically distributed components; results on hierarchical targets can be found in \cite{bedard2019hierarchical, yang2019optimal} and references therein. The case of Gibbs measures with finite range interactions was studied in \cite{breyer2000metropolis}. \citet{mattingly2012diffusion} proved that a sequence of algorithms targeting finite dimensional projections of a measure admitting a density with respect to a reference Gaussian measure on a Hilbert space, converge to a Hilbert space-valued stochastic differential equation.
\end{remark}

\begin{remark}
		To illustrate Theorem~\ref{thm:weakconv} in Figure \ref{fig:weakconv} we have plotted the paths of the BPS process and the equi-energy contours of the Hamiltonian corresponding to the deterministic dynamics of RHMC.
		The target distribution has potential $U(\bm{x})=\sum_{i=1}^n |x_i|^b/2$ and we have tested two values of $b$, $b=2$ (Gaussian) and $b=4$. 	These figures show the first coordinate of the position and velocity vectors. As we can see, as the dimension increases, the paths of BPS indeed appear more and more similar to the contours of the Hamiltonian.
\begin{figure}
	\centering
	\begin{subfigure}[t]{0.49\textwidth}
		\raisebox{-\height}{\includegraphics[width=\textwidth]{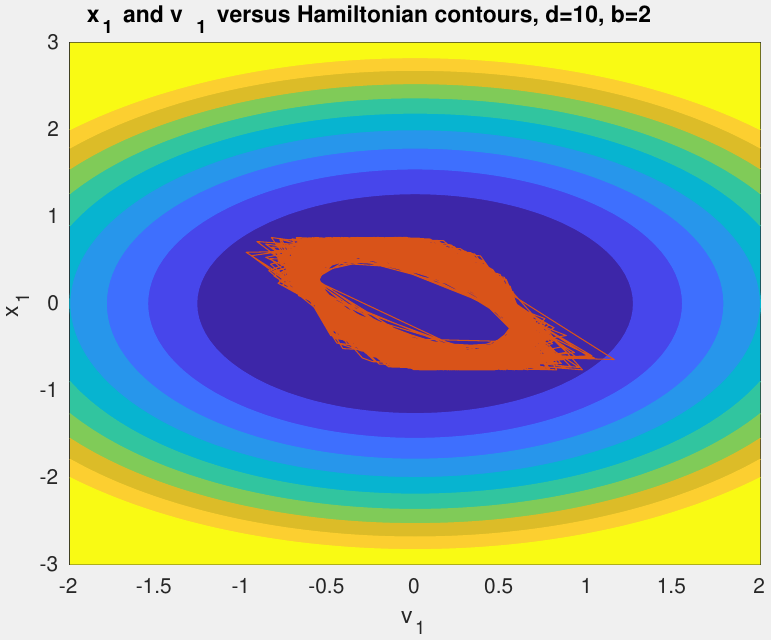}}
		\caption{$b=2$, $d=10$}
	\end{subfigure}
	\hfill
	\begin{subfigure}[t]{0.49\textwidth}
		\raisebox{-\height}{\includegraphics[width=\textwidth]{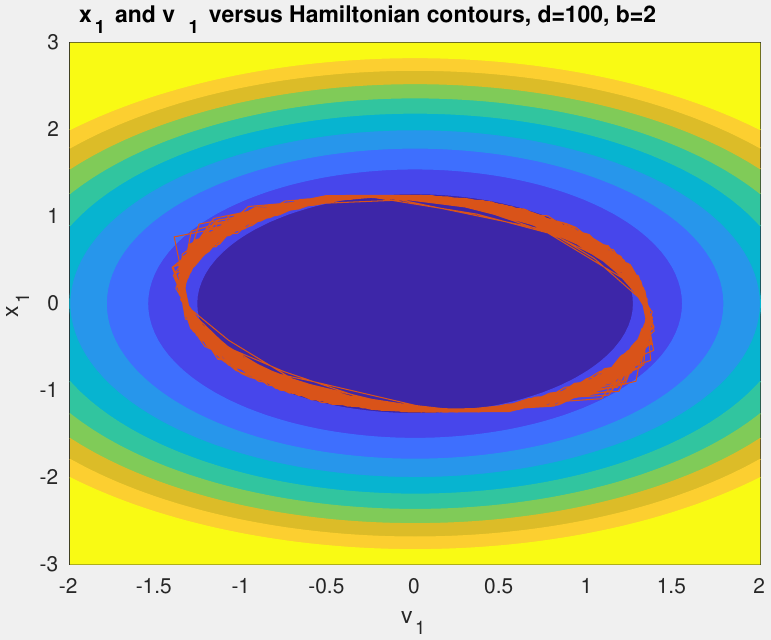}}
		\caption{$b=2$, $d=100$}
	\end{subfigure}
	\begin{subfigure}[t]{0.49\textwidth}
		\raisebox{-\height}{\includegraphics[width=\textwidth]{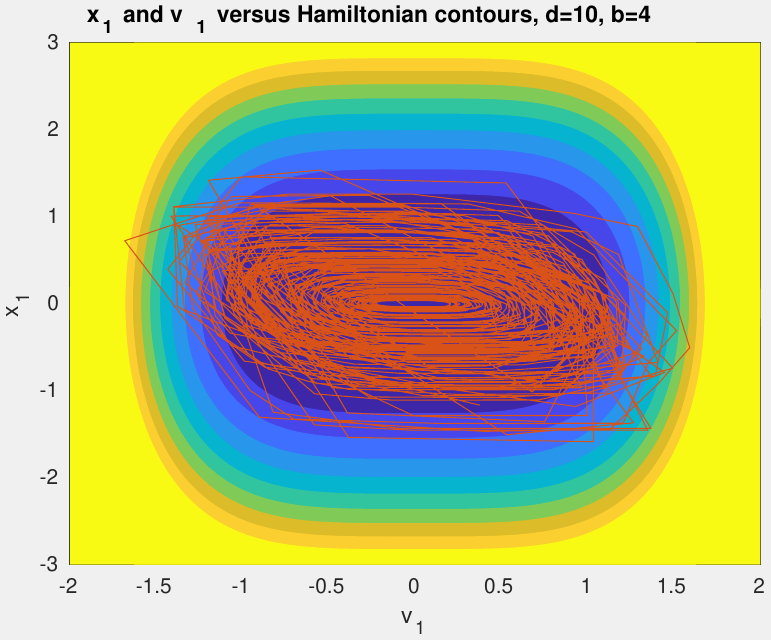}}
		\caption{$b=4$, $d=10$}
	\end{subfigure}
	\hfill
	\begin{subfigure}[t]{0.49\textwidth}
		\raisebox{-\height}{\includegraphics[width=\textwidth]{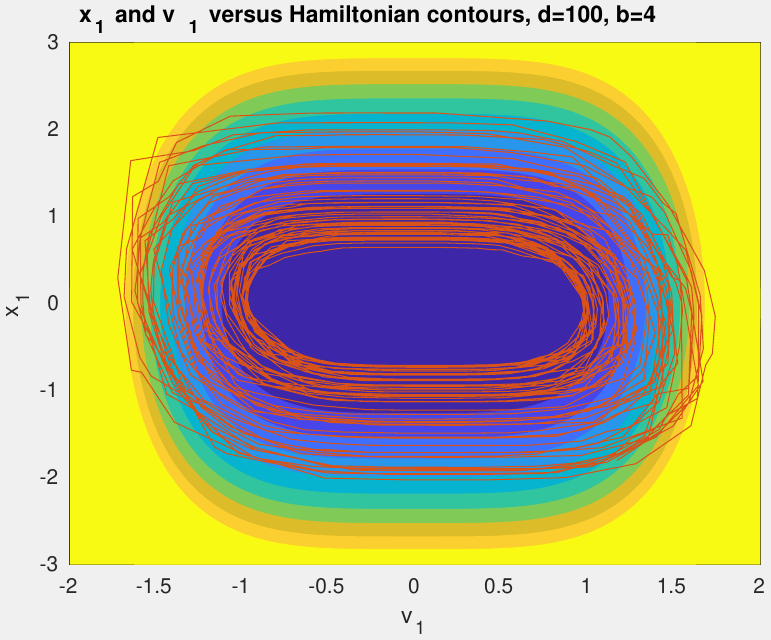}}
		\caption{$b=4$, $d=100$}
	\end{subfigure}
	\caption{ Convergence of the BPS process to RHMC in high dimensions for $U(x)=|x|^b/2$.}
	\label{fig:weakconv}
\end{figure}			
				
\end{remark}
\begin{remark}
Theorem~\ref{thm:weakconv} can be straightforwardly extended to any fixed, finite number of coordinates $d>1$. In this case the limiting process will be RHMC in $\mathbb{R}^d\times \mathbb{R}^d$ with respect to the potential $W:\mathbb{R}^d \to \mathbb{R}$ given by
$$W(\bm{x}) = -\log \int \bar{\pi}_n(\bm{x}, \bm{x}_{d+1:n}) \rd \bm{x}_{d+1:n}, \quad \bm{x}\in \mathbb{R}^d,$$
with $W$ satisfying a $d$-dimensional version of Assumption~\ref{ass:marginal}.
%
\end{remark}
\paragraph{Sketch of Proof.}
The full proof of this result is quite lengthy and will be given in Section~\ref{sec:weakconvproof}. However, we now give the key idea without going into technical details, for the simpler i.i.d.\ scenario where $U_n(\bm{x}) = \sum_{i=1}^n U(x_i)$, for $U:\mathbb{R}\mapsto [0,\infty)$. In this case the limiting process has potential $W\equiv U$.
Under the assumptions of Theorem~\ref{thm:weakconv} let
 $\bm{Z}_n = (\bm{X}_n, \bm{V}_n) \sim \pi_n$  and let $f:\mathbb{R}\times \mathbb{R} \to \mathbb{R}$ be smooth.
We now consider the generator $\mathcal{A}_n$ of BPS targeting $\pi_n$ and the generator $\mathcal{A}$ of RHMC targeting $\pi$, the marginal of the first location momentum pair under $\pi_n$, applied to the function $f$.
By inspecting $\mathcal{A}_n(f)$, $\mathcal{A}(f)$ we find that the terms corresponding to the deterministic flow of BPS and the refreshment events coincide exactly with corresponding terms in $\mathcal{A}(f)$. We therefore only have to consider the term corresponding to the ``bounce events'', that is
$$\max\left\{0, \left( \nabla U_n(\bm{X}), \bm{V}\right) \right\}
\left[ f\left(X_1, V_1 - 2\frac{\left( \nabla U_n(\bm{X}), \bm{V}\right)}{\left|\nabla U_n\left(\bm{X}\right)\right|^2} U'(X_1) \right) - f(X_1, V_1) \right],$$
and show that on average it is close to $-\left(\nabla_v f, \nabla U\right) = - U'(X_1) \partial_v f(X_1, V_1)$.

To see why this is true, after a Taylor expansion we can see that the bounce part of the BPS generator is close to
$$- 2\max\left\{0, \frac{\left( \nabla U_n(\bm{X}), \bm{V}\right)}{\left|\nabla U_n\left(\bm{X}\right)\right|} \right\}  \frac{\left( \nabla U_n(\bm{X}), \bm{V}\right)}{\left|\nabla U_n\left(\bm{X}\right)\right|} \partial_v f(X_1, V_1) U'(X_1).$$
Looking closer one can see that
$$\frac{\left( \nabla U_n(\bm{X}), \bm{V}\right)}{\left|\nabla U_n\left(\bm{X}\right)\right|}
= \frac{\sum_{i=1}^n U'(X_i) V_i}{\sqrt{\sum_{i=1}^n U'(X_i)^2}},$$
and since the $(V_i)_i$ are i.i.d.\ standard Gaussians it is easily seen that
$$\left.\frac{\sum_{i=1}^n U'(X_i) V_i}{\sqrt{\sum_{i=1}^n U'(X_i)^2}}\right| (X_i)_{i=1}^n \sim \mathcal{N}(0,1).$$
It now seems plausible that, letting $\xi \sim \mathcal{N}(0,1)$, we have
\begin{multline*}
\rE\left\{\left. \max\left\{0, \left( \nabla U_n(\bm{X}), \bm{V}\right) \right\}
\left[ f\left(X_1, V_1 - 2\frac{\left( \nabla U_n(\bm{X}), \bm{V}\right)}{\left|\nabla U_n\left(\bm{X}\right)\right|^2} U'(X_1) \right) - f(X_1, V_1) \right] \right| X_1, V_1 \right\}
\\\approx- 2\rE\left[\max\left\{0, \xi \right\}  \xi \right]\partial_v f(X_1, V_1) U'(X_1)
= -\partial_v f(X_1, V_1)  U'(X_1) = - ( \nabla_v f, \nabla U).
\end{multline*}
\subsubsection{Dimension-free Convergence Rates for RHMC}
We consider the RHMC process on the target
$$\pi(\bm{x},\bm{v})=\bar{\pi}(\bm{x}) \cdot \psi(\bm{v})  \propto \exp\{-U(\bm{x})-|\bm{v}|^2/2\},$$
 defined on  $\ZZ:=\mathbb{R}^d\times \mathbb{R}^d$ for $\bar{\pi}(\cdot)$ a strongly log-concave target distribution on $\mathbb{R}^d$  having a potential with bounded Hessian. This is a standard assumption adopted in  \cite{bou2018coupling,mangoubi2017rapid,dalalyan2017theoretical,dwivedi2018log,durmus2017nonasymptotic}.
\begin{assumption}\label{ass:hypoco}
Assume that  $U\in C^2(\mathbb{R}^d)$ and that for some $0<m<M$, and all $\bm{x}, \bm{v}\in \mathbb{R}^d$
\begin{equation}
m ( \bm{v}, \bm{v}) \leq ( \bm{v}, \nabla^2 U(\bm{x}) \bm{v}) \leq M ( \bm{v}, \bm{v}).
\end{equation}
\end{assumption}

The following proposition, whose proof is given in Appendix A,  shows that the expected number of bounces per unit time for BPS in stationary distribution is $O(\sqrt{d})$.
\begin{proposition}\label{proposition:BPSexpectednbofbounces}
Suppose that $\bar{\pi}(\bm{x})\propto \exp(-U(\bm{x}))$ is a probability density on $\mathbb{R}^d$.  Then the BPS process on $\ZZ$ targeting $\bar{\pi}\otimes\psi$ and   initialized 
at stationarity, has the following expected number of bounces per unit time: 
\[\Lambda_{b}:=\rE_{X\sim \pi, V\sim N(0,\mathbf{I}_d)}\l[ ( \grad U(\bm{X}),\bm{V})_+\r],\]
for any choice of refreshment rate $\lref$ and auto-regressive parameter $\alpha$. Moreover, if $\bar{\pi}$ satisfies Assumption \ref{ass:hypoco}, then we have
\begin{equation}
\frac{\sqrt{m (d-1/2)}}{\sqrt{2\pi}}\le \Lambda_{b}\le \frac{\sqrt{M d}}{\sqrt{2\pi}}.
\end{equation}
\end{proposition}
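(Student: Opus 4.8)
The plan is to compute the expected number of bounces per unit time directly from the generator. Bounces occur at the arrival times of an inhomogeneous Poisson process whose rate, along the BPS trajectory, is $\lambda_b(\XXX_t,\VVV_t) = \max\{0,\langle\nabla U(\XXX_t),\VVV_t\rangle\}$. Since the process is initialized in and preserves stationarity, the instantaneous bounce rate $\rE[\lambda_b(\XXX_t,\VVV_t)]$ is constant in $t$ and equals $\rE_{(X,V)\sim\pi}[\langle\nabla U(X),V\rangle_+]$. By Campbell's formula (or by integrating the instantaneous rate over $[0,1]$ and using Fubini, exactly as one does to count jumps of a Poisson process with a stationary stochastic intensity), the expected number of bounces in a unit time interval is therefore $\Lambda_b = \rE_{X\sim\pi,\,Z\sim N(0,\mathbb{I}_d)}[\langle\nabla U(X),Z\rangle_+]$, where I have used that under $\pi$ the velocity is $N(0,\mathbb{I}_d)$ and independent of $X$. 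Crucially, neither $\lref$ nor $\alpha$ enters: refreshments change the velocity but the refreshed velocity is again $N(0,\mathbb{I}_d)$-distributed and independent of position at stationarity, so they do not affect the time-averaged bounce intensity. I would make this step rigorous by noting that between any two refreshment times the position/velocity pair at a fixed time is still $\pi$-distributed (stationarity), so $\rE[\lambda_b(\XXX_t,\VVV_t)]$ is genuinely constant.

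Next I would evaluate $\Lambda_b$ by conditioning on $X$. For fixed $x$, $\langle\nabla U(x),Z\rangle \sim N(0,|\nabla U(x)|^2)$, and for a centered Gaussian $G$ with variance $\sigma^2$ one has $\rE[G_+] = \sigma/\sqrt{2\pi}$. Hence
\[
\Lambda_b = \frac{1}{\sqrt{2\pi}}\,\rE_{X\sim\pi}\bigl[\,|\nabla U(X)|\,\bigr].
\]
The remaining task is to bound $\rE_{X\sim\pi}[|\nabla U(X)|]$ from above and below under Assumption~\ref{ass:hypoco}. For the upper bound, Jensen gives $\rE[|\nabla U(X)|] \le \sqrt{\rE[|\nabla U(X)|^2]}$, and $\rE_{\pi}[|\nabla U(X)|^2]$ is a standard integrated-by-parts identity: $\int \pi\,|\nabla U|^2 = \int \pi\,\Delta U = \rE_\pi[\Delta U(X)] \le Md$ since $\Delta U = \Tr(\nabla^2 U) \le Md$ by the Hessian upper bound. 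For the lower bound I need a matching estimate on $\rE_\pi[|\nabla U(X)|^2] = \rE_\pi[\Delta U(X)]$ from below; the naive bound $\Delta U \ge md$ combined with Jensen in the wrong direction does not immediately give what is claimed (the stated lower bound has $d-1/2$, not $d$), so one must be more careful. The sharp route is to use a Poincaré/Brascamp--Lieb-type identity: strong log-concavity with constant $m$ gives $\var_\pi(f) \le \frac{1}{m}\rE_\pi[|\nabla f|^2]$ (or the Brascamp--Lieb inequality $\var_\pi(f)\le \rE_\pi[\nabla f^T(\nabla^2 U)^{-1}\nabla f]$). Applying this to $f=x_i$ and summing, $\sum_i \var_\pi(x_i) \le \frac{d}{m}$; combining with $\rE_\pi[\Delta U] = \rE_\pi[|\nabla U|^2]$ and the identity $\rE_\pi[|\nabla U|^2]\cdot(\text{something}) $ relating it to the trace of the covariance, one extracts a lower bound of the form $\rE_\pi[|\nabla U|^2] \ge m\,(d - 1/2)$ or similar; then apply Jensen's inequality in the convex direction only after a Cauchy--Schwarz rearrangement, or bound $\rE[|\nabla U|]$ below via $\rE[|\nabla U|] \ge \rE[|\nabla U|^2]/\rE[\,|\nabla U|^2\,]^{1/2}$ — more precisely using that $\rE[|\nabla U|] \ge \rE[|\nabla U|^2]/\sqrt{\rE[|\nabla U|^2]}$ is vacuous, so instead use the lower bound on $\rE[|\nabla U|]$ coming directly from a first-moment argument: write $|\nabla U(X)|^2 = \sum_i (\partial_i U)^2$ and use integration by parts coordinatewise together with the covariance bound to get $\rE_\pi[|\nabla U|^2] \ge m \,\rE_\pi[\,|\text{(centered } X)|^2\,]\cdot m$...

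I expect the main obstacle to be precisely this lower bound and the appearance of the $d - 1/2$ factor: getting from a bound on $\rE_\pi[|\nabla U(X)|^2]$ to a bound on $\rE_\pi[|\nabla U(X)|]$ loses a square root in the wrong direction, so one cannot simply take square roots. The cleanest fix I would pursue is to avoid Jensen entirely on the lower side: bound $\rE_\pi[|\nabla U(X)|]$ below by testing against a linear functional — e.g., $\rE_\pi[|\nabla U(X)|] \ge \rE_\pi[\langle \nabla U(X), u(X)\rangle]$ for a suitable unit vector field $u$, choosing $u(x) = (x - \rE_\pi X)/|x - \rE_\pi X|$ and integrating by parts, which produces $\rE_\pi[\langle \nabla U(X), (X-\rE X)/|X - \rE X|\rangle] = \rE_\pi[\operatorname{div}((X-\rE X)/|X-\rE X|)] = \rE_\pi[(d-1)/|X - \rE X|]$, and then a reverse-Jensen / Cauchy--Schwarz argument using the covariance bound $\rE_\pi|X - \rE X|^2 \le d/m$ gives $\rE_\pi[(d-1)/|X-\rE X|] \ge (d-1)\sqrt{m}/\sqrt{d} = \sqrt{m}\,(d-1)/\sqrt{d} \ge \sqrt{m(d-1/2)}$ for the relevant range — this is where the $d - 1/2$ comes from, and pinning down the constant carefully is the delicate part. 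The upper bound, by contrast, is routine: Jensen plus the integration-by-parts identity $\rE_\pi[|\nabla U|^2] = \rE_\pi[\Delta U] \le Md$ gives $\Lambda_b \le \sqrt{Md}/\sqrt{2\pi}$ immediately.
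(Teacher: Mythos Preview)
Your derivation of the formula $\Lambda_b=\frac{1}{\sqrt{2\pi}}\rE_\pi[|\nabla U(X)|]$ and your upper bound via Jensen and the identity $\rE_\pi[|\nabla U|^2]=\rE_\pi[\Delta U]\le Md$ are correct and coincide with the paper's argument.

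The lower bound, however, has a genuine gap. Your unit-vector-field integration by parts yields $\rE_\pi[|\nabla U(X)|]\ge (d-1)\,\rE_\pi[\,1/|X-\rE X|\,]$, and after Jensen plus $\rE_\pi|X-\rE X|^2\le d/m$ you obtain $(d-1)\sqrt{m/d}$. But the inequality you then assert, $(d-1)/\sqrt{d}\ge \sqrt{d-1/2}$, is \emph{false for every $d\ge 1$}: squaring, it would require $(d-1)^2\ge d(d-1/2)$, i.e.\ $1\ge 3d/2$. In particular your bound vanishes at $d=1$, whereas the proposition claims $\Lambda_b\ge \sqrt{m/2}/\sqrt{2\pi}>0$ there. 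The loss occurs in the Jensen step $\rE[1/R]\ge 1/\sqrt{\rE R^2}$, which is too crude; even though your identity is sharp in the Gaussian case (where $(d-1)\rE[1/|Z|]$ reproduces exactly the Gamma ratio below), you cannot recover that sharpness from a second-moment bound alone.

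The paper takes a different route for the lower bound: pass to polar coordinates about the minimiser of $U$, so that $\rE_\pi[|\nabla U|]$ becomes a ratio of spherical averages of one-dimensional integrals along rays. Along each ray $r\mapsto U(re)$, strong convexity gives $V(r)-mr^2/2$ convex with $V'(0)=0$, and an elementary comparison (integration by parts in $r$, then comparing $e^{-V}$ to a shifted Gaussian) shows
\[
\frac{\int_0^\infty V'(r)\,r^{d-1}e^{-V(r)}\,\rd r}{\int_0^\infty r^{d-1}e^{-V(r)}\,\rd r}\;\ge\;\sqrt{2m}\,\frac{\Gamma\!\l(\tfrac{d+1}{2}\r)}{\Gamma\!\l(\tfrac{d}{2}\r)}.
\]
A separate Stirling-type lemma then gives $\Gamma\!\l(\tfrac{d+1}{2}\r)/\Gamma\!\l(\tfrac{d}{2}\r)>\sqrt{(d-1/2)/2}$, and the $d-1/2$ in the statement is precisely this Gamma-ratio estimate. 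Your divergence-of-the-radial-field idea is morally related (both exploit radial structure), but the ray-by-ray comparison to a Gaussian is what produces the correct constant and, crucially, a nonzero bound at $d=1$.
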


\paragraph{Wasserstein distance.}
For $t\ge 0$, let $Z^{(1)}(t)=(X^{(1)}(t),V^{(1)}(t))$ denote a path of the RHMC process. We couple this with another path $Z^{(1)}(t)=(X^{(2)}(t), V^{(2)}(t))$ such that their refreshments happen simultaneously and the same multivariate normal random variables are used for updating their velocities. Therefore
the difference between the paths $Z^{(1)}(\cdot)$ and $Z^{(2)}(\cdot)$ stems only from the different initialisations.
Then the coupled process $\left(Z^{(1)}(t),Z^{(2)}(t)\right)$ is Markov and we write $L_{1,2}$ for the corresponding generator.
Notice that the $2\times 2$ real valued matrix
\begin{equation}
A:=\l(\begin{matrix}a &b\\ b &c\end{matrix}\r), \label{eq:matrixA}
\end{equation}
is positive definite, denoted $A\succeq 0$, if and only if $a>0$, $c>0$ and $b^2<ac$.  For such a matrix, let
\begin{align*}&d_A^2(Z_1(t),Z_2(t)):=\\
&a \|X^{(2)}(t)-X^{(1)}(t)\|^2 + 2 b \inner{X^{(2)}(t)-X^{(1)}(t)}{V^{(2)}(t)-V^{(1)}(t)}+c\|V^{(2)}(t)-V^{(1)}(t)\|^2
\end{align*}
denote a distance function called weighted distance. It is equivalent up to constant multiplicative factors to the standard Euclidean distance on $\mathbb{R}^{2d}$ and the standard Euclidean distance corresponds to the special case $a=1$, $b=0$, $c=1$. However, due to the effect of the generator $L_{1,2}$ on $d_A^2(Z_1(t),Z_2(t))$, it will never be a contraction when $b=0$, and thus weighting this distance is essential for obtaining convergence rates.
Note that for every $p\ge 1$, the $W_p$-Wasserstein distance of two distributions $\nu_1,\nu_2$ on $\mathbb{R}^{2d}$ is defined as $W_{p}(\nu_1,\nu_2)=(\inf_{X_1\sim \nu_1,X_2\sim \nu_2}\rE(|X_1-X_2|^p))^{1/p}$, where the infimum is taken over all couplings with marginals $\nu_1$ and $\nu_2$.

Our main result in this section is the following.
\begin{theorem}\label{thm:Wasserstein}
Suppose that $0\le \alpha<1$, Assumption~\ref{ass:hypoco} holds and let
$$\lref=\frac{1}{1-\alpha^2}\l(2\sqrt{M+m}-\frac{(1-\alpha) m}{\sqrt{M+m}}\r), \qquad \mu=\frac{(1+\alpha)m}{\sqrt{M+m}}-\frac{\alpha m^{3/2}}{2(M+m)}.$$
Then there exist constants $a$, $b$ and $c$ depending on $m$, $M$ and $\alpha$, stated explicitly in \eqref{eq:Wassabcdef}, such that the corresponding matrix $A$ is positive definite, and for any $t \geq 0$ we have

\begin{equation}\label{eq:Wassersteincontraction1}
L_{1,2} \ d_A^2(Z_1(t),Z_2(t))\le -\mu\cdot d_A^2(Z_1(t),Z_2(t)).
\end{equation}
This directly implies that for any initial distribution $\nu$ on $\mathbb{R}^{2n}$, for all $t\geq 0$, we have the following bounds on the 2-Wasserstein distance to the stationary distribution,
\begin{equation}\label{eqWass2conv}
W_2(P^t \nu, \pi)^2 \leq C_2 e^{-\mu t} W_2(\nu, \pi)^2,
\end{equation}
for $C_2=\frac{a+c+\sqrt{(a+c)^2-4(ac-b^2)}}{a+c-\sqrt{(a+c)^2-4(ac-b^2)}}$. Moreover, for every $f\in L_0^{2}(\pi)$, for all $t\geq 0$
\begin{equation}\label{eqL2bndWass}
\|P^t f \|^2 \leq \min(C e^{-\mu t},1) \|f\|^2,
\end{equation}
where $C=\frac{ac+b^2+2\sqrt{ac b^2}}{ac-b^2}$.
\end{theorem}
\begin{remark}
Due to the non-reversibility of RHMC, the convergence rates in Wasserstein distance do not directly imply bounds on the asymptotic variance for every function in $L^2(\pi)$,
but only for Lipschitz functions. The argument for extending this contraction rate to all of $L^2(\pi)$,
can be found in the second half of the proof of Theorem~\ref{thm:hypoco}. This is based on the fact that Lipschitz functions are dense in $L^2(\pi)$.
\end{remark}
\begin{remark}\label{rem:optimalpha}
These results seem to suggest that choosing $\alpha$ close to 1 increases the convergence rate $\mu$ approximately by a factor of 2, at the expense of a higher refreshment rate. Hence in practice some tradeoff needs to be made between additional computational cost and the increased convergence rate. By Proposition \ref{proposition:BPSexpectednbofbounces}, we know that the rate of bounces according to the stationary distribution is at least $\frac{\sqrt{m (d-1/2)}}{\sqrt{2\pi}}$, which will be significantly higher than the rate $\frac{1}{1-\alpha^2}\l(2\sqrt{M+m}-\frac{(1-\alpha) m}{\sqrt{M+m}}\r)$ in high dimensions, provided that  $\sqrt{\frac{M+m}{m} \cdot \frac{1}{d-1/2}} \cdot \frac{1}{1-\alpha^2}\ll 1$. The choice $\alpha=0.9$ is reasonable in most scenarios.
\end{remark}
\begin{remark}
We have been able to verify using Mathematica that if $M/m\ge 5$, and we choose $\lref \le \frac{1}{2} \cdot \frac{1}{1-\alpha^2}\l(2\sqrt{M+m}-\frac{(1-\alpha) m}{\sqrt{M+m}}\r)$ (half  the value recommended in Theorem \ref{thm:Wasserstein}), then the contraction \eqref{eq:Wassersteincontraction1} cannot hold for any choice of $a$, $b$ and $c$. In general, if we choose $\lref=\frac{r}{1-\alpha^2}\l(2\sqrt{M+m}-\frac{(1-\alpha) m}{\sqrt{M+m}}\r)$ for some $r>1$ (that is, $r$ times the refreshment rate recommended in Theorem \ref{thm:Wasserstein}), then it seems based on extensive experiments that the rate
$\mu=\frac{1}{r}\l(\frac{(1+\alpha)m}{\sqrt{M+m}}-\frac{\alpha m^{3/2}}{2(M+m)}\r)$ is attained (i.e.\ $\mu$ drops by a factor $r$); no values of $a$, $b$ and $c$ result in double the same rate. Obtaining a formula that describes sharp rates $\mu$ for a general choice of $\lref$ seems difficult with our method of proof, as the inequalities that need to be checked in this case depend on many variables, and the calculations become intractable. We include in the electronic supplementary material Mathematica code that checks, for given values of $m, M, \alpha, \lref, \mu$, whether there exist $a$, $b$ and $c$ such that \eqref{eqL2bndWass} holds, and returns a possible choice of these parameters if they exist.
\end{remark}

As we shall see in the next proposition, it is possible to obtain faster convergence rates, that is larger $\mu$, for Gaussian target distributions. For this result, we consider a \textit{weighted} distance of the form
\begin{equation}
d_D^2(Z_1(t),Z_2(t)):=\left\langle Z_2(t)-Z_1(t), D (Z_2(t)-Z_1(t))\right\rangle,
\end{equation}
where $D$ is a real valued $2d\times 2d$ positive definite matrix.
\begin{proposition}\label{prop:WassersteinGauss}
Suppose that $\bar{\pi}$ is Gaussian and its inverse covariance matrix $H$ satisfies $m I\preceq H\preceq MI$.
Let $$\lref=\frac{2\sqrt{m}}{1-\alpha}, \qquad \mu=\frac{\sqrt{m}}{3}.$$
Then there exists a $2d\times 2d$ real valued matrix $D$ such that for any $t \geq 0$ we have
\begin{equation}
L_{1,2} \ d_D^2(Z_1(t),Z_2(t))\le -\mu\cdot d_D^2(Z_1(t),Z_2(t)).
\end{equation}
Moreover, for every $f\in L_0^{2}(\pi)$, we have
\begin{equation}\label{eqL2bndGaussian}
\|P^t f \|^2 \leq \min(C e^{-\mu t},1) \|f\|^2,
\end{equation}
where $C=\frac{ac+b^2+2\sqrt{ac b^2}}{ac-b^2}$.
\end{proposition}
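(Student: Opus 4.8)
The plan is to exploit the fact that for a Gaussian target the Hamiltonian flow is linear, so the difference process $W(t):=Z^{(2)}(t)-Z^{(1)}(t)$ evolves \emph{linearly and autonomously} between refreshments and is reset in a controlled, linear way at refreshment times. Concretely, write $W=(W_x,W_v)$ with $W_x=X^{(2)}-X^{(1)}$, $W_v=V^{(2)}-V^{(1)}$. Since $\overline\pi$ is Gaussian with inverse covariance $H$, we have $\nabla U(x)=Hx$ (up to a constant which cancels in the difference), so along the Hamiltonian portion of the dynamics $\dot W_x=W_v$, $\dot W_v=-HW_x$; that is, $\dot W = J W$ with $J=\begin{pmatrix}0 & I\\ -H & 0\end{pmatrix}$. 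Because the two copies share the same refreshment times and the same Gaussian increments $\xi$, at a refreshment the update is $V^{(i)}\mapsto \alpha V^{(i)}+\sqrt{1-\alpha^2}\,\xi$, which in the difference is $W_v\mapsto \alpha W_v$ and $W_x$ unchanged. Hence the generator acts on $d_D^2(W)=W^\top D W$ by $L_{1,2}\,d_D^2 = W^\top(J^\top D + D J)W + \lref\, W^\top\!\big(K^\top D K - D\big)W$, where $K=\mathrm{diag}(I,\alpha I)$ is the block-diagonal contraction on the velocity component. So the whole problem reduces to choosing a positive definite $D$ with $J^\top D + DJ + \lref(K^\top D K - D) \preceq -\mu D$.

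The next step is to reduce this $2d\times 2d$ linear-algebra problem to a $2\times 2$ one by spectral decomposition of $H$. Diagonalise $H=O^\top \Lambda O$ with $\Lambda=\mathrm{diag}(\lambda_1,\dots,\lambda_d)$, $m\le\lambda_i\le M$. Looking for $D$ of the form $D = (O\oplus O)^\top \big(\bigoplus_i D_{\lambda_i}\big)(O\oplus O)$ with each $D_{\lambda}=\begin{pmatrix}a(\lambda)&b(\lambda)\\ b(\lambda)&c(\lambda)\end{pmatrix}$ decouples the matrix inequality into the family of $2\times2$ inequalities
\[
\begin{pmatrix}0&-\lambda\\ 1&0\end{pmatrix}D_\lambda + D_\lambda\begin{pmatrix}0&1\\ -\lambda&0\end{pmatrix} + \lref\begin{pmatrix}0&0\\0&(\alpha^2-1)c(\lambda)\end{pmatrix} \preceq -\mu D_\lambda,\qquad \lambda\in[m,M].
\]
I would then make an explicit ansatz for $(a(\lambda),b(\lambda),c(\lambda))$ — e.g. something like $a(\lambda)=\lambda$, $b(\lambda)=\beta\sqrt{\lambda}$ or $b(\lambda)=\beta$, $c(\lambda)$ constant, with $\beta$ a small positive constant to be tuned — plug in $\lref=2\sqrt m/(1-\alpha)$ and $\mu=\sqrt m/3$, and verify that the resulting $2\times2$ symmetric matrix is negative semidefinite (two scalar conditions: negative trace and nonnegative determinant) uniformly over $\lambda\in[m,M]$, as well as $D_\lambda\succ0$. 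This is the step I expect to be the real obstacle: finding the ansatz that (i) keeps $D_\lambda$ positive definite with a $\lambda$-dependence compatible with reading off the constant $C$, and (ii) makes the trace/determinant inequalities hold with these clean prescribed values of $\lref$ and $\mu$ without $M$ appearing in $\lref$. The factor $1/3$ and the particular $\lref$ strongly suggest a specific bound was reverse-engineered, so the work is in reconstructing that algebra and checking the worst case (likely $\lambda=m$) is handled.

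Finally, from $L_{1,2}\,d_D^2(Z_1(t),Z_2(t))\le -\mu\, d_D^2(Z_1(t),Z_2(t))$ one obtains $\rE\, d_D^2(Z_1(t),Z_2(t))\le e^{-\mu t}\, d_D^2(Z_1(0),Z_2(0))$ by Dynkin/Gr\"onwall, exactly as in Theorem~\ref{thm:Wasserstein}. The $L^2(\pi)$ bound \eqref{eqL2bndGaussian} then follows by the same argument used for \eqref{eqL2bndWass}: for Lipschitz $f$, $|P^tf(z_1)-P^tf(z_2)|\le \|f\|_{\mathrm{Lip}}\,\rE|Z_1(t)-Z_2(t)|$, convert between the Euclidean and $d_D$ norms using the eigenvalues of $D$ (which yields the constant $C=\frac{ac+b^2+2\sqrt{acb^2}}{ac-b^2}$ from the $2\times2$ blocks, taking the extremal $\lambda$), combine with the Brascamp--Lieb / spectral-gap control on $\|f\|$ vs.\ $\|f\|_{\mathrm{Lip}}$, then extend from Lipschitz $f$ to all of $L_0^2(\pi)$ by density together with the uniform bound $\|P^tf\|\le\|f\|$; the $\min(\cdot,1)$ is just the trivial contractivity of $P^t$ on $L^2_0(\pi)$. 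I would remark that the improvement over Theorem~\ref{thm:Wasserstein} comes precisely from the linearity of the Gaussian flow, which lets $D$ be tuned per eigenvalue rather than through a single matrix $A$ that must work for all curvatures simultaneously.
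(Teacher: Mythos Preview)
Your setup for the contraction is essentially the paper's approach. The paper takes $D=\begin{pmatrix} aH & bI\\ bI & cI\end{pmatrix}$ with constant scalars $a=1$, $b=\tfrac14$, $c=1$; after diagonalising $H$ this is exactly your ansatz $a(\lambda)=\lambda$, $b(\lambda)=\beta$, $c(\lambda)=c$ (the second option you list). Rather than checking the $2\times2$ inequality eigenvalue by eigenvalue, the paper packages the computation into the same $\Tr(VX+WP)$ trick used in Theorem~\ref{thm:Wasserstein} and reduces to verifying $V+mW\succeq 0$ and $V+MW\succeq 0$. The two routes are equivalent; your diagonalisation makes clearer why one can afford $\lref$ independent of $M$ (the top-left entry $a\lambda$ absorbs the curvature), while the paper's trace formulation makes the parallel with Theorem~\ref{thm:Wasserstein} more transparent.

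There is, however, a genuine gap in your derivation of the $L^2$ bound. What you sketch --- bound $\|P^t f\|_{\mathrm{Lip}}$ by the Wasserstein contraction, then invoke Brascamp--Lieb and density --- yields only $\|P^t f\|^2 \le C' e^{-\mu t}\|f\|_{\mathrm{Lip}}^2$, not $\|P^t f\|^2 \le C e^{-\mu t}\|f\|^2$. Since $\|f\|_{\mathrm{Lip}}$ is not controlled by $\|f\|$, density does not close this. The argument the paper actually uses (for both \eqref{eqL2bndWass} and \eqref{eqL2bndGaussian}) is different: one runs the coupling for the \emph{adjoint} dynamics, which amounts to replacing $v$ by $-v$, and obtains the same contraction rate $\mu$ but with respect to the flipped metric $d_{D'}$ where $D'=\begin{pmatrix} aH & -bI\\ -bI & cI\end{pmatrix}$. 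The equivalence constant between $d_D$ and $d_{D'}$ is the $C$ in the statement, and combining forward and adjoint contractions gives a Lipschitz contraction for the \emph{self-adjoint} kernel $(P^t)^*P^t$. One then invokes Ollivier's result that Wasserstein curvature bounds imply spectral-radius bounds for reversible kernels, yielding $\|P^t f\|^2=\langle f,(P^t)^*P^t f\rangle\le Ce^{-\mu t}\|f\|^2$ for all $f\in L_0^2(\pi)$. The adjoint step is what you are missing; without it, the non-reversibility of RHMC blocks the passage from Lipschitz to $L^2$.
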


\paragraph{Hypocoercivity.}
Our next convergence result is based on the hypocoercivity approach; see, e.g., \cite{monmarche2014hypocoercive,nier2005hypoelliptic,dric2009hypocoercivity,dolbeault2015hypocoercivity,roussel2017spectral}.
Our result will be stated in terms of the modified Sobolev norm $\lann h , h\rann^{1/2}$, where
\begin{equation}\label{eq:newnorm}
\lann h, h\rann:=a\|\nabla_v h\|^2 - 2b \, \langle \nabla_x h, \nabla_v h \rangle + c\|\nabla_x h\|^2,
\end{equation}
which again for $a,c>0$ and $b^2<ac$ defines a norm equivalent to the $H^1$ norm.
In particular following the calculations in \cite{dric2009hypocoercivity}, by Young's inequality we get
\begin{align*}
\left( 1+ \frac{|b|}{\sqrt{ac}}\right) \left[ a\|\nabla_v h\|^2+c\|\nabla_x h\|^2\right]\ge \lann h, h\rann
&\geq \left( 1- \frac{|b|}{\sqrt{ac}}\right) \left[ a\|\nabla_v h\|^2+c\|\nabla_x h\|^2\right].
\end{align*}
By the Efron-Stein-Steele inequality (\cite{steele1986efron}) and the fact that $\pi(x,v)=\overline{\pi}(x)\psi(v)$ is the product of two independent distributions, we have
\begin{align*}
\|h\|^2=\var_{\pi}(h)\le \var_{\psi}(\rE_{\overline{\pi}}(h))+\var_{\overline{\pi}}(\rE_{\psi}(h)),
\end{align*}
for any $h \in L_0^2(\pi)$.
Now by using the Poincar\'e inequality (\cite{BrascampLieb}) and the strong log-concavity of the distributions $\overline{\pi}$ and $\psi$, it is not difficult to show that
\[a\|\nabla_v h\|^2+c\|\nabla_x h\|^2\ge a\cdot 1 \cdot \var_{\psi}(\rE_{\overline{\pi}}(h))+c\cdot m\cdot  \var_{\overline{\pi}}(\rE_{\psi}(h))\ge \min(a, cm) \|h\|^2.
\]
Therefore convergence in the $\lann \cdot, \cdot \rann$ norm implies convergence in $L_0^2(\pi)$.

%

\begin{theorem}\label{thm:hypoco}
Suppose that Assumption~\ref{ass:hypoco} holds and let $\alpha \in [0, 1)$ and
$$\lref=\frac{1}{1-\alpha^2}\l(2\sqrt{M+m}-\frac{(1-\alpha) m}{\sqrt{M+m}}\r), \qquad
\mu=\frac{(1+\alpha)m}{\sqrt{M+m}}-\frac{\alpha m^{3/2}}{2(M+m)}.$$
Then there exist constants $a,b,c$ depending on $m$, $M$ and $\alpha$ such that $a>0, c>0, b^2<a c$, and for every  $f \in \mathcal{D}({B})\subset H^1(\pi)\subset L_0^2(\pi)$, with $B$, $\mathcal{D}({B})$ as defined in \eqref{eq:definitionofB},
\begin{equation}\frac{\rd }{\rd t} \lann P^t f, P^t f\rann\le -\mu \lann P^t f, P^t f\rann.
\label{eq:hypoco_rate}
\end{equation}
Moreover,  for every $f\in L_0^{2}(\pi)$ and $t\ge 0$, we have
\begin{equation}
\|P^t f \|^2 \leq \min(C e^{-\mu t},1) \|f\|^2,
\end{equation}
where $C=\frac{ac+b^2+2\sqrt{ac b^2}}{ac-b^2}$.
\end{theorem}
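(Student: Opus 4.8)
The plan is to differentiate the modified Sobolev functional $\lann P^t f, P^t f\rann$ along the semigroup and show it satisfies a Grönwall-type inequality. Writing $h = P^t f$, we have $\partial_t h = \mathcal{A} h$, so $\tfrac{\rd}{\rd t}\lann h,h\rann$ expands, by the bilinearity of the form \eqref{eq:newnorm}, into terms of the shape $a\langle \nabla_v \mathcal{A}h, \nabla_v h\rangle$, $b\big(\langle \nabla_x \mathcal{A}h, \nabla_v h\rangle + \langle \nabla_x h, \nabla_v \mathcal{A}h\rangle\big)$, and $c\langle \nabla_x \mathcal{A}h, \nabla_x h\rangle$. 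The first step is therefore to compute the commutators of the gradient operators $\nabla_x$ and $\nabla_v$ with the generator $\mathcal{A} = v\cdot\nabla_x - \nabla U\cdot\nabla_v + \lref(Q_\alpha - I)$. For the transport part $T = v\cdot\nabla_x - \nabla U\cdot\nabla_v$ one gets $[\nabla_x, T] = -\nabla^2 U\,\nabla_v$ (picking up the Hessian of $U$, where Assumption~\ref{ass:hypoco} enters) and $[\nabla_v, T] = \nabla_x$; for the refreshment part $Q_\alpha$, since it only acts on the $v$-variable and is a Gaussian (Ornstein--Uhlenbeck-type) convolution, one has $\nabla_x Q_\alpha = Q_\alpha \nabla_x$ and $\nabla_v Q_\alpha f = \alpha\, (Q_\alpha \nabla_v f)$, which contributes the crucial dissipation in the $v$-derivative directions.

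The second step is to assemble these pieces into a quadratic form in the three scalar quantities $\|\nabla_x h\|^2$, $\|\nabla_v h\|^2$, $\langle\nabla_x h,\nabla_v h\rangle$ (together with a cross term $\langle \nabla^2 U \nabla_v h, \nabla_v h\rangle$ that is controlled above and below by $m\|\nabla_v h\|^2$ and $M\|\nabla_v h\|^2$ via Assumption~\ref{ass:hypoco}, and a term $\langle \nabla^2 U \nabla_v h,\nabla_x h\rangle$ bounded by Cauchy--Schwarz). Using that the transport operator $T$ is antisymmetric in $L^2(\pi)$ (so $\langle Th, h\rangle = 0$ and more usefully the "integration by parts" identities for $\langle \nabla_x T h,\nabla_x h\rangle$ etc.), the $T$-contribution to $\tfrac{\rd}{\rd t}\lann h,h\rann$ reduces, after cancellations, to the single indefinite term $2(a-c)\langle \nabla_x h,\nabla_v h\rangle - 2b\|\nabla_x h\|^2 - 2b\langle \nabla^2 U\nabla_v h,\nabla_v h\rangle + \ldots$, while the refreshment contribution is $-2\lref(1-\alpha)\|\nabla_v h\|^2$-type (strictly dissipative in $v$) plus the contribution to the cross and $x$-terms. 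Collecting everything, the claim \eqref{eq:hypoco_rate} becomes the statement that the symmetric matrix $S$ representing
$$\tfrac{\rd}{\rd t}\lann h,h\rann + \mu\lann h,h\rann$$
as a quadratic form in $(\|\nabla_x h\|,\|\nabla_v h\|,\langle\nabla_x h,\nabla_v h\rangle/\!\ldots)$ is negative semidefinite for the prescribed $\lref$ and $\mu$; this is where one chooses $a,b,c$. The natural choice is to take $b<0$ small (so that $\lann\cdot,\cdot\rann$ stays equivalent to $\|\cdot\|_{H^1}$, i.e.\ $b^2<ac$), $a$ and $c$ normalized conveniently, and then verify the $2\times 2$ (or $3\times 3$) determinant conditions by direct substitution of the stated values of $\lref$ and $\mu$. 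I expect the precise algebraic verification that the given $\lref,\mu$ make this matrix negative semidefinite — i.e.\ the optimization over $a,b,c$ — to be the main obstacle and the most delicate computation; the structure is dictated by balancing the Hessian bound $m\le U''\le M$ against the refreshment-induced damping, and the appearance of $\sqrt{M+m}$ in $\lref$ strongly suggests that the extremal choice makes two of the principal minors vanish simultaneously.

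The third and final step is to pass from the differential inequality to the $L^2$ bound. Grönwall applied to \eqref{eq:hypoco_rate} gives $\lann P^t f, P^t f\rann \le e^{-\mu t}\lann f,f\rann$ for $f\in\mathcal{D}(B)$. Then the two inequalities recorded in the excerpt just before the theorem — the equivalence $\lann h,h\rann \ge (1-|b|/\sqrt{ac})[a\|\nabla_v h\|^2 + c\|\nabla_x h\|^2]$ together with the Brascamp--Lieb/Efron--Stein bound $a\|\nabla_v h\|^2 + c\|\nabla_x h\|^2 \ge \min(a,cm)\|h\|^2$ — give a bound of the form $\|P^t f\|^2 \le C e^{-\mu t}\|f\|^2$ once one also bounds $\lann f,f\rann$ above by $(1+|b|/\sqrt{ac})[a\|\nabla_v f\|^2 + c\|\nabla_x f\|^2]$ and, ultimately, by $C\|f\|^2$; tracking the constants yields exactly $C=\frac{ac+b^2+2\sqrt{acb^2}}{ac-b^2}$. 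To extend from the core $\mathcal{D}(B)$ to all of $L_0^2(\pi)$ one uses that $\mathcal{D}(B)$ is dense and $P^t$ is a contraction on $L^2(\pi)$ (it is $\pi$-invariant), so the inequality $\|P^t f\|^2\le \min(Ce^{-\mu t},1)\|f\|^2$ passes to the closure; the $\min$ with $1$ is just the trivial contraction bound, used for small $t$ where $Ce^{-\mu t}>1$. The Feller property guaranteed by $\alpha\in(0,1)$ and Assumptions~\ref{ass:feller},~\ref{ass:weakconv} is what makes the semigroup calculus and the density argument rigorous.
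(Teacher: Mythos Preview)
Your first two steps—the commutator computations and the assembly into a differential inequality—are essentially the paper's approach, though the paper organises the non-negativity check as a trace inequality $\Tr(VX+WP+ZQ)\ge 0$ for three explicit $2\times 2$ matrices (Lemma~\ref{lem:VXWPZQ}) rather than via Cauchy--Schwarz on the Hessian cross-term; note also that the refreshment operator $S=I-Q_\alpha$ generates its own family of terms $\langle S\nabla_v h,\nabla_v h\rangle$, $\langle S\nabla_x h,\nabla_x h\rangle$, $\langle S\nabla_v h,\nabla_x h\rangle$, not just the $\|\nabla_v h\|^2$-damping you mention, and these must be tracked separately (they form the matrix $Q$ in the trace inequality).

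The real gap is in your third step. The chain
\[
\|P^t f\|^2 \;\lesssim\; \lann P^t f, P^t f\rann \;\le\; e^{-\mu t}\lann f, f\rann
\]
is fine, but you then propose to close it by bounding $\lann f, f\rann$ above by $C\|f\|^2$. That inequality is false: $\lann f, f\rann$ is an $H^1$-type quantity involving $\|\nabla_x f\|^2+\|\nabla_v f\|^2$, and there is no reverse Poincar\'e inequality controlling gradients by the $L^2$ norm. So your argument as written only yields a bound with a constant depending on $\|\nabla f\|$ (equivalently, the Lipschitz constant of $f$), not the uniform bound $\|P^t f\|^2 \le Ce^{-\mu t}\|f\|^2$ for all $f\in L^2_0(\pi)$. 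A naive density argument does not help, since the constant blows up along the approximating sequence.

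The paper closes this gap by a genuinely different mechanism. It repeats the hypocoercivity computation for the \emph{adjoint} semigroup $(P^t)^*$, obtaining contraction at the same rate $\mu$ but in the twin norm $\lann\cdot,\cdot\rann'$ (with $b$ replaced by $-b$); the two norms are equivalent with ratio exactly $C=\tfrac{ac+b^2+2\sqrt{acb^2}}{ac-b^2}$. For $k$-Lipschitz $f$ this gives $\|(P^t)^*P^t f\|^2 \lesssim k^2 e^{-2\mu t}$. One then fixes $t$ large, sets $Q:=(P^t)^*P^t$ (self-adjoint), and runs a spectral-measure argument in the style of \citet{hairer2014spectral}: from $\|Q^{n+m}f\|\le C(f)(1-\kappa)^{n+m}$ and Jensen on the spectral measure $\nu_f$ one extracts $\|Q^n f\|^2 \le \|f\|^2(1-\kappa)^{2n}$, with the Lipschitz-dependent constant $C(f)$ disappearing in the limit $m\to\infty$. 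Density of Lipschitz functions in $L^2_0(\pi)$ then gives the uniform bound. In short, the passage from $H^1$-decay to $L^2$-decay is not a direct norm comparison but an indirect spectral argument that requires running the hypocoercivity estimate twice—once for $P^t$ and once for its adjoint.
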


\begin{remark}
Although \eqref{eq:hypoco_rate} only implies variance bounds for functions in $H^1$, we are able to extend this to functions in $L^2(\pi)$ in the second half of the proof of Theorem ~\ref{thm:hypoco}, given in Section~\ref{sec:h1tol2}. As our rates are the same as in Theorem \ref{thm:Wasserstein}, the optimal choice of $\alpha$ can be done as discussed in Remark \ref{rem:optimalpha}.

Since the first-coordinate process of BPS converges to RHMC, whose mixing we established above, in the natural time-scale
the computational cost of running BPS for one time unit serves as a proxy for its algorithmic complexity. This cost is proportional to the number of total events per time unit, including bounces and refreshments. Proposition \ref{proposition:BPSexpectednbofbounces} shows that the expected number of bounces per unit time under Assumption \ref{ass:hypoco} is at least $\frac{\sqrt{m(d-1/2)}}{2\sqrt{\pi}}$, which is much larger than the expected number of refreshments ($\lref$) if the refreshment rate is chosen as recommended by Theorems \ref{thm:Wasserstein} and \ref{thm:hypoco} (as long as $M/m\ll d$ and $\alpha$ is not too close to 1).  Therefore in these cases it is justified to choose $\lref$ in order to maximize the contraction rate $\mu$ of the limiting RHMC process.

Since each bounce has a computational cost of order $O(1)$ in terms of gradient evaluations, our results suggests that BPS scales like $O(d^{1/2})$ in gradient evaluations under our assumptions.  This is the scaling observed in the simulations presented in the next section.
\end{remark}
\begin{remark}\label{rem:RHMCcomparison}
We state here the rates for RHMC obtained by \cite{andrieu2018hypercoercivity} and \cite{lu2020explicit} under the same set of assumptions on the potential, i.e. $m I_d \preceq \grad^2 U(\mathbf{x})\preceq M I_d$. Both papers show $L^2$ bounds of the form
\[\|P^t f \| \leq C e^{-\mu t} \|f\| \text{ for every }f\in L^2_0(\pi).\]
The convergence rate $\mu$ in \cite{andrieu2018hypercoercivity} in this setting is shown to satisfy the inequality $\alpha(\epsilon_0)\le \mu\le 3\alpha(\epsilon_0)$. After some calculations with Mathematica, we were able to show that
\[ \frac{m^2}{30}\le \alpha(\epsilon_0)\le \frac{m^2}{5} \text{ for }0<m<1, \text{ and } 0.03\le \alpha(\epsilon_0)\le 0.11 \text{ for }m>1,\]
when the optimal choice of refreshment rate is chosen as
\[\lref^{\mathrm{opt}}=\frac{8-2\sqrt{2}+4\sqrt{3}}{\sqrt{2}}\approx 8.5583.\]
Assuming $\alpha=0$ (no autoregressive part in the velocity refreshments), our results yield
$$\mu=\frac{m}{\sqrt{M+m}}\text{ for the choice }\lref=2\sqrt{M+m}-\frac{m}{\sqrt{M+m}}, $$
We can see that for large values of $M/m$, the convergence rate of \cite{andrieu2018hypercoercivity} is sharper, while for smaller values, our rates are sharper. We note that the conditions in \cite{andrieu2018hypercoercivity} are quite general, and only require a Poincar\'e inequality, hence they are applicable even without strong convexity.
\cite{lu2020explicit} shows that for RHMC, the convergence rate is $\mu=\Theta(\frac{m\lref}{(\sqrt(m)+\lref)^2})$, which is maximized when $\lref=\Theta(\sqrt{m})$, yielding $\mu=\Theta(\sqrt{m})$. The dependence of these results on the parameters $m, M$ improves upon \cite{andrieu2018hypercoercivity} and our paper, but the constant of proportionality is not known.

In the case of BPS, both \cite{andrieu2018hypercoercivity} and \cite{lu2020explicit} shows rates of the form $\mu=\Theta(\sqrt{d})$. The dependence on the  parameters $m$ and $M$ is sharper in \cite{lu2020explicit} compared to \cite{andrieu2018hypercoercivity}, but the constant of proportionality is unknown. In contrast with these results, our high dimensional limit argument (Theorem \ref{thm:weakconv}) shows that for functions that only depend on a single coordinate (or on a fixed number of coordinates), in high dimensions, the convergence occurs according to a dimension independent rate $\mu$ as long as we choose the refreshment rate appropriately, at $\lref=\Theta(1)$. This is useful in particular for situations where we are interested in estimating the posterior mean.
\end{remark}

\subsection{Empirical results for different functions}\label{sec:empirical}
In this section, we show some simulation results about the computational cost of the BPS for a $d$ dimensional standard normal target, and seven different test functions defined as follows,
\begin{align*}
f_1(x)&=x_1 \quad (\text{first coordinate}),\\
f_2(x)&=\sum_{i=1}^{d}x_i \quad (\text{sum of all coordinates}),\\
f_3(x)&=\sum_{i=1}^{d-1}\sin(x_i+x_{i+1}) \quad (\text{a sum of sines depending on two component each}),\\
f_4(x)&= |x| \quad (\text{radius}),\\
f_5(x)&=\frac{|x|^2}{2}=\sum_{i=1}^{d}\frac{x_i^2}{2} \quad (\text{log-density}),\\
f_6(x)&=x_1^2 \quad (\text{square of first coordinate}),\\
f_7(x)&=x_1 x_2 \quad (\text{product of first and second coordinates}).
\end{align*}
In order to estimate the effective sample sizes, we have run 100 parallel BPS simulations with $10^6$ events per simulation,  starting from the Gaussian target distribution. The autoregressive parameter $\alpha$ was set as $\alpha=0$.
Figure \ref{fig:empirical} shows the number of events required for one  effective sample for dimensions $d=10$, $100$, $1000$ and $10000$ for these 7 functions, with refreshment parameter choices $\lref=1$ (as suggested by Theorems \ref{thm:Wasserstein} and \ref{thm:hypoco}) and $\lref=\sqrt{d}$ (as suggested by \cite{bierkens2018high} and Table 1  of \cite{andrieu2018hypercoercivity}). The number of events is a correct proxy for the computational cost as each event requires one gradient evaluation (see Section 2.3 of \cite{bouncy2018} for the description of the implementation of BPS for Gaussian targets). As we can see, if the refreshment rate is chosen as $\lref=1$, these simulation results show $O(\sqrt{d})$ scaling in the number of events required for an effective sample for all of the functions except the radius and the log-density ($f_4$ and $f_5$). In contrast, the choice $\lref=\sqrt{d}$ seems to require significantly more events per effective sample, with $O(d)$ scaling observed empirically. In the cases of the radius and the log-density, the choice $\lref=\sqrt{d}$ still seems to  require $O(d)$ events per effective sample, while $\lref=1$ is doing worse, approximately $O(d^{4/3})$ events per effective sample is required. The scaling limits for this function were studied in \cite{bierkens2018high}, who has recommended choosing $\lref=O(\sqrt{d})$ to obtain the best mixing for the log-density, consistently with our empirical results.

To sum up, we can see that if the goal of the simulation is to estimate the posterior mean or posterior covariance matrix, or other quantities only depending a small subset of the coordinates, then choosing $\lref$ as recommended by Theorems \ref{thm:Wasserstein} and \ref{thm:hypoco}  yield good empirical performance ($O(\sqrt{d})$ scaling in the number of events required for an effective sample). For functions depending on all of the coordinates the situation is more complicated, and the best choice of $\lref$ is strongly function dependent in this case.

	\begin{figure}
		\begin{subfigure}{.5\textwidth}
			\centering
			\includegraphics[width=.8\linewidth]{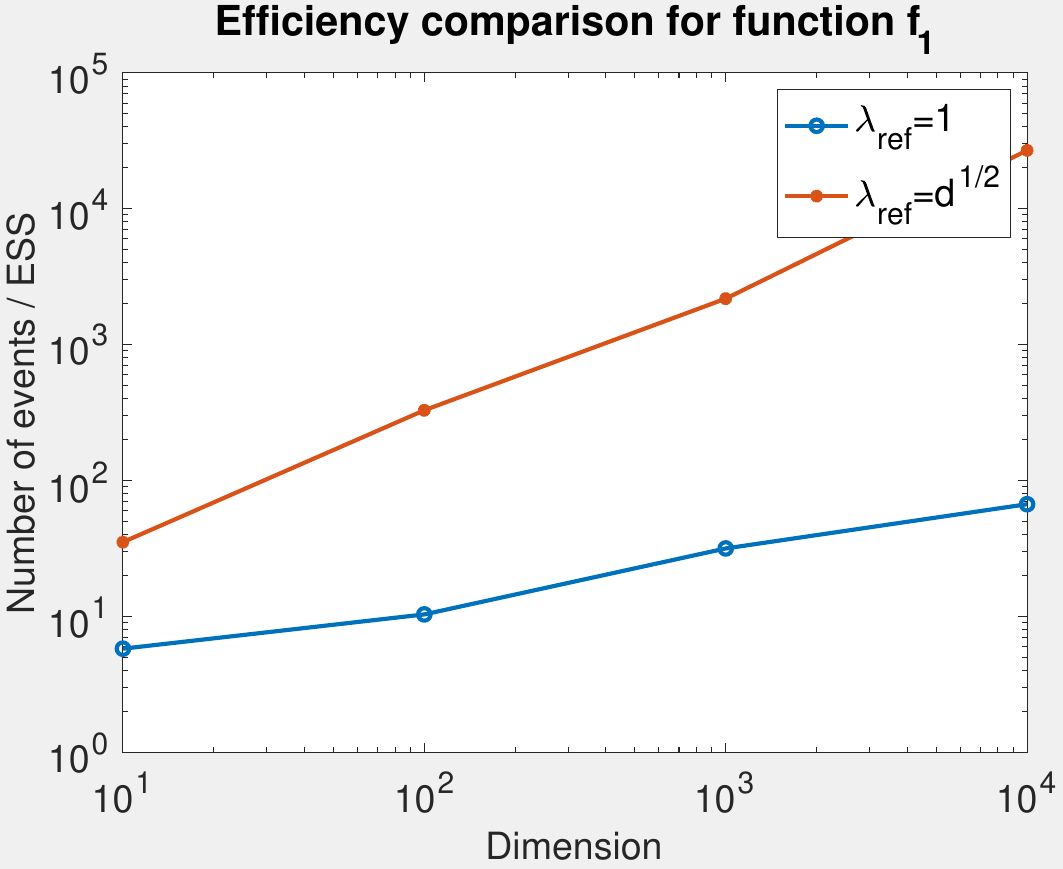}
		\end{subfigure}%
		\begin{subfigure}{.5\textwidth}
			\centering
			\includegraphics[width=.8\linewidth]{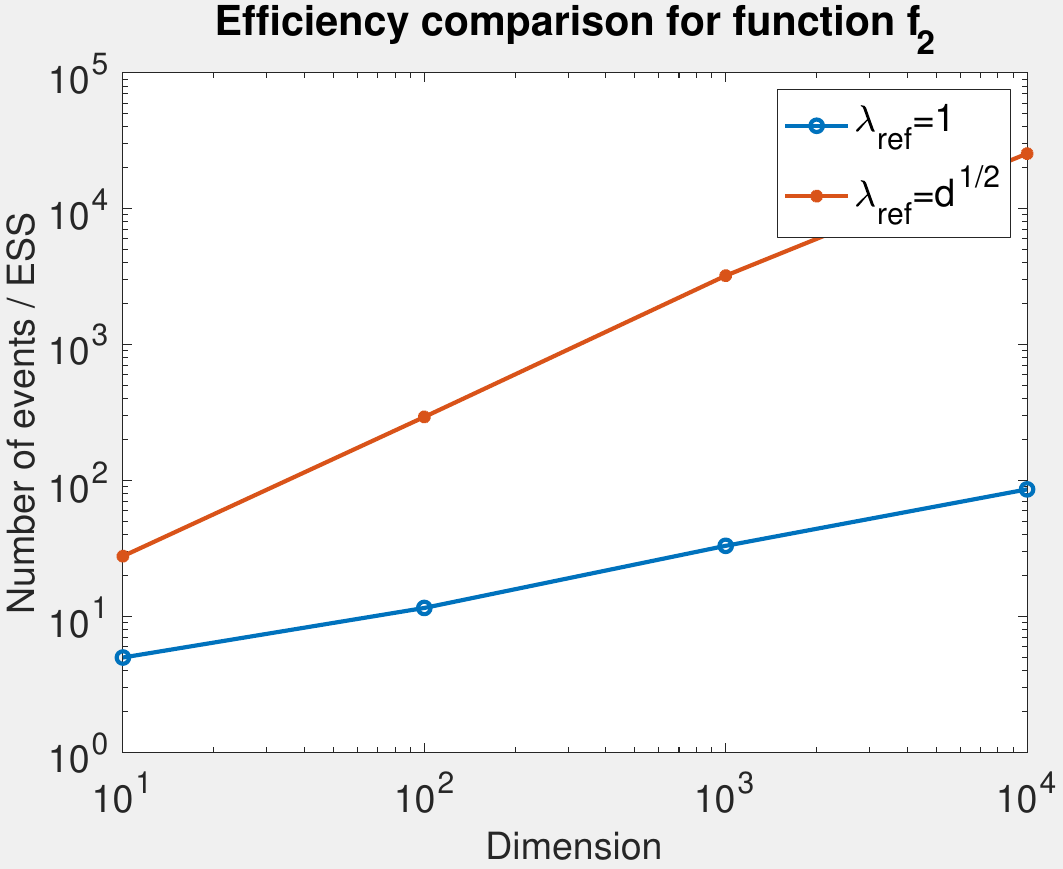}
		\end{subfigure}\\

		\begin{subfigure}{.5\textwidth}
		\centering
	\includegraphics[width=.8\linewidth]{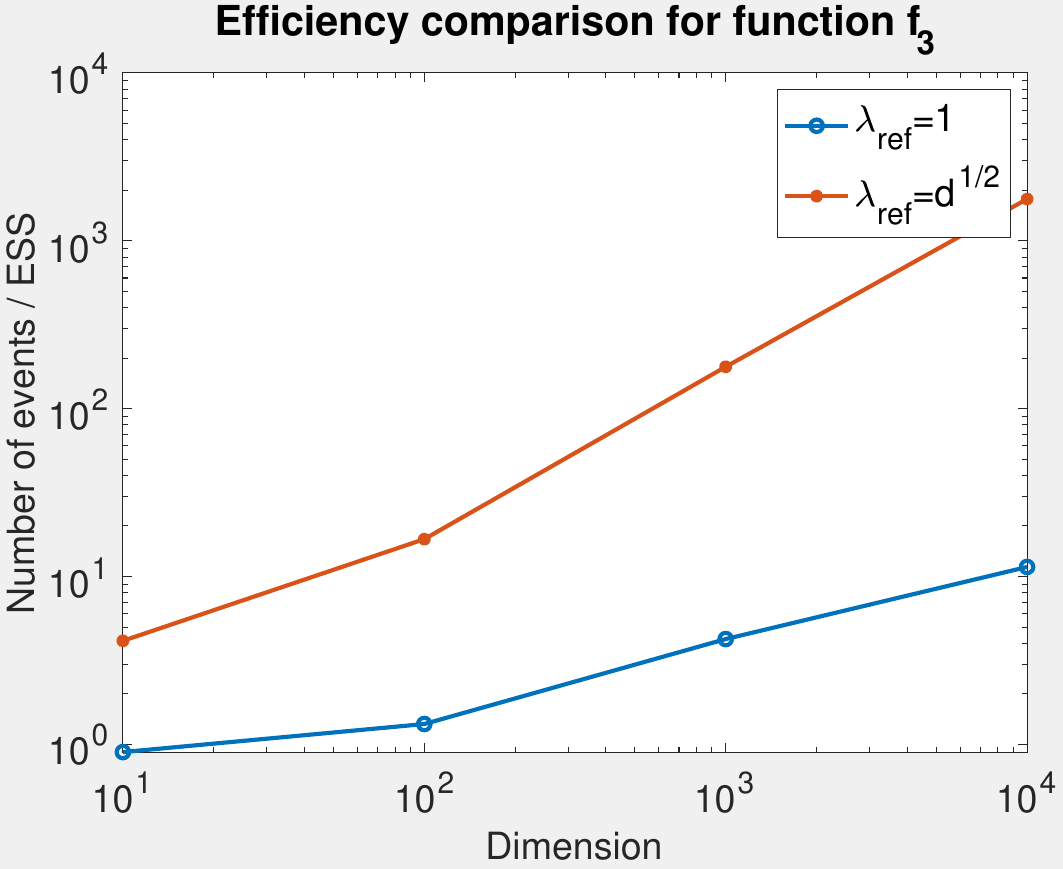}
\end{subfigure}%
\begin{subfigure}{.5\textwidth}
	\centering
	\includegraphics[width=.8\linewidth]{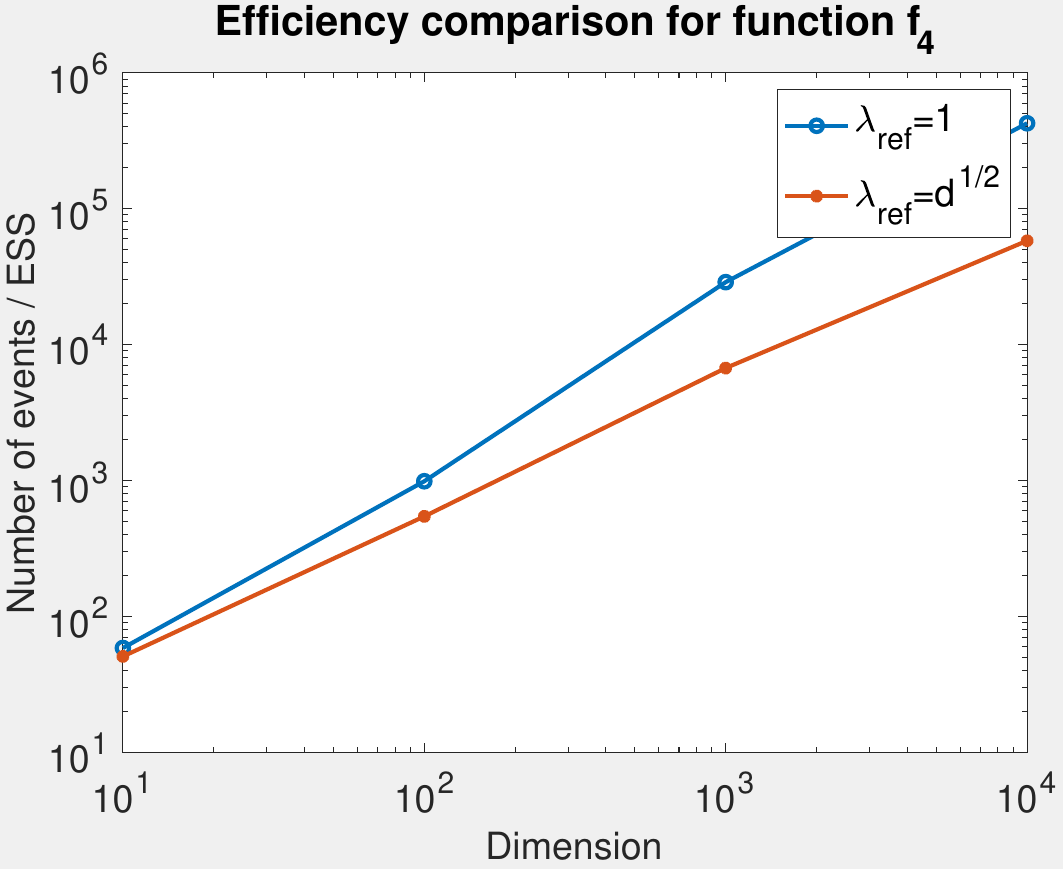}
\end{subfigure}\\

		\begin{subfigure}{.5\textwidth}
	\centering
	\includegraphics[width=.8\linewidth]{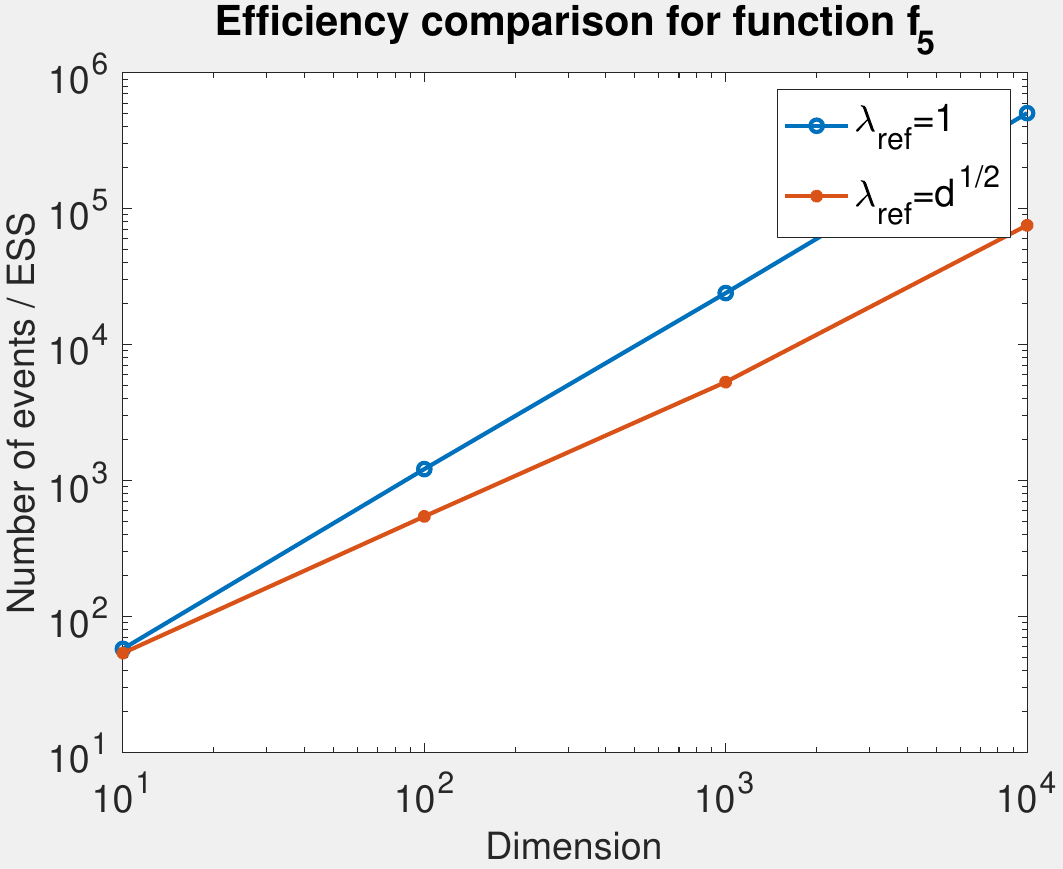}
\end{subfigure}%
\begin{subfigure}{.5\textwidth}
	\centering
	\includegraphics[width=.8\linewidth]{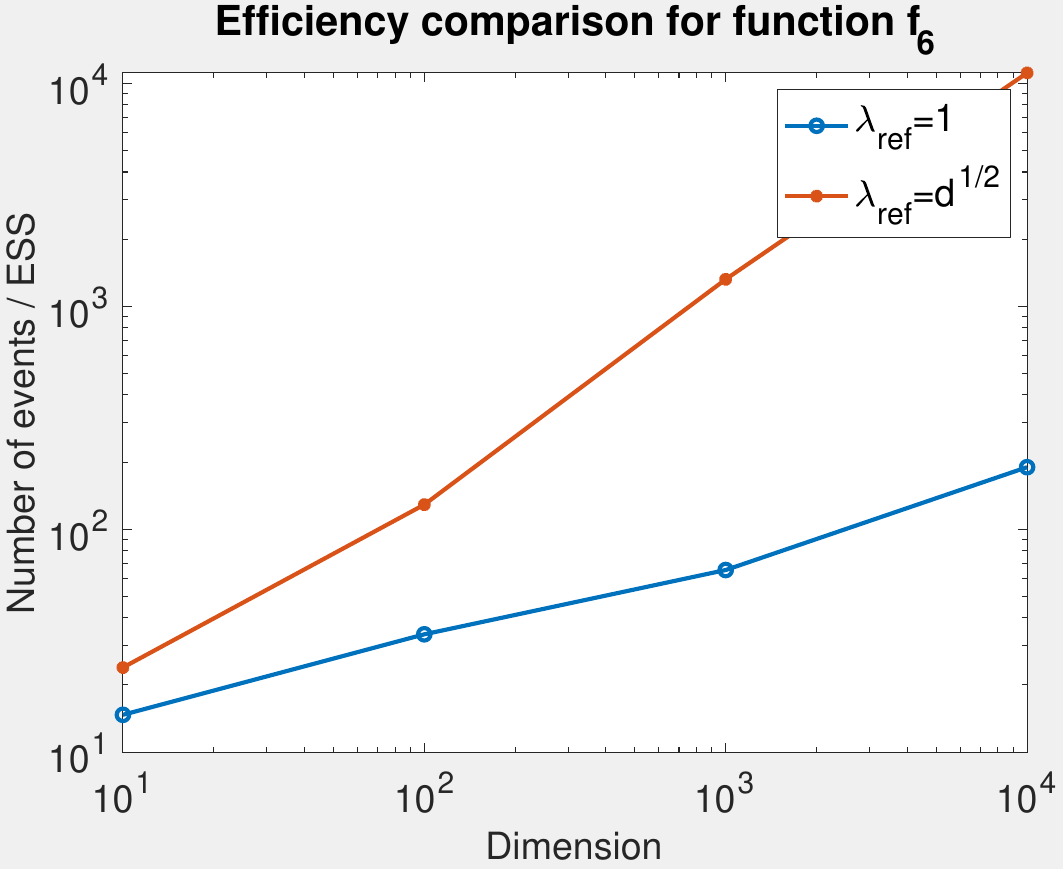}
\end{subfigure}\\

	\begin{subfigure}{.5\textwidth}
	\centering
	\includegraphics[width=.8\linewidth]{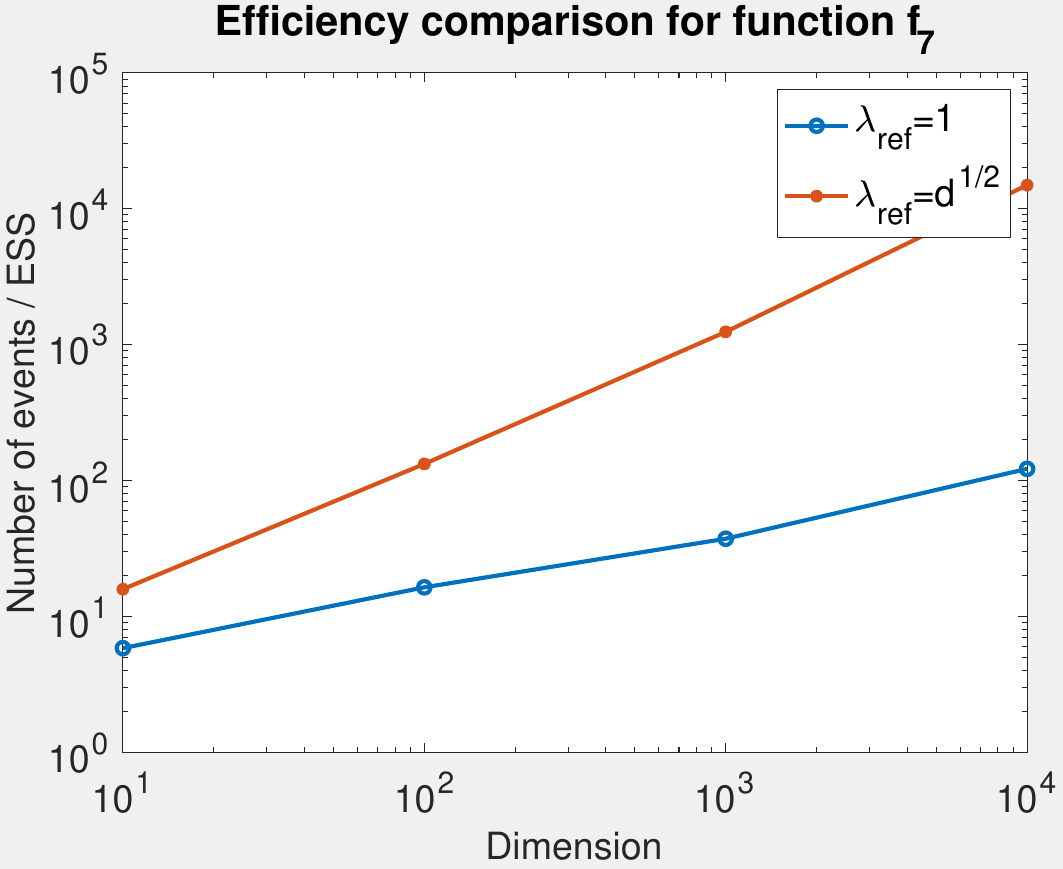}
\end{subfigure}%

		\caption{Number of BPS events per effective sample for 7 different functions for standard Gaussian target as a function of the dimension, with two different scalings of the refreshment rate $\lref$ in terms of the dimension}
		\label{fig:empirical}
	\end{figure}

\section{Proof of Weak Convergence Result - Theorem~\ref{thm:weakconv}\label{sec:weakconvproof}}
The proof will be based on a sequence of auxiliary results.
First we will show that the RHMC semigroup $\{P^t: t \geq 0\}$, acting on the Banach space $C_0(\mathcal{Z})$ with the sup-norm is Feller, and that the space $C_c^\infty (\mathcal{Z})$ is a \textit{core} for its generator given in \eqref{eq:RHMC_generator}, in the sense that
$C_c^\infty$ is dense in $\mathcal{D}(\mathcal{A})$ with respect to the norm $\vvvert \cdot \vvvert := \|f\|_\infty + \|\mathcal{A} f \|_\infty$.
This, and a sequence of auxiliary results, will allow us to
apply \cite[Corollary~8.6]{EK_05} to prove Theorem~\ref{thm:weakconv}.
\subsection{Feller property}

Recall that in the context of Theorem ~\ref{thm:weakconv}, we have  $d=1$ and $\ZZ=\mathbb{R}^{2}$. A Markov process taking values in $\ZZ$,
with transition semigroup $\{P^{t}:t\geq0\}$, is called a Feller
process and $\{P^t:t\geq 0\}$ a Feller semigroup, if it satisfies the following two properties
\begin{description}
\item [{Feller property:}] for all $t\geq0$ and $f\in C_{0}(\mathcal{Z})$ we have
$P^{t}f\in C_{0}(\mathcal{Z})$,  and
\item [{Strong continuity:}] $\|P^{t}f - f\|_\infty \to 0$ as $t\to0$ for $f\in C_{0}(\mathcal{Z})$.
\end{description}

\begin{proposition}\label{lem:feller}
Suppose that $W:\mathbb{R}\mapsto [0,\infty)$ is continuously differentiable and $\lim_{|x|\to\infty} W(x)=\infty$. Then the RHMC process $\{Z_t\}_{t\geq 0}$ with generator $\mathcal{A}$ given by \eqref{eq:RHMC_generator} with Hamiltonian $H(x,v) = W(x) + |v|^2/2$, $\alpha \in (0,1)$ and $\lref>0$ is a Feller process. If in addition $W\in \mathcal{C}^\infty(\mathbb{R})$, then $C_c^\infty (\mathbb{R})$ is a core for its generator.
\end{proposition}
Note that a more technical approach proposed recently in \citet{holderrieth2019core} requires weaker assumptions.

\subsubsection{Proof of Proposition~\ref{lem:feller}}
%
%
%
Before we proceed let us first define the \textit{resolvent operator} for $\lambda > 0$
$$\RR_\lambda f(z) :=  \int_0^\infty \re^{-\lambda s} P^s f(z) \rd s
=\int_0^\infty \re^{-\lambda s} \rE^z \left[ f(Z_s)\right] \rd s.$$

The proof will proceed as follows. First we will first show that
$\mathcal{R}_\lambda: C_0(\mathcal{Z}) \to C_0(\mathcal{Z})$,
and then use \cite[Corollary~1.23]{Bottcher13} to establish that $\{P^t:t\geq 0\}$ has the Feller property, that is for all $t\geq 0$ $P^t: C_0(\mathcal{Z}) \to C_0(\mathcal{Z})$.
Once the Feller property is established by \cite[Lemma~1.4]{Bottcher13} to prove strong continuity, it suffices to prove the weaker statement
$P^t f(z) \to f(z)$, for all $f\in C_0(\mathcal{Z})$ and $z \in \mathcal{Z}$.
We now establish this property.
Let $T_1, T_2, \dots$ be the arrival times of the jumps. Then we have for $h> 0$
\begin{align*}
P^h f(z) - f(z)
&= \rE^z \left[ f(Z_h)\right] - f(z)\\
&= \rE^z \left[ f(Z_h) \mathbb{1} \{T_1 \geq h \}\right] - f(z)
 + \rE^z \left[ f(Z_h) \mathbb{1} \{T_1 < h \}\right]\\
&= f\left( \Xi(h,z)\right)\re^{-\lref h}  - f(z)
 + \mathcal{E},
\end{align*}
where we write $\Xi(z,t)$ for the solution of the Hamiltonian dynamics at time $t$ initialized at $z_0=z$. It is well-known that if $H:\rR\times \rR \to \rR$ is continuously differentiable everywhere then $\Xi(z,s)$ is well defined for all $s>0$ (see for example \cite[Theorem~1.186]{Chi}), $H\big(\Xi(z,s)\big)=H(z)$ for all $s>0$ and
$\Xi(z,h) \to z$ as $h\to 0$.
Since $f$ is bounded it easily follows that as $h\to 0$
$$|\mathcal{E}| \leq \|f\|_\infty (1-\re^{-\lref h}) \to 0.$$
Since $\Xi(z,h)\to z$ as $h\to 0$, the result follows.

\paragraph{Proof of the Feller property.} From \cite[Equation~2.6]{DuCo_08} we know that we can express the resolvent kernel as follows for a measurable set $A$
\begin{equation}\label{eq:resolvent_sum}
\RR_\lambda(z,A) = \sum_{j=0}^\infty J_\lambda^j K_\lambda (z,A),
\end{equation}
where
\begin{align}
K_\lambda (z,A)
&:=\int_0^\infty \re^{-\lambda s -\lref s} \mathbb{1}_A \left( \Xi(z,s)\right) \rd s,\\
J_\lambda (z,A)
&:=\int_0^\infty \lref \, \re^{-\lambda s -\lref s} Q_\alpha\left(\Xi(z,s), A \right)
\rd s,
\end{align}
with $\Xi(z,s)=\Xi\big((x,v),s\big)$ as defined above.

We will now show that  $\RR_\lambda f \in C_0(\mathcal{Z})$ for any $f\in C_0(\mathcal{Z})$.
This follows from the next result.
%
\begin{lemma}\label{lem:resolvent}
$W\in C^{1}(\mathbb{R}; [0,\infty))$, $W(x)\to \infty$ as $|x|\to \infty$ and let $f\in C_0(\mathcal{Z})$. Then, for any $\lambda>0$, we have $J_\lambda f, K_\lambda f \in C_0(\mathcal{Z})$ and $\|J_\lambda f\|_\infty\leq \lref/(\lambda+ \lref)\|f\|_\infty$. In particular
$$\mathcal{R}_\lambda f = \sum_{j=0}^\infty J_\lambda^j K_\lambda f \in C_0(\mathcal{Z}).$$
 \end{lemma}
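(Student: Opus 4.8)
The plan is to show that both integral operators $K_\lambda$ and $J_\lambda$ map $C_0(\ZZ)$ into itself, and that $J_\lambda$ is a strict contraction on $(C_0(\ZZ),\|\cdot\|_\infty)$ with operator norm at most $\lref/(\lambda+\lref)<1$; the representation $\RR_\lambda f = \sum_{j\geq 0} J_\lambda^j K_\lambda f$ from \eqref{eq:resolvent_sum} then converges absolutely in $\|\cdot\|_\infty$, and since $C_0(\ZZ)$ is a Banach space and closed under uniform limits, the sum lies in $C_0(\ZZ)$. So the whole lemma reduces to two claims about $K_\lambda f$ and $J_\lambda f$: (a) continuity, and (b) vanishing at infinity; plus the trivial norm bound for $J_\lambda$.

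For the norm bound, $|J_\lambda f(z)| \le \int_0^\infty \lref\,\re^{-(\lambda+\lref)s}\, |Q f|(\Xi(z,s))\,\rd s \le \|f\|_\infty \int_0^\infty \lref\,\re^{-(\lambda+\lref)s}\,\rd s = \frac{\lref}{\lambda+\lref}\|f\|_\infty$, using that $Q=Q_\alpha$ is a Markov kernel hence $\|Qf\|_\infty \le \|f\|_\infty$. For continuity of $K_\lambda f$ and $J_\lambda f$: under Assumption~\ref{ass:feller} the Hamiltonian vector field $(v,-\nabla U(x))$ is continuous, so by continuous dependence of ODE solutions on initial conditions the flow map $z\mapsto \Xi(z,s)$ is continuous for each fixed $s$ (one may also invoke \cite[Theorem~1.186]{Chi} as already cited). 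Hence $s\mapsto f(\Xi(z,s))$ and $s\mapsto Qf(\Xi(z,s))$ are continuous in $z$ pointwise in $s$, bounded by $\|f\|_\infty$, and the exponential weight $\re^{-(\lambda+\lref)s}$ is integrable; dominated convergence gives continuity of $K_\lambda f$ and $J_\lambda f$. (For $J_\lambda$ one additionally needs $z\mapsto Qf(z)$ continuous, which follows from dominated convergence in the Gaussian integral defining $Q_\alpha$, using continuity of $f$ and boundedness.)

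The genuinely delicate part is the decay at infinity, i.e.\ showing $K_\lambda f, J_\lambda f \in C_0(\ZZ)$ rather than just $C_b(\ZZ)$; this is where I expect the main obstacle. The issue is that the Hamiltonian flow conserves energy $H(\Xi(z,s)) = H(z)$, so a trajectory starting at large $|z|$ stays on a high-energy level set and need not escape to infinity uniformly; one must argue that for $z$ in a region where $H(z)$ is large, the trajectory $\Xi(z,s)$ spends most of its (exponentially weighted) time in regions where $|f|$ is small. The key mechanism is: if $|z|\to\infty$, then $H(z)\to\infty$ (note $U\ge 0$, so $H(x,v)\ge |v|^2/2$ and $H(x,v)\ge U(x)$); and on the energy surface $\{H=E\}$ with $E$ large, every point $(x',v')$ has either $|v'|$ large or $U(x')$ large, in particular $|z'|=|(x',v')|$ cannot be small — more precisely $\{|z|\le r\}$ is contained in $\{H \le \sup_{|z|\le r} H\}$, a bounded set, so a high-energy trajectory never enters a fixed compact set. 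Combined with: given $\epsilon$, pick a compact $K_\epsilon$ with $|f|<\epsilon$ off $K_\epsilon$; then for $z$ with $H(z) > \sup_{K_\epsilon} H$ we have $\Xi(z,s)\notin K_\epsilon$ for all $s\ge 0$, hence $|f(\Xi(z,s))|<\epsilon$ for all $s$, giving $|K_\lambda f(z)| \le \epsilon/(\lambda+\lref)$; and $\{z : H(z) \le \sup_{K_\epsilon} H\}$ is compact. This shows $K_\lambda f$ vanishes at infinity. For $J_\lambda f$ one argues identically with $Qf$ in place of $f$, after first checking $Qf \in C_0(\ZZ)$: continuity was noted above, and $Q_\alpha f$ vanishes at infinity because as $|(x,v)|\to\infty$, either $|x|\to\infty$ (and $Q_\alpha$ does not touch the $x$-coordinate, so $Q_\alpha f(x,v)$ is an average of $f$ at points with first coordinate $x$, which is small for $|x|$ large) or $|v|\to\infty$ with $|x|$ bounded (and then $\alpha v + \sqrt{1-\alpha^2}\,\xi$ is large with overwhelming Gaussian probability, since $\alpha>0$), so the average is small — a routine splitting of the Gaussian integral makes this precise.
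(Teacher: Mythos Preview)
Your proposal is correct and follows essentially the same route as the paper: the energy-conservation argument for $K_\lambda f\in C_0$, the verification that $Q_\alpha$ preserves $C_0$ via a Gaussian tail split (using $\alpha>0$), the identification $J_\lambda f=\lref K_\lambda Q_\alpha f$ (which you phrase as ``argue identically with $Qf$ in place of $f$''), the norm bound $\|J_\lambda f\|_\infty\le \lref/(\lambda+\lref)\|f\|_\infty$, and the absolute convergence of the Neumann series in the Banach space $(C_0(\ZZ),\|\cdot\|_\infty)$. The only places where your write-up is terser than the paper's are the Gaussian splitting for $Q_\alpha f\to 0$ at infinity and the implicit use of compactness of the Hamiltonian level sets $\{H\le L\}$, both of which the paper spells out in a line or two but which you have correctly identified and sketched.
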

\begin{proof}[Proof of Lemma~\ref{lem:resolvent}]
Let $\lambda >0$ and let us first look at $K_\lambda$.
Suppose now that $f\in C_0(\mathcal{Z})$ and that $z_n\to z$. Then
\begin{align*}
|K_\lambda f(z) - K_\lambda f(z_n)|
&\leq \int_0^\infty \lref \, \re^{-\lambda s -\lref s} |f \left(\Xi(z,s)\right)-f \left(\Xi(z_n,s)\right)| \rd s \to 0,
\end{align*}
by the bounded convergence theorem, since $f$ is bounded and the functions
$s\mapsto |f \left(\Xi(z,s)\right)-f \left(\Xi(z_n,s)\right)|$ vanish pointwise by the continuity of $f$ and the continuous dependence of the solution $\{\Xi(z,s):s\geq 0\}$ on the initial condition; see, e.g., \cite[Theorem~1.3]{Chi}.
This establishes that $K_\lambda f$ is continuous.

Next we prove that $K_\lambda f$ vanishes at infinity. Let $\epsilon>0$ be arbitrary.
Since $W(x) \to \infty$ as $|x|\to \infty$, the level sets $\mathcal{H}_L:=\{z: H(z)\leq L\}$ are compact and $\ZZ = \cup_{L>0}\{z: H(z)\leq L\}$. Therefore we can find $L=L(\epsilon)$ such that
$|f(z)|<\epsilon(\lambda + \lref)$ for $z \notin \mathcal{H}_L$. For all such $z$, since $H(\Xi(z,s))=H(z)$ for all $s>0$,  we have that
\begin{align*}
|K_\lambda f(z)|
&\leq \int_0^\infty \re^{-\lambda s -\lref s} \left|f \left( \Xi(z,s)\right)\right| \rd s\\
&< \epsilon(\lambda + \lref) \int_0^\infty \re^{-\lambda s -\lref s}  \rd s = \epsilon.
\end{align*}
Thus we conclude that for all $\lambda >0$ we have $K_\lambda :C_0(\mathcal{Z}) \to C_0(\mathcal{Z})$.

Now we move on to $J_\lambda$.  First notice that for any $f\in C_0(\mathcal{Z})$ we have $Q_\alpha f$ is also continuous. To see why let $z_n=(x_n,v_n)\to z=(x,v)$
and notice that as $d=1$
\begin{multline*}
|Q_\alpha f \left(z_n\right)-Q_\alpha f \left(z\right)|
\\ \leq \frac{1}{\sqrt{2\pi}} \int_{-\infty}^\infty
	\left| f\left(x_n, \alpha v_n + \sqrt{1-\alpha^2} \xi\right)
	-f\left(x, \alpha v + \sqrt{1-\alpha^2} \xi\right) \right|\re^{-\xi^2/2} \rd \xi \to 0,
	\end{multline*}
by the bounded convergence theorem, since $f$ is continuous and bounded, and therefore $Q_\alpha f$ is continuous.
Next, for any $\delta>0$ we can choose a compact set $K_\delta$ such that $|f(z)|<\delta$ for $z\notin K_\delta$. In particular, since $K_\delta$ is compact, for any $\delta>0$ we can also find $M_\delta>0$ such that
$$K_\delta \subset \{(x,v): |x|, |v|\leq M_\delta\}.$$
Fix $\epsilon \in (0, 1/2)$ and choose $z_\epsilon$ such that $\Phi(z_{\epsilon})=1-\epsilon/2$, where $\Phi$ is the cumulative distribution function of the standard normal distribution. Then
\begin{align*}
|Q_\alpha f \left(z\right)|
&\leq \epsilon \|f\|_\infty
+\frac{1}{\sqrt{2\pi}} \int_{\xi=-z_{\epsilon}}^{z_\epsilon}
	\left| f\left(x, \alpha v + \sqrt{1-\alpha^2} \xi\right)
	\right|\re^{-\xi^2/2}\rd \xi.
\end{align*}
Then for all $z=(x,v)$ and $\xi$ such that $|x|>M_\epsilon$,  $|v|>(M_\epsilon+z_\epsilon)/\alpha$ and $|\xi|<z_\epsilon$ we have
\begin{align*}
\Big|\alpha v + \sqrt{1-\alpha^2}\xi\Big|
&\geq \alpha |v| - \sqrt{1-\alpha^2}|\xi|
\geq \alpha |v| -|\xi|\geq M_\epsilon + z_\epsilon - z_\epsilon > M_\epsilon.
\end{align*}
Therefore for such $z$ we have that
\begin{align*}
|Q_\alpha f \left(z\right)|
&\leq \epsilon \|f\|_\infty
+\frac{1}{\sqrt{2\pi}} \int_{\xi=-z_{\epsilon}}^{z_\epsilon}
	\left| f\left(x, \alpha v + \sqrt{1-\alpha^2} \xi\right)
	\right|\re^{-\xi^2/2}\rd \xi\\
&< \epsilon \|f\|_\infty
+\frac{\epsilon}{\sqrt{2\pi}} \int_{\xi=-z_{\epsilon}}^{z_\epsilon}\re^{-\xi^2/2}\rd \xi,
\end{align*}
and since $\epsilon>0$ is arbitrary it follows that $Q_\alpha f \in C_0(\mathcal{Z})$.

Observe that $J_\lambda f(z) =\lref K_\lambda Q_\alpha f(z)$. Therefore
if $f\in C_0(\mathcal{Z})$, since we have already shown that $Q_\alpha :C_0(\mathcal{Z})\to C_0(\mathcal{Z})$ and $K_\lambda :C_0(\mathcal{Z})\to C_0(\mathcal{Z})$, it follows that $J_\lambda f \in C_0(\mathcal{Z})$.
%

Finally, since clearly $\|Q_\alpha f\left( \Xi(z,s)\right)\|_\infty \leq \|f\|_\infty$
\begin{align*}
\|J_\lambda f\|_\infty
&= \sup_{z} \Big|
\int_0^\infty \lref \, \re^{-\lambda s -\lref s} Q_\alpha f \left(\Xi(z,s)\right) \rd s
\Big|\\
&\leq \int_0^\infty \lref \, \re^{-\lambda s -\lref s} \|Q_\alpha f\left(\Xi(z,s)\right)\|_\infty \rd s\\
&\leq \int_0^\infty \lref \, \re^{-\lambda s -\lref s} \|f\|_\infty \rd s
=\frac{\lref}{\lambda + \lref} \|f\|_\infty,
\end{align*}
and since $\lambda >0$ we can see that this is a strict contraction.
From this, it follows that the sequence
$$\sum_{j=0}^n J_\lambda^j K_\lambda f, $$
is Cauchy in the Banach space $\left(C_0(\mathcal{Z}), \| \cdot \|_\infty\right)$, whence the conclusion follows.
\end{proof}
\paragraph{$C_c^\infty$ is a core.}
Define the semigroup $\{Q^t:t \geq 0\}$, where for each $t\geq 0$
$Q^t: C_0(\mathcal{Z})\mapsto C_0(\mathcal{Z})$ is defined through
$Q^t f(z)=f\left(\Xi (z,t) \right)$, with $\Xi(z,t)$ denoting as before the solution of the Hamiltonian dynamics started from $z$ at time $t$.
It can be easily shown that the generator of $Q^t$ is given for $f\in C_c^\infty(\mathcal{Z})$ by
$$Bf(x,v) = ( \nabla_x f, v ) - ( \nabla_v f , \nabla U(x)),$$
that is the first two terms of the generator $\mathcal{A}$ of RHMC.

Let $f$ be supported on a compact set $K$. By our assumptions on the Hamiltonian $H$, there exists $L>0$ such that $K\subseteq \mathcal{H}_L:=\{(x,v): H(x,v) \leq L\}$. Letting $z \notin \mathcal{H}_L$,
for all $t\geq 0$, we have by definition $H(\Xi(z,t))=H(z)$ and thus $\Xi(z,t) \notin \mathcal{H}_L$. Therefore $Q^tf$ will have compact support.

Notice next, since $W\in C^\infty(\mathbb{R})$, that for any $t\geq 0$ the mapping $z\mapsto \Xi(z,t)$ is infinitely differentiable, see e.g. \cite[Exercise~1.185]{Chi}. From this and the above discussion we conclude that for any $f\in C_c^\infty (\mathcal{Z})$ and $t\geq 0$ we have $Q^t f \in C_c^\infty$.
Therefore from \citet[Theorem~1.9]{davies1980one}, and since $C_c^\infty (\mathcal{Z})\subset C_0 (\mathcal{Z})$ is dense, we conclude that $C_c^\infty$ is a core for $B$, and in particular that for any $f\in \mathcal{D}(B)$, there exists a sequence $\{f_n:n\geq 0\} \subset C_c^\infty (\mathcal{Z})$ such that
$$\| f_n -f \|_\infty + \|Bf_n - Bf \|_\infty \to 0.$$
Since the operator $\lref [Q_\alpha - I]$ is clearly bounded on $C_0(\mathcal{Z})$ for any $\alpha$, it follows that $\mathcal{D}(\mathcal{A})= \mathcal{D}(B)$, and that for the sequence $\{f_n\}$ above we also have
$$\| f_n -f \|_\infty + \|\mathcal{A}f_n - \mathcal{A} f \|_\infty \to 0,$$
proving that $C_c^\infty(\mathcal{Z})$ is a core for $\mathcal{A}$.

\subsection{Proof of Theorem~\ref{thm:weakconv}}
Recall that we write $\{\ZZZ_n(s):s\geq 0\}$ for BPS initialized from $\pi_n$, the generator of which we denote with $\mathcal{A}_n$, and write $\{Z^{(1)}_n(s):s\geq 0\}$ for its first component.
In addition let
$$\FF^n_t:=\sigma\{\ZZZ_n(s): s\leq t\}, \quad \text{and} \quad \GG^n_t:=\sigma\left\{Z^{(1)}_n(s): s\leq t\right\}.$$

Let $\epsilon_n\to 0$ be monotone and to be specified later on. All expectations will be with respect to the path measure of BPS started from $\pi_n$.
We proceed with the usual construction. For some function $f:\ZZ \to \mathbb{R}$, that is $f$ is a function only of $Z_n^{(1)}$, such that $f\in C^{\infty}_c$, smooth with compact support, we define
\begin{align}
 \xi_n(t)&:= \epsilon_n^{-1} \int_0^{\epsilon_n}\rE\left[ f\big( Z_n^{(1)}(t+s) \big) \big| \GG_t^n \right] \rd s,\label{eq:xidef}\\
 \phi_n(t)&:=\epsilon_n^{-1} \rE\left[\left. f\left( Z_n^{(1)}(t+\epsilon_n) \right) - f\left( Z_n^{(1)}(t) \right) \right| \GG_t^n \right].\label{eq:phidef}
\end{align}
Abusing notation, we will also write $f$ for the mapping $\mathcal{Z}^n\mapsto \mathbb{R}$ given by $f(z_1,\dots, z_n)=f(z_1)$.
We have already established that $(\mathcal{A}, C_c^\infty)$ generates the strongly continuous semigroup $\left\{ P^{t}:t\geq0\right\}$ corresponding to RHMC.
To apply \citep[Corollary~8.6 of Chapter~4]{EK_05} we need to check
the following:
\begin{itemize}
\item \textbf{Strongly Separating algebra:} the closure of the linear span of $C_c^\infty$
contains an algebra that strongly separates points, see \cite[Section~3.4]{EK_05} for the definition. This is obvious since $C_c^\infty(\mathcal{Z})$ strongly separates points and is dense in the algebra $C_0(\mathcal{Z})$,
since any function in $C_0(\mathcal{Z})$ can be approximated arbitrarily well by functions in $C_c(\mathcal{Z})$ by multiplying with, and then convolving with appropriate mollifiers.
\item \textbf{Generator~~convergence:} for each $f\in C_c^\infty(\mathcal{Z})$ and $T>0$, for $\xi_{n},\phi_{n}$ as defined in \eqref{eq:xidef},\eqref{eq:phidef}
\begin{align}
\sup_{n}\sup_{t\leq T}\mathbb{E}[|\xi_{n}(t)|] & <\infty\label{eq:firstcondition}\\
\sup_{n}\sup_{t\leq T}\mathbb{E}[|\phi_{n}(t)|] & <\infty\label{eq:secondcondition}\\
\lim_{n\to\infty}\mathbb{E}\left[\left|\xi_{n}(t)-f\left(Z_n^{(1)}(t)\right)\right|\right] & =0,\label{eq:thirdcondition}\\
\lim_{n\to\infty}\mathbb{E}\left[\left|\phi_{n}(t)-\mathcal{A}f\left(Z_n^{(1)}(t)\right)\right|\right] & =0,\label{eq:fourthcondition}
\end{align}
and in addition
\begin{equation}
\lim_{n\to\infty}\mathbb{E}\left\{ \sup_{t\in\mathbb{Q}\cap[0,T]}|\xi_{n}(t)-f(Z_n^{(1)}(t))|\right\} =0,\label{eq:833}
\end{equation}
and for some $p>1$
\begin{equation}
\sup_{n\to\infty}\mathbb{E}\left[\left(\int_{0}^{T}|\phi_{n}(s)|^{p}\mathrm{d}s\right)^{1/p}\right]<\infty.\label{eq:834}
\end{equation}
\end{itemize}
\subsubsection{Proof of Equations (\ref{eq:833}) and (\ref{eq:thirdcondition}).}
Since condition (\ref{eq:thirdcondition}) is implied by (\ref{eq:833}), we will establish \eqref{eq:833}.

{First recall that for each $n$, BPS is non-explosive. To see why, for each $\bm{x}, \bm{v}$, let $L>|v|>0$ and consider
$$\tau_{n, L} := \inf\{t\geq 0: \bm{Z}_n(t) \notin B(x, L^2)\times B(0, L)\}.$$
Letting
$$\sigma^x_{n, L^2}:= \inf\{t\geq 0: \bm{X}_n(t) \notin B(x, L^2)\}, \quad
\sigma^v_{n, L}:= \inf\{t\geq 0: \bm{V}_n(t) \notin B(0, L)\},
$$
we have
\begin{align*}
\tau_{n, L} &= \sigma^x _{n, L^2} \mathds{1}\{\sigma^x _{n,L^2} < \sigma^v_{n, L} \}
+ \sigma^v _{n, L} \mathds{1}\{\sigma^x _{n, L^2} \geq \sigma^v_{n, L} \}\\
&\geq L \mathds{1}\{\sigma^x _{n, L^2} < \sigma^v_{n, L} \}+ \sigma_{n, L}^v \mathds{1}\{\sigma^x _{n, L^2} \geq \sigma^v_{n, L} \}\geq L \vee \sigma_{n, L}^v,
\end{align*}
where the first inequality follows, since on the event $\{\sigma^x _{L^2} \geq \sigma^v_L \}$ the maximum speed up to $\sigma^x_{L^2}$ is less than $L$. Since $|\bm{V}_n(t)|$ only changes at the arrivals of a homogeneous Poisson process with rate $\lref>0$, it is clear that as $L\to \infty$, $\sigma^v_{n, L}\to \infty$ and therefore $\tau_{n, L} \to \infty$.
}

Fix $T>0$. Since BPS is non-explosive for every $n$ and $\delta>0$ we can find a $K_{n,\delta}>0$ such that
$$\rP\left[ \sup_{t\leq T+1} |\bm{Z}_n(t)| \geq K_{n,\delta} \right] \leq \delta.$$
For $\delta_n\to 0$ and by a diagonal argument, we can find a sequence $K_{n,\delta_n}$ such that
$$\rP\left[ \sup_{t\leq T+1}|\bm{Z}_n(t)| \geq K_{n,\delta_n} \right] \leq \delta_n \to 0.$$
We will write $G_n$ for the event
$$G_n := \left\{  \sup_{t\leq T+1}|\bm{Z}_n(t)| \leq K_{n,\delta_n} \right\}.$$
Then we have for $\epsilon_n \to 0$, to be specified later on,
\begin{align*}
\lefteqn{\rE\left[ \sup_{t\in[0,T] \cap \mathbb{Q}} \left| \xi_n(t) - f\left( Z_n^{(1)}(t) \right)\right| \right]}\\
&=\rE\left[\sup_{t\in[0,T] \cap \mathbb{Q}} \left| \epsilon_n^{-1}\int_0^{\epsilon_n}\rE\left[\left.   f\left( Z^{(1)}_n(t+r)\right)-  f\left( Z^{(1)}_n(t)\right)\right| \GG_t^n \right]\rd r \right| \right]\\
&=\rE\left[\sup_{t\in[0,T] \cap \mathbb{Q}} \left| \epsilon_n^{-1}\int_0^{\epsilon_n}\rE\left[\left. \rE\left\{\left.   f\left( Z^{(1)}_n(t+r)\right)-  f\left( Z^{(1)}_n(t)\right)
  \right| \FF_t^n \right\} \right| \GG_t^n \right]\rd r \right| \right]\\
&\leq\rE\left[\sup_{t\in[0,T] \cap \mathbb{Q}} \left| \epsilon_n^{-1}\int_0^{\epsilon_n}\rE\left[\left. \rE\left\{\left.  \left( f\left( Z^{(1)}_n(t+r)\right)-  f\left( Z^{(1)}_n(t)\right)\right) \mathbb{1}_{G_n}
  \right| \FF_t^n \right\} \right| \GG_t^n \right]\rd r \right| \right]\\
&\qquad +\rE\left[\sup_{t\in[0,T] \cap \mathbb{Q}} \left| \epsilon_n^{-1}\int_0^{\epsilon_n}\rE\left[\left. \rE\left\{\left.  \left( f\left( Z^{(1)}_n(t+r)\right)-  f\left( Z^{(1)}_n(t)\right)\right) \mathbb{1}_{G_n^\mathtt{c}}
  \right| \FF_t^n \right\} \right| \GG_t^n \right]\rd r \right| \right]\\
&:= J_1 + J_2.
\end{align*}
For the term $J_2$ we have for $p>1$
\begin{align}
J_2
&\leq 2\|f\|_\infty \rE\left[\sup_{t\in[0,T] \cap \mathbb{Q}} \rE\left[\left.  \mathbb{1}_{G_n^\mathtt{c}}
   \right| \GG_t^n \right]\right]
   \leq  2\|f\|_\infty\rE\left[ \left(\sup_{t\in[0,T] \cap \mathbb{Q}} \rE\left[\left.  \mathbb{1}_{G_n^\mathtt{c}}
   \right| \GG_t^n \right] \right)^p\right]^{1/p}\notag\\
&\leq 2\|f\|_\infty\frac{p}{p-1} \rE\left[ \rE\left[\left.  \mathbb{1}_{G_n^\mathtt{c}} \right| \GG_T^n \right]^p\right]^{1/p}
\leq 2\|f\|_\infty\frac{p}{p-1} \rE\left[ \mathbb{1}_{G_n^\mathtt{c}}^p\right]^{1/p} = 2\|f\|_\infty\frac{p}{p-1} \delta_n^{1/p},\label{eq:j2_doob}
\end{align}
where we used Jensen's inequality, the fact that for each $n$, $\left\{\rE[\mathbb{1}_{G_n^\mathtt{c}}\mid \mathcal{G}_t^n]: t\geq 0\right\}$ is a $\mathcal{G}_t^n$-martingale and Doob's martingale inequality.

We proceed with the term $J_1$ as follows
\begin{align*}
J_1
&\leq \rE\left[\sup_{t\in[0,T] \cap \mathbb{Q}} \left| \epsilon_n^{-1}\int_0^{\epsilon_n}\rE\left[\left. \rE\left\{\left.  \left[ f\left( Z^{(1)}_n(t+r)\right)-  f\left( Z^{(1)}_n(t)\right)\right] \mathbb{1}_{G_n}
\mathbb{1}\{\tau_1^\text{ref}(t)> \epsilon_n\} \right| \FF_t^n \right\} \right| \GG_t^n \right]\rd r \right| \right]\\
&\qquad + \rE\left[\sup_{t\in[0,T] \cap \mathbb{Q}} \left| \epsilon_n^{-1}\int_0^{\epsilon_n}\rE\left[\left. \rE\left\{\left.  \left[ f\left( Z^{(1)}_n(t+r)\right)-  f\left( Z^{(1)}_n(t)\right)\right] \mathbb{1}_{G_n}
\mathbb{1}\{\tau_1^\text{ref}(t)\leq  \epsilon_n\} \right| \FF_t^n \right\} \right| \GG_t^n \right]\rd r \right| \right]\\
&=: J_{1,1} + J_{1,2},
\end{align*}
where we denote by $\tau_1^\text{ref}(t)$ the first refreshment time after time $t$. Since refreshment happens independently we can bound $J_{1,2}$
\begin{align*}
 J_{1,2}
 &\leq 2\|f\|_\infty
 \rE\left[\sup_{t\in[0,T] \cap \mathbb{Q}} \left| \epsilon_n^{-1}\int_0^{\epsilon_n} (1-\re^{-\lref \epsilon_n})\rd r \right| \right]\leq 2\|f\|_\infty \lref\, \epsilon_n \to 0.
\end{align*}

We control the term $J_{1,1}$ in two steps. To keep notation short we introduce the notation $G_n'(t):= \{ \tau_1^\text{ref}(t)> \epsilon_n\}$. Then
\begin{align*}
J_{1,1}
&\leq
\rE\Bigg[\sup_{t\in[0,T] \cap \mathbb{Q}} \bigg| \epsilon_n^{-1}\int_0^{\epsilon_n}\rE\Big[ \rE\Big\{  \Big[ f\left( X^{(1)}_n(t+r), V^{(1)}_n(t+r)\right) \\
&\qquad\qquad -  f\left( X^{(1)}_n(t), V^{(1)}_n(t+r)\right)\Big] \mathbb{1}_{G_n} \mathbb{1}_{G'_n(t)} \Big| \FF_t^n \Big\} \Big| \GG_t^n \Big]\rd r \bigg| \Bigg]\\
&\qquad + \rE\Bigg[\sup_{t\in[0,T] \cap \mathbb{Q}} \bigg| \epsilon_n^{-1}\int_0^{\epsilon_n}\rE\Big[ \rE\Big\{  \Big[ f\left( X^{(1)}_n(t), V^{(1)}_n(t+r)\right) \\
&\qquad\qquad\qquad -  f\left( X^{(1)}_n(t), V^{(1)}_n(t)\right)\Big] \mathbb{1}_{G_n} \mathbb{1}_{G'_n(t)} \Big| \FF_t^n \Big\} \Big| \GG_t^n \Big]\rd r \bigg| \Bigg]=: J_{1,1,1}+ J_{1,1,2}.
\end{align*}
For the first term, since only the location component changes we have
\begin{align*}
J_{1,1,1}
&\leq \|\partial_x f\|_\infty
\rE\Bigg[\sup_{t\in[0,T] \cap \mathbb{Q}} \bigg| \epsilon_n^{-1}\int_0^{\epsilon_n}
\rE\Big[ \rE\Big\{  \Big| X^{(1)}_n(t+r) - X^{(1)}_n(t)\Big|\times \mathbb{1}_{G_n} \mathbb{1}_{G'_n(t)} \Big| \FF_t^n \Big\} \Big| \GG_t^n \Big]\rd r \bigg| \Bigg]\\
&\leq \|\partial_x f\|_\infty
\rE\Bigg[\sup_{t\in[0,T] \cap \mathbb{Q}} \bigg| \epsilon_n^{-1}\int_0^{\epsilon_n}
\rE\Big[ \epsilon_n \Big| V^{(1)}_n(t)\Big|\times \mathbb{1}_{G_n} \mathbb{1}_{G'_n(t)} \Big| \FF_t^n \Big\} \Big| \GG_t^n \Big]\rd r \bigg| \Bigg],
\end{align*}
where the second inequality follows from the linear dynamics of BPS, since on the event $G_n'(t)$ there is no refreshment event and therefore the norm of the velocity component does not change.
Finally, recalling the definition of the event $G_n$ we obtain
\begin{align*}
J_{1,1,1}
&\leq \|\partial_x f\|_\infty \epsilon_n K_{n,\delta_n}.
\end{align*}
Next we have to control the term $J_{1,1,2}$ for which we point out that, since there is no refreshment event, the velocity will remain constant on the interval $[t, t+\epsilon_n]$ unless there is a bounce.
Writing $\sigma_1(t)$ for the arrival time of the first bounce after time $t$ we thus have
\begin{align*}
&J_{1,1,2}\\
&=  \rE\Bigg[\sup_{t\in[0,T] \cap \mathbb{Q}} \bigg| \epsilon_n^{-1}\int_0^{\epsilon_n}\rE\Big[ \rE\Big\{  \Big[ f\left( X^{(1)}_n(t), V^{(1)}_n(t+r)\right) \\
&\qquad\qquad\qquad -  f\left( X^{(1)}_n(t), V^{(1)}_n(t)\right)\Big] \mathbb{1}\{\sigma_1(t)< \epsilon_n\}\mathbb{1}_{G_n} \mathbb{1}_{G'_n(t)} \Big| \FF_t^n \Big\} \Big| \GG_t^n \Big]\rd r \bigg| \Bigg]\\
&\leq 2\|f\|_\infty \rE\Bigg[\sup_{t\in[0,T] \cap \mathbb{Q}} \bigg| \epsilon_n^{-1}\int_0^{\epsilon_n}\rE\Big[ \rE\Big\{  \mathbb{1}\{\sigma_1(t)< \epsilon_n\}\mathbb{1}_{G_n} \mathbb{1}_{G'_n(t)} \Big| \FF_t^n \Big\} \Big| \GG_t^n \Big]\rd r \bigg| \Bigg]\\
&\leq 2\|f\|_\infty \rE\Bigg[\sup_{t\in[0,T] \cap \mathbb{Q}} \bigg| \epsilon_n^{-1}\int_0^{\epsilon_n}\rE\Big[ \rE\Big\{  \mathbb{1}\{\sigma_1(t)< \epsilon_n\} \Big| \FF_t^n \Big\} \Big| \GG_t^n \Big]\rd r \bigg| \Bigg]\\
&\leq 2\|f\|_\infty \rE\Bigg[\sup_{t\in[0,T] \cap \mathbb{Q}} \bigg| \epsilon_n^{-1}\int_0^{\epsilon_n}
\rE\Big[\rE\Big\{ \Big[ 1-\exp\Big(-\int_0^{\epsilon_n} ( \nabla U_n(\bm{X}_n(t+s)), \bm{V}_n(t+s) )_+
\rd s  \Big) \Big] \Big| \FF_t^n \Big\} \Big| \GG_t^n \Big]\rd r \bigg| \Bigg],
\end{align*}
where we dropped the indicators in order to be able to compute the probability of no bounce. We again decompose according to the event $G_n$ in order to proceed
\begin{align*}
&J_{1,1,2}\\
&\leq 2\|f\|_\infty \rE\Bigg[\sup_{t\in[0,T] \cap \mathbb{Q}} \bigg| \epsilon_n^{-1}\int_0^{\epsilon_n}
\rE\Big[\rE\Big\{ \Big[ 1-\exp\Big(-\int_0^{\epsilon_n} ( \nabla U_n(\bm{X}_n(t+s)), \bm{V}_n(t+s) )_+
\rd s  \Big) \Big]\mathbb{1}_{G_n} \Big| \FF_t^n \Big\} \Big| \GG_t^n \Big]\rd r \bigg| \Bigg]\\
&\,+ 2\|f\|_\infty \rE\Bigg[\sup_{t\in[0,T] \cap \mathbb{Q}} \bigg| \epsilon_n^{-1}\int_0^{\epsilon_n}
\rE\Big[\rE\Big\{ \Big[ 1-\exp\Big(-\int_0^{\epsilon_n} ( \nabla U_n(\bm{X}_n(t+s)), \bm{V}_n(t+s) )_+
\rd s  \Big) \Big]\mathbb{1}_{G_n^\mathtt{c}} \Big| \FF_t^n \Big\} \Big| \GG_t^n \Big]\rd r \bigg| \Bigg].
\end{align*}
Since the integrand is bounded above by 1, a calculation similar to the one for the term $J_2$ in \eqref{eq:j2_doob} shows that the second term above vanishes as $n\to \infty$, and therefore using the inequality
$1-\exp(-x) \leq x$ for $x>0$ we have for $p>1$
\begin{align*}
&J_{1,1,2}\\
&\leq C \delta_n^{1/p}+
2\|f\|_\infty \rE\Bigg[\sup_{t\in[0,T] \cap \mathbb{Q}} \bigg| \epsilon_n^{-1}\int_0^{\epsilon_n}
\rE\Big[ \rE\Big\{ \int_0^{\epsilon_n} |\nabla U_n(\bm{X}_n(t+s))| |\bm{V}_n(t+s)| \rd s  \mathbb{1}_{G_n}\Big| \FF_t^n \Big\} \Big| \GG_t^n \Big]\rd r \bigg| \Bigg]\\
&\leq C \delta_n^{1/p}+\\
 &\quad +2\|f\|_\infty \rE\left[
 \sup_{t\in[0,T] \cap \mathbb{Q}}
 \left| \epsilon_n^{-1}\int_0^{\epsilon_n}\rE\left[ \left. \rE \left\{ \left.\int_0^{\epsilon_n}\left( \frac{1}{2}|\nabla U_n(\bm{X}_n(t+s))|^2 + \frac{1}{2} |\bm{V}_n(t+s)|^2 \right)\rd s  \times
 \mathbb{1}_{G_n}\right| \FF_t^n \right\} \right| \GG_t^n \right]\rd r \right| \right]\\
 &\leq C \delta_n^{1/p}+\\
 &\quad 2 C \|f\|_\infty \rE\left[
 \sup_{t\in[0,T] \cap \mathbb{Q}}
 \left| \epsilon_n^{-1}\int_0^{\epsilon_n}\rE\left[ \left. \rE \left\{ \left.\int_0^{\epsilon_n}
 \left( {M|\bm{X}_n(t+s)|^2} + |\bm{V}_n(t+s)|^2 \right)\rd s  \times
 \mathbb{1}_{G_n}\right| \FF_t^n \right\} \right| \GG_t^n \right]\rd r \right| \right]\\
 \intertext{since $|\nabla U_n(\bm{x}| =|\nabla U_n(\bm{x} - \nabla U_n(0)|\leq M|\bm{x}|$ by Assumption~\ref{ass:potential}}
 &\leq C \delta_n^{1/p}+\\
 &\quad 2C{M}\|f\|_\infty \rE\Bigg[\sup_{t\in[0,T] \cap \mathbb{Q}} \bigg| \epsilon_n^{-1}\int_0^{\epsilon_n}
 \rE\Big[ C\epsilon_n |\bm{Z}_n(t+s)|^2\mathbb{1}_{G_n}\Big| \FF_t^n \Big\} \Big| \GG_t^n \Big]\rd r \bigg| \Bigg]\\
&\leq C \delta_n^{1/p}+ 2C\|f\|_\infty \epsilon_n K_{n,\delta_n}^2.
\end{align*}
We choose $\epsilon_n$ such that $\epsilon_n K^2_{n,\delta_n} \to 0$.
\subsubsection{Proof of \eqref{eq:fourthcondition}.}
Next we prove \eqref{eq:fourthcondition}. First, by stationarity notice that we can equivalently check
$$\rE\left[ \left| \phi_n(0) - \mathcal{A}f\left( Z_n^{(1)}(0) \right)\right| \right] \to 0.$$
Notice first that $f\in \mathcal{D}\big(\widetilde{\AAA}_n\big)$, the domain of the \textit{extended generator}, since $f$ is smooth and bounded (see \cite[Theorem~26.14]{D_93})
\begin{align*}
 \phi_n(0)
 &=\epsilon_n^{-1} \rE\left[\left. f\left( Z_n^{(1)}(\epsilon_n) \right) - f\left( Z_n^{(1)}(0) \right) \right| \GG_0^n \right]\\
 &=\epsilon_n^{-1} \rE\left[\left. \int_0^{\epsilon_n} \widetilde{\AAA}_n f \left(\ZZZ_n(s) \right)\rd s + \mathcal{R}_n(s)  \right| \GG_0^n \right]\\
 &=\epsilon_n^{-1} \rE\left[\left. \int_0^{\epsilon_n} \widetilde{\AAA}_n f \left(\ZZZ_n(s) \right)\rd s \right| \GG_0^n \right],
\end{align*}
where we used the facts that $\mathcal{R}_n(t)$ is an $\mathcal{F}_t^n$-martingale and $\mathcal{F}_t^n \subseteq \mathcal{G}_t^n$, whence
$$\rE\left[\left. \mathcal{R}_n(s)\right| \GG_0^n \right]
=\rE\left\{\left. \rE\left[\left. \mathcal{R}_n(s)\right| \FF_0^n \right]\right|\GG_0^n \right\}=0.
$$
We also notice that $g_n:=\widetilde{\AAA}_n f\in \mathrm{Dom}\big(\widetilde{\AAA}_n\big)$ the domain of the extended generator.
Therefore
\begin{align*}
 \phi_n(0)
  &=\epsilon_n^{-1} \int_0^{\epsilon_n} \rE\left[\left. g_n \left(\ZZZ_n(s) \right) \right| \GG_0^n \right]\rd s\\
  &=\epsilon_n^{-1} \int_0^{\epsilon_n} \rE\left[\left. \widetilde{\AAA}_n f \left(\ZZZ_n(0) \right) + \int_0^s \widetilde{\AAA}_n g_n \left(\ZZZ_n(r) \right) + \mathcal{R}_n'(s) \rd r \right| \GG_0^n \right]\rd s\\
  &=\epsilon_n^{-1} \int_0^{\epsilon_n} \rE\left[\left. \widetilde{\AAA}_n f \left(\ZZZ_n(0) \right) + \int_0^s \widetilde{\AAA}_n g_n \left(\ZZZ_n(r) \right)\rd r \right| \GG_0^n \right]\rd s,
\end{align*}
where, from \cite[Theorem~26.12]{D_93}, it follows that the local martingale $\{\mathcal{R}_n'(s):s\geq 0\}$ is actually a proper martingale,
and therefore using the same arguments as before, for $s>0$,
$$\rE\left[\left. \mathcal{R}_n'(s) \right| \GG_0^n \right]=0.$$
Then we have
\begin{align}
  \rE\left[ \left| \phi_n(0) - \mathcal{A}f\left( Z_n^{(1)}(0) \right)\right| \right]
 &\leq \rE\left[ \left|\rE\left[\left. \widetilde{\AAA}_n f \left(\ZZZ_n(0) \right)\right| \GG_0^n\right] - \mathcal{A}f\left( Z_n^{(1)}(0) \right)\right| \right] \notag\\
  &\qquad + \rE\left\{ \left| \epsilon_n^{-1} \int_0^{\epsilon_n}\rE\left[ \left. \int_0^s \widetilde{\AAA}_n g_n \left(\ZZZ_n(r) \right)\rd r  \right| \GG_0^n \right]\rd s\right|\right\}
  \notag\\
&\leq \rE\left[ \left|\rE\left[\left. \widetilde{\AAA}_n f \left(\ZZZ_n(0) \right)\right| \GG_0^n\right] - \mathcal{A}f\left( Z_n^{(1)}(0) \right)\right| \right]
\notag\\
  &\qquad + \epsilon_n^{-1} \int_0^{\epsilon_n}\int_0^s\rE\left\{  \rE\left[ \left.  \big| \widetilde{\AAA}_n g_n \left(\ZZZ_n(r) \right)\big| \right| \GG_0^n \right]\right\}\rd r \rd s
  \notag\\
&:=\rE\left[ \left|\rE\left[\left. \widetilde{\AAA}_n f \left(\ZZZ_n(0) \right)\right| \GG_0^n\right] - \mathcal{A}f\left( Z_n^{(1)}(0) \right)\right| \right]
+ \mathcal{R}_n,\label{eq:Rndef}
  \end{align}
applying Jensen's inequality conditionally. Finally by the tower law and by stationarity of $\{\ZZZ_n(t):t\geq 0\}$ when initialized from $\pi_n$
\begin{align*}
\mathcal{R}_n
&= \epsilon_n^{-1} \int_0^{\epsilon_n}\int_0^s\rE\left\{  \rE\left[ \left.  \big| \widetilde{\AAA}_n g_n \left(\ZZZ_n(r) \right)\big| \right| \GG_0^n \right]\right\}\rd r \rd s\\
 &=\epsilon_n^{-1} \int_0^{\epsilon_n}\int_0^s\rE\left\{  \big| \widetilde{\AAA}_n g_n \left(\ZZZ_n(r) \right)\big| \right\}\rd r \rd s =\frac{\epsilon_n}{2}\rE\left\{  \big| \widetilde{\AAA}_n g_n \left(\ZZZ_n(0) \right)\big| \right\}.
\end{align*}
\paragraph{Error term.}
We will now control this error term. Recall first that for $f \in C^\infty_c(\mathcal{Z})\subset \mathcal{D}(\AAA_n)$ we have
\begin{align*}
 \AAA_n f(\bm{x},\bm{v})
 &= ( \nabla f(\bm{x}), \bm{v})) +  \max\{0, ( \nabla U_n(\bm{x}), \bm{v})\}
 \left[\mathfrak{R}_nf\left(\bm{x},\bm{v}\right) - f\left(\bm{x},\bm{v}\right)  \right]
 + \lref \left[ Qf \left(\bm{x},\bm{v}\right)-f\left(\bm{x},\bm{v}\right)\right],\\
 \mathfrak{R}_nf\left(\bm{x},\bm{v}\right)
 &:= f\left(\bm{x},\bm{v}- 2 \frac{( \nabla U_n(\bm{x}), \bm{v}))}{| \nabla U_n(\bm{x})|^2} \nabla U_n(\bm{x})\right),\\
 Q_{\alpha, n}f\left(\bm{x},\bm{v}\right)
 &:= \frac{1}{(2\pi)^{n/2}}\int_{\mathbb{R}^n} \re^{-|\boldsymbol{\xi}|^2/2} f\left(\bm{x},\alpha\bm{v}+\sqrt{1-\alpha^2} \boldsymbol{\xi}\right) \rd \boldsymbol{\xi}.
\end{align*}
Potentially abusing notation, for $n\geq 1$ and $\bm{x}\in \mathbb{R}^n$ we define a mapping $R_n(\bm{x}): \mathbb{R}^n \mapsto \mathbb{R}^n$ through
$$R_n(\bm{x}) \bm{v}:= \bm{v}- 2 \frac{( \nabla U_n(\bm{x}), \bm{v})}{| \nabla U_n(\bm{x})|^2} \nabla U_n(\bm{x}),$$
{with the convention that $R_n(x) \bm{v}=0$, when $\nabla U_n (\bm{x})=0$. }

We decompose the generator $\AAA_n$ into three parts
$$\AAA_n = \AAA_n^{(1)}+\AAA_n^{(2)}+\AAA_n^{(3)},$$
where
\begin{align*}
 \AAA_n^{(1)} f (\bm{x},\bm{v})
 &=\left.\frac{\rd}{\rd t}f\left(\bm{x}+t \bm{v}, \bm{v}\right) \right|_{t=0}
 ,\\
 \AAA_n^{(2)} f (\bm{x},\bm{v})
 &=\max\{0, ( \nabla U_n(\bm{x}), \bm{v})\}
 \left[\mathfrak{R}_nf\left(\bm{x},\bm{v}\right) - f\left(\bm{x},\bm{v}\right)  \right],\\
 \AAA_n^{(3)} f (\bm{x},\bm{v})
 &=\lref \left[ Qf \left(\bm{x},\bm{v}\right)-f\left(\bm{x},\bm{v}\right)\right].
\end{align*}
\begin{remark}\label{rem:abscontinuity}
{Notice that when $f$ is differentiable we have
$$ \AAA_n^{(1)} f (\bm{x},\bm{v})
 =\langle \nabla f(\bm{x}), \bm{v}\rangle,$$
however for $\AAA_n^{(1)}f(x,v)$ to be well defined we only need that $t\mapsto f(\bm{x}+t\bm{v}, \bm{v})$ is absolutely continuous, see \citet[Chapter~2.22]{D_93}.}
\end{remark}

Therefore when considering $\AAA_n g_n=\AAA_n\AAA_n f_n$ we will need to consider all possible combinations $\AAA_n^{(i)} \AAA_n^{(j)}$ since the operators do not necessarily commute.
\paragraph{Case $i=1$.}
Using the fact that $f(\bm{x},\bm{v})=f(x_1,v_1)$, where we write $(x_1,v_1)$ for the first location and velocity components of $(\bm{x},\bm{v})$, the first term reduces to
\begin{align*}
 \AAA_n^{(1)} \AAA_n^{(1)}f (\bm{x},\bm{v})
 &= \left.\frac{\rd }{\rd t} ( \nabla f(\bm{x}), \bm{v})\right|_{t=0}= \left.\frac{\rd }{\rd t}  \frac{\partial}{\partial x}f(x_1+tv_1, v_1)v_1\right|_{t=0}\\
 &=\frac{\partial^2 f}{\partial x^2}(x_1,v_1) v_1^2.
\end{align*}
Since $f\in C^\infty_c (\mathbb{R}\times \mathbb{R})$, it follows that $\partial^2_x f(x,v)$ is also continuous and compactly supported and therefore bounded. Thus
\begin{align*}
\rE\left|\frac{\partial^2 f}{\partial x^2}(X^{(1)},V^{(1)}) \left(V^{(1)}\right)^2 \right|
&\leq \left\|\frac{\partial^2 f}{\partial x^2}\right\|_\infty\rE\left[\left(V^{(1)}\right)^2 \right]\leq \left\|\frac{\partial^2 f}{\partial x^2} \right\|_\infty = {O(1)},
\end{align*}
since under $\pi_n$, $V^{(1)}$ is centered Gaussian with unit variance.

The second term, see Remark~\ref{rem:abscontinuity}, takes the form
\begin{align*}
 \AAA_n^{(1)} \AAA_n^{(2)}f (\bm{x},\bm{v})
 &= \left.\frac{\rd }{\rd t} \AAA_n^{(2)}f (\bm{x}+t\bm{v},\bm{v})\right|_{t=0}\\
 &= \left.\frac{\rd }{\rd t} \max\{0, ( \nabla U_n(\bm{x}+t\bm{v}), \bm{v}))\}  \right|_{t=0}
  \left[\mathfrak{R}f\left(\bm{x},\bm{v}\right) - f\left(\bm{x},\bm{v}\right)  \right]\\
  &\qquad +\left.\frac{\rd }{\rd t} \left[\mathfrak{R}f\left(\bm{x}+t\bm{v},\bm{v}\right) - f\left(\bm{x}+t\bm{v},\bm{v}\right)  \right]  \right|_{t=0}
 \max\{0, ( \nabla U_n(\bm{x}), \bm{v}))\}\\
 &=: J_1 + J_2.
\end{align*}
For $J_1$, since by Assumption~\ref{ass:potential} $\nabla U_n$ is $M$-Lipschitz
\begin{align*}
 \lefteqn{\left|\max\left\{0, \left( \nabla U_n(\bm{x}+h\bm{v}), \bm{v}\right)\r\}-\max\l\{0, \l( \nabla U_n(\bm{x}), \bm{v}\r)\r\}\right|}\\
 &\leq \left|( \nabla U_n(\bm{x}+h\bm{v}), \bm{v})-( \nabla U_n(\bm{x}), \bm{v}))\right|
 \leq {M h |\bm{v}|^2.}
\end{align*}
Therefore we have that, for $h\in (0,1)$
\begin{align*}
 h^{-1} \left|\max\{0, ( \nabla U_n(\bm{x}+h\bm{v}), \bm{v})\}-\max\{0, ( \nabla U_n(\bm{x}), \bm{v})\}\right|
 &\leq M|\bm{v}|^2
 \in L^1(\pi),
\end{align*}
since the $V_i$ are standard normal random variables.
In addition since $f$ is bounded it follows that
$\mathfrak{R}f\left(\bm{x},\bm{v}\right) \leq \|f\|_\infty$. Therefore by the dominated convergence theorem, we can exchange the $h\to 0$ limit and expectation to  obtain
\begin{align*}
\pi \left[J_1\right]
  &\leq 2\|f\|_\infty \rE \left[ \left|\frac{\rd }{\rd t} \max\{0, ( \nabla U_n(\bm{X}+t\bm{V}), \bm{V}))\}  \Big|_{t=0}\right| \right]\\
  &\leq 2\|f\|_\infty \rE \left[\lim_{h\to 0} h^{-1}\left| ( \nabla U_n(\bm{X}+h\bm{V}), \bm{V}))  -( \nabla U_n(\bm{X}), \bm{V}))\right| \right]\\
  &\leq {2\|f\|_\infty M\rE  \left[  \left| \bm{V}\right|^{2} \right]}
  ={O( M n)}.
\end{align*}
For $J_2$ a lengthy but straightforward calculation shows that
\begin{align*}
\lefteqn{\left.\frac{\rd }{\rd t} \mathfrak{R}f\left(\bm{x}+t\bm{v},\bm{v}\right)\right|_{t=0}}\\
&=\left.\frac{\rd }{\rd t}
f\left(x_1+tv_1,v_1- 2 \frac{( \nabla U_n(\bm{x}+t\bm{v}), \bm{v}))}{| \nabla U_n(\bm{x}+t\bm{v})|^2}  \partial_1 U_n(\bm{x}+t\bm{v})\right)
 \right|_{t=0}\\
&=\partial_x f\left(x_1, v_1-2\frac{( \nabla U_n(\bm{x}), \bm{v})}{| \nabla U_n(\bm{x})|^2}  \partial_1 U_n(\bm{x})\ \right)v_1 \\
&\quad -2 \partial_v f\left(x_1, v_1-2\frac{( \nabla U_n(\bm{x}), \bm{v})}{| \nabla U_n(\bm{x})|^2}  \partial_1 U_n(\bm{x}) \right)
\left.\frac{\rd }{\rd t} \left( \frac{( \nabla U_n(\bm{x}+t\bm{v}), \bm{v})}{| \nabla U_n(\bm{x}+t\bm{v})|^2}  \partial_1 U_n(\bm{x}+t\bm{v})
\right)
 \right|_{t=0}\\
 &= (\mathcal{R}\partial_x f)(\bm{x},\bm{v})v_1 - (\mathcal{R}\partial_v f)(\bm{x},\bm{v})\times \mathfrak{U}(\bm{x},\bm{v}),
\end{align*}
where
\begin{multline*}
\mathfrak{U}(\bm{x},\bm{v})
:= \left.\frac{\rd }{\rd t} \left( 2\frac{( \nabla U_n(\bm{x}+t\bm{v}), \bm{v})}{| \nabla U_n(\bm{x}+t\bm{v})|^2}  \partial_1 U_n(\bm{x}+t\bm{v})
\right)
 \right|_{t=0}\\
=\frac{2}{|\nabla U_n(\bm{x})|^2}\left\{
{ \left( \bm{v} , \nabla{U}^2_n(\bm{x}) \bm{v} \right) \partial_1 U_n(\bm{x})
+ \left(\nabla U_n(\bm{x}), \bm{v} \right)\sum_{j=1}^n \partial^2_{j,1} U_n(\bm{x}) v_j}
\right\}\\
-
\frac{1}{|\nabla U_n(\bm{x})|^4}\left\{
{ 2 \partial_1 U_n(\bm{x}) \left(\nabla U_n(\bm{x}), \nabla^2 U_n(\bm{x}) \bm{v} \right) \left(\nabla U_n(\bm{x}), \bm{v}\right)  }
\right\} ,
\end{multline*}
{and thus by Assumption~\ref{ass:potential}} 
{
\begin{align*}
\left|\mathfrak{U}(\bm{x},\bm{v}) \right|
&\leq \frac{2}{|\nabla U_n(\bm{x})|^2}\left\{
M|\nabla U_n(\bm{x})| |\bm{v}|^2+ M |\bm{v}|^2 |\nabla U_n(\bm{x})|
\right\}\\
&\qquad +
\frac{1}{|\nabla U_n(\bm{x})|^4}\left\{
2 M |\nabla U_n(\bm{x})|^3 |\bm{v}|^2
\right\},
\end{align*}
}
whence
\begin{multline*}
\left|\mathfrak{U}(\bm{x},\bm{v}) \max\{0, ( \nabla U_n(\bm{x}), \bm{v}))\} \right|
\leq \frac{C |\bm{v}|}{|\nabla U_n(\bm{x})|}\left\{
{M|\nabla U_n(\bm{x})| |\bm{v}|^2+ M |\bm{v}|^2 |\nabla U_n(\bm{x})|}
\right\}\\
+
\frac{ C|\bm{v}|}{|\nabla U_n(\bm{x})|^3}\left\{
M |\nabla U_n(\bm{x})|^3 |\bm{v}|^2
\right\}\leq {C M |\bm{v}|^3}.
\end{multline*}
Thus overall,
\begin{align*}
\left|\frac{\rd }{\rd t} \mathfrak{R}f\left(\bm{x}+t\bm{v},\bm{v}\right)\!\Big|_{t=0} \max\{0, ( \nabla U_n(\bm{x}), \bm{v}))\} \right|
&\leq \|\partial_x f\|_\infty |\bm{v}|^2 |\nabla U_n(\bm{x})|
+ C M\|\partial_v f\|_\infty |\bm{v}|^3.
\end{align*}
On the other hand
\begin{align*}
\left|\left.\frac{\rd }{\rd t} f\left(\bm{x}+t\bm{v},\bm{v}\right) \right|_{t=0}
 \max\{0, ( \nabla U_n(\bm{x}), \bm{v}))\}\right|
 &\leq \|\partial_x f\|_\infty |\nabla U(\bm{x})| |\bm{v}|^2.
\end{align*}
Thus overall we have that, using the fact that $(V_1, \dots, V_n)$ are i.i.d.\ standard Gaussians and Lemma~\ref{lemma:BPSexpectednbofbouncesupper} in the Appendix
\begin{align*}
\pi[|J_2|]
&\leq C \rE\left[|\nabla U_n(\bm{X})|\right]\rE\left[| \bm{V}^2| \right]
	+C M\rE\left[| \bm{V}|^3 \right]\\
&\leq {CM^{1/2} n^{3/2}
	+C M n^{3/2} = O(Mn^{3/2})}
\end{align*}
and thus we have that $\pi \left[ \left| \AAA_n^{(1)} \AAA_n^{(2)}f \right|\right] = O(Mn^{3/2})$.

For the final term, since $Qf(\bm{x},\bm{v}) = Qf(x_1, v_1)$ we have
\begin{align*}
 \AAA_n^{(1)} \AAA_n^{(3)}f (\bm{x},\bm{v})
 &= \left.\frac{\rd }{\rd t} \AAA_n^{(3)}f (\bm{x}+t\bm{v},\bm{v})\right|_{t=0}\\
 &=\lref \left.\frac{\rd }{\rd t} \left[ Qf(x_1+tv_1,v_1)-f(x_1+t v_1,v_1)\right]\right|_{t=0}\\
 &=\lref \left[ Q(\partial_x f)(x_1+tv_1,v_1)-\partial_x f(x_1,v_1)\right]v_1,
\end{align*}
by an application of dominated convergence. We can easily see from the above that
$\pi \big[ \AAA_n^{(1)} \AAA_n^{(3)}f \big] =O(1)$ as $n\to \infty$.

\paragraph{Case $i=2$.}
For the first term $\AAA_n^{(2)} \AAA_n^{(1)}f$, notice first that since $f(\bm{x},\bm{v}) = f(x_1, v_1)$ we have
$$\AAA_n^{(1)}f(\bm{x}, \bm{v}) = \partial_x f(x_1,v_1) v_1=:h(x_1, v_1).$$
Therefore
\begin{align*}
\mathfrak{R}_n h(\bm{x},\bm{v})
&= \partial_x f\left(x_1,v_1- 2 \frac{( \nabla U_n(\bm{x}), \bm{v})}{| \nabla U_n(\bm{x})|^2} \partial_1  U_n(\bm{x})\right)
\left(v_1- 2 \frac{( \nabla U_n(\bm{x}), \bm{v})}{| \nabla U_n(\bm{x})|^2} {\partial_1  U_n(\bm{x})}\right),
\end{align*}
whence
\begin{align*}
\mathfrak{R}_n h(\bm{x},\bm{v})- h(\bm{x},\bm{v})
&= v_1 \left[ \mathfrak{R}_n \partial_xf(\bm{x},\bm{v})- \partial_x f(\bm{x},\bm{v})\right] - 2  \mathfrak{R}_n \partial_x f(\bm{x},\bm{v}) \frac{( \nabla U_n(\bm{x}), \bm{v})}{| \nabla U_n(\bm{x})|^2} \partial_1  U_n(\bm{x}),
\end{align*}
and thus
\begin{align*}
 \rE\left|\AAA_n^{(2)} \AAA_n^{(1)}f (\bm{X},\bm{V})\right|
&\leq  \rE\left[ \left|  (\nabla U_n(\bm{X}), \bm{V})\right|\times  |V_1| \times \left|\mathfrak{R}_n \partial_xf(\bm{X},\bm{V})- \partial_x f(\bm{X},\bm{V}) \right|    \right]\\
&\qquad + 2\rE\left[ \left|  (\nabla U_n(\bm{X}), \bm{V})\right| \times  \left|\mathfrak{R}_n \partial_x f(\bm{X},\bm{V}) \frac{( \nabla U_n(\bm{X}), \bm{V})}{| \nabla U_n(\bm{X})|^2} {\partial_1  U_n(\bm{x})} \right|    \right]\\
&\leq\left( \|\mathfrak{R}_n \partial_x f\|_\infty  + \|\partial_x f\|_\infty \right)
{ \rE\left[ |\bm{V}| \times |V_1|\right] \rE \left[ |\nabla U_n(\bm{X})|   \right]}\\
&\qquad + 2\|\mathfrak{R}_n \partial_x f\|_\infty \rE\left[ \frac{( \nabla U_n(\bm{X}), \bm{V}) ^2}{| \nabla U_n(\bm{X})|^2} \left|{\partial_1  U_n(\bm{x})}\right| \right]\\
&\leq\left( \|\mathfrak{R}_n \partial_x f\|_\infty  + \|\partial_x f\|_\infty \right)
{ \rE\left[ |\bm{V}| \times |V_1|\right] \rE \left[ |\nabla U_n(\bm{X})|   \right]}\\
&\qquad + 2\|\mathfrak{R}_n \partial_x f\|_\infty \rE\left[  \left|{\partial_1  U_n(\bm{X})}\right| \right],
\end{align*}
where for the second term we used the tower law and the fact that conditionally on $\bm{X}$, $(\nabla U_n(\bm{X}), \bm{V})$ is Gaussian with mean 0 and variance $|\nabla U_n(\bm{X})|^2$.
Using the Cauchy-Schwarz inequality and Lemma~\ref{lemma:BPSexpectednbofbouncesupper} from the Appendix we have
\begin{align*}
 \rE\left|\AAA_n^{(2)} \AAA_n^{(1)}f (\bm{X},\bm{V})\right|
&\leq\left( \|\mathfrak{R}_n \partial_x f\|_\infty  + \|\partial_x f\|_\infty \right)
{C M\sqrt{n}  \rE\left[ |V_1|^2\right]^{1/2} \rE[|\bm{V}|^2]^{1/2} }\\
&\qquad + 2\|\mathfrak{R}_n \partial_x f\|_\infty   \rE\left[\left|\nabla U_n(\bm{X}) \right| \right] = {O(Mn)}.
\end{align*}

For the next term  $\AAA_n^{(2)} \AAA_n^{(2)}f$ first we write
\begin{align*}
\AAA_n^{(2)}  \AAA_n^{(2)} f (\bm{x},\bm{v})
&= \max\left\{ 0, ( \nabla U_n(\bm{x}), \bm{v}) \right\}
  \left[\mathfrak{R}_n \AAA_n^{(2)} f (\bm{x},\bm{v}) - \AAA_n^{(2)} f (\bm{x},\bm{v})  \right].
\end{align*}
Then notice that
\begin{align*}
\mathfrak{R}_n \AAA_n^{(2)} f (\bm{x},\bm{v})
&= \max \left\{0, \left( \nabla U_n(\bm{x}), \bm{v}- 2 \frac{( \nabla U_n(\bm{x}), \bm{v})}{| \nabla U_n(\bm{x})|^2} \nabla U_n(\bm{x}) \right) \right\}\\
&\qquad \times
\left[\mathfrak{R}_n f\left( x_1, v_1-  2 \frac{( \nabla U_n(\bm{x}), \bm{v})}{| \nabla U_n(\bm{x})|^2} {\partial_1  U_n(\bm{X})}\right) - f\left( x_1, v_1-  2 \frac{( \nabla U_n(\bm{x}), \bm{v})}{| \nabla U_n(\bm{x})|^2} {\partial_1  U_n(\bm{X})}\right) \right]
\\
&= \max \left\{0, \left( \nabla U_n(\bm{x}), -\bm{v}\right) \right\}\\
&\qquad \times
\left[\mathfrak{R}_n f\left( x_1, v_1-  2 \frac{( \nabla U_n(\bm{x}), \bm{v})}{| \nabla U_n(\bm{x})|^2} {\partial_1  U_n(\bm{X})}\right) - f\left( x_1, v_1-  2 \frac{( \nabla U_n(\bm{x}), \bm{v})}{| \nabla U_n(\bm{x})|^2} {\partial_1  U_n(\bm{X})}\right) \right],
\end{align*}
and therefore that
\begin{align*}
\left|\mathfrak{R}_n \AAA_n^{(2)} f (\bm{x},\bm{v}) - \AAA_n^{(2)} f (\bm{x},\bm{v})\right|
&\leq 2 \|f\|_\infty  \left| \left( \nabla U_n(\bm{x}), \bm{v}\right) \right|\\
\left|\AAA_n^{(2)}\AAA_n^{(2)} f (\bm{x},\bm{v})\right|
&\leq 2 \|f\|_\infty  \left( \nabla U_n(\bm{x}), \bm{v}\right) ^2.
\end{align*}
Thus
\begin{align*}
\rE\left|\AAA_n^{(2)}  \AAA_n^{(2)} f (\bm{X},\bm{V})\right|
&\leq C \|f\|_\infty \rE\left[ \left( \nabla U_n(\bm{X}), \bm{V}\right) ^2 \right]\\
&\leq C \|f\|_\infty \rE\left\{ \rE\left[ \left. \left( \nabla U_n(\bm{X}), \bm{V}\right) ^2 \right| \bm{X} \right] \right\}
\intertext{using the fact that conditionally on $\bm{X}$, $(\nabla U_n(\bm{X}), \bm{V})$ is Gaussian}
&= C \|f\|_\infty \rE\left\{ |\nabla U_n(\bm{X})|^2\right\} =  {O(Mn)}
\end{align*}
from Lemma~\ref{lemma:BPSexpectednbofbouncesupper} in the Appendix.

Next we consider the term $\AAA_n^{(2)} \AAA_n^{(3)}f$. Since $f$ is bounded, it easily follows that $\AAA_n^{(3)}f$ is also bounded and therefore that
\begin{align*}
 \left|\AAA_n^{(2)} \AAA_n^{(3)}f (\bm{x}, \bm{v})\right|
 &= \max\left\{ 0, ( \nabla U_n(\bm{x}), \bm{v}) \right\}
    \left|\mathfrak{R}_n \AAA_n^{(3)} f (\bm{x},\bm{v}) - \AAA_n^{(3)} f (\bm{x},\bm{v})  \right|\\
 &\leq 2 \lref \|f\|_\infty    \max\left\{ 0, ( \nabla U_n(\bm{x}), \bm{v}) \right\}.
\end{align*}
 Therefore
 \begin{align*}
  \rE \left|\AAA_n^{(2)} \AAA_n^{(3)}f (\bm{X}, \bm{V}) \right|
  &\leq C \rE\left| ( \nabla U_n(\bm{X}), \bm{V}) \right|
  \leq C \rE\left[ ( \nabla U_n(\bm{X}), \bm{V})^2 \right]^{1/2}
  = {O(M^{1/2}n^{1/2})},
 \end{align*}
{from Lemma~\ref{lemma:BPSexpectednbofbouncesupper} and calculations similar to the previous term.}
\paragraph{Case $i=3$.} The first term to consider is
\begin{align*}
 \AAA_n^{(3)} \AAA_n^{(1)}f(\bm{x}, \bm{v})
 &= \lref \left[Q \AAA_n^{(1)}f(\bm{x},\bm{v})-\AAA_n^{(1)}f(\bm{x},\bm{v}) \right] \\
 &= \lref  \int \left[\AAA_n^{(1)}f(x_1,\alpha v_1+\sqrt{1-\alpha^2}\xi)-\AAA_n^{(1)}f(\bm{x},\bm{v}) \right]\phi(\xi) \rd \xi\\
 &= \lref  \int \left[\partial_x f(x_1,\alpha v_1+\sqrt{1-\alpha^2}\xi)\left(\alpha v_1+\sqrt{1-\alpha^2}\xi\right)-\partial_x f(x_1,v_1)v_1 \right]\phi(\xi) \rd \xi,
\end{align*}
where $\phi$ denotes the standard normal density.
Since $\|\partial_x f\|_\infty <\infty$ we have
\begin{align*}
  \rE\left|\AAA_n^{(3)} \AAA_n^{(1)}f(\bm{X}, \bm{V})\right|
 &\leq \lref\|\partial_x f\|_\infty \rE\left[\left|\alpha V_1 + \sqrt{1-\alpha^2}\xi\right| + |V_1|\right] =O(1),
\end{align*}
as $n\to \infty$.

For the second term we have, using Jensen's inequality on the Markov kernel $Q$,
\begin{align*}
\rE\left|\AAA_n^{(3)} \AAA_n^{(2)}f(\bm{X},\bm{V})\right|
&\leq \lref \rE\left[\left| Q \AAA_n^{(2)}f(\bm{X},\bm{V}) \right|\right] + \lref \rE\left[\left|\AAA_n^{(2)}f(\bm{X},\bm{V})\right| \right] \\
&\leq \lref \rE\left[ Q \left(\big|\AAA_n^{(2)}f\big|\right)(\bm{X},\bm{V}) \right] + \lref \rE\left[\left|\AAA_n^{(2)}f(\bm{X},\bm{V})\right| \right].
\end{align*}
At this point notice that $Q$ is $\pi_n$-invariant and therefore
$$\rE\left[ Q \left(\big|\AAA_n^{(2)}f\big|\right)(\bm{X},\bm{V}) \right] = \rE\left[ \big|\AAA_n^{(2)}f (\bm{X},\bm{V})\big| \right],$$
whence we conclude that
\begin{align*}
\rE\left|\AAA_n^{(3)} \AAA_n^{(2)}f(\bm{X},\bm{V})\right|
&\leq 2\lref \rE\left[\left|\AAA_n^{(2)}f(\bm{X},\bm{V})\right| \right]\\
&\leq 4 \lref \|f\|_\infty \rE\left[\left|( \nabla U_n(\bm{X}), \bm{V}) \right|\right]\\
&= 4\sqrt{\frac{2}{\pi}} \lref \|f\|_\infty \rE\left[ |\nabla U_n(\bm{X})|\right]
={O(M^{1/2}n^{1/2})},
\end{align*}
using Lemma~\ref{lemma:BPSexpectednbofbouncesupper} and the fact that conditionally on $\bm{X}$, $(\nabla U_n(\bm{X}), \bm{V})$ is a mean zero Gaussian with variance $|\nabla U_n(\bm{X})|^2$.

Finally, by similar arguments as above the last term is given by
\begin{align*}
\rE\left|\AAA_n^{(3)} \AAA_n^{(3)}f(\bm{X},\bm{V})\right|
&\leq 2\lref \rE\left[\left|\AAA_n^{(3)}f(\bm{X},\bm{V})\right| \right]\\
&\leq 4\lref^2 \|f\|_\infty = O(1).
\end{align*}
Overall we have shown that the error term defined in \eqref{eq:Rndef} satisfies
\begin{equation}\label{eq:bndRn}\mathcal{R}_n= \frac{\epsilon_n}{2} \rE\left[ \left| \mathcal{A}_n \mathcal{A}_n f \left( \bm{Z}_n (0)\right) \right| \right]
= {O(M n^{3/2} \epsilon_n)} = o(1),
\end{equation}
since we have chosen $\epsilon_n$ such that $\epsilon_n n^2\to 0$, as $n\to \infty$.
\paragraph{Main term.}
Having controlled the error term, we now focus on the main term given by
$$\rE\left[ \left|\rE\left[\left. \widetilde{\AAA}_n f \left(\ZZZ_n(0) \right)\right| \GG_0^n\right]- \mathcal{A}f\left( Z_n^{(1)}(0) \right)\right| \right],$$
where we recall that $\widetilde{\AAA}_n$ is the extended generator.
Notice that for $f(\bm{x},\bm{v}) = f(x_1, v_1)$,
\begin{align*}
 \mathcal{\AAA}_n f\left(\bm{x},\bm{v} \right)
 &= \partial_x f(x_1, v_1) v_1 + \max\left\{ 0, ( \nabla U_n(\bm{x}), \bm{v}) \right\} \left[ \mathfrak{R}_n f(\bm{x}, \bm{v})-f(\bm{x}, \bm{v})\right] + \lref\left[ Q f(x_1, v_1) - f(x_1,v_1)\right]\\
\mathcal{\AAA} f\left(x_1,v_1 \right)
 &= \partial_x f(x_1, v_1) v_1 - \partial_v f(x_1,v_1) W'(x_1) + \lref\left[ Q f(x_1, v_1) - f(x_1, v_1)\right],
\end{align*}
and thus the first and third terms are in fact identical and will cancel out. We thus only have to consider the difference of the second terms. We apply a first order Taylor expansion
\begin{align*}
\lefteqn{\rE\left[ \left.  \max\left\{ 0, ( \nabla U_n(\bm{X}), \bm{V}) \right\} \left[ \mathfrak{R}_n f(\bm{X}, \bm{V})-f(\bm{X}, \bm{V})\right] \right| \GG_0^n \right] }\\
&= \rE\Bigg[   \max\left\{ 0, (\nabla U_n(\bm{X}), \bm{V}) \right\}\\
&\qquad \times\left[ f\left(X_1, V_1 - 2 \frac{(\nabla U_n(\bm{X}), \bm{V})}{|\nabla U_n(\bm{X})|^2}{\partial_1 U_n(\bm{X})}\right)
	-f(X_1, V_1)\right] \Bigg| \GG_0^n \Bigg] \\
&= \rE\Bigg[  \max\left\{ 0, ( \nabla U_n(\bm{X}), \bm{V}) \right\}\\
&\qquad \times
    \partial_v f(X_1, V_1) \left\{-2 \frac{(\nabla U_n(\bm{X}), \bm{V})}{|\nabla U_n(\bm{X})|^2}{\partial_1 U_n(\bm{X})}\right\}\Bigg| \GG_0^n \Bigg] + \mathcal{E}_1,
\end{align*}
where $\mathcal{E}_1$ is the remainder. At this point notice that, by the tower law and the fact that $\left(\nabla U_n(\bm{X}), \bm{V}\right)$ is Gaussian conditionally on $\bm{X}$,
\begin{align}
\nonumber \rE | \mathcal{E}_1|
&\leq \|\partial_v f\|_\infty \rE\left[ \frac{\left|\left(\nabla U_n(\bm{X}), \bm{V}\right)\right|^3
|\partial_1 U_n(\bm{X}|)}{|\nabla U_n(\bm{X})|^4} \right]\\
\nonumber&= \|\partial_v f\|_\infty \rE \left\{ \frac{|\partial_1 U_n(\bm{X})|}
{|\nabla U_n(\bm{X})|^4}  \rE\left[ \left.
\left|\left(\nabla U_n(\bm{X}), \bm{V}\right)\right|^3
\right| \bm{X}\right] \right\}\\
\nonumber
&\leq C\|\partial_v f\|_\infty \rE \left\{ \frac{|\partial_1 U_n(\bm{X})| |\nabla U_n(\bm{X})|^{3/2}}
{|\nabla U_n(\bm{X})|^4}   \right\}\\
\nonumber &\leq C\|\partial_v f\|_\infty \rE \left\{ \frac{ |\nabla U_n(\bm{X})|^{5/2}}
{|\nabla U_n(\bm{X})|^4}   \right\}
= C\|\partial_v f\|_\infty \rE \left\{ \frac{ 1}
{|\nabla U_n(\bm{X})|^{3/2}}   \right\}\\
&\leq {C\|\partial_v f\|_\infty
\left[\frac{1}{(nm)^{3/2}}+\left(\frac{\sqrt{M} }{m}\right)^{3/2} \exp\left(-\frac{n m^2}{4 M^2}\right)+\frac{m^{3/2}}{2^n M^{3/4} n}\right]}\label{eq:bndeps1}
\end{align}
by Lemma~\ref{lemma:inversemoment} in the Appendix, which tends to 0 as $n\to \infty$.


Finally, having controlled the error terms,
to complete the proof of \eqref{eq:fourthcondition}, it remains to show that the following term vanishes
\begin{align*}
&\rE_{\pi} \bigg[\left|\partial_v f(X_1, V_1)\, \right|\times \\
&\qquad \qquad \bigg| \rE\bigg[  \max\bigg\{ 0, {\left(\grad U_n(\bm{X}), \bm{V} \right)} \bigg\}
 \left( \frac{-2 {\left(\grad U_n(\bm{X}), \bm{V} \right)}}
 {{|\grad U_n(\bm{X})|^2}}\right) {\partial_1 U_n(\bm{X})}\bigg| X_1, V_1 \bigg] - W'(X_1) \bigg| \bigg].
\end{align*}
{First notice that, since $V_2, \dots, V_n$ are independent of $V_1$ and $\bm{X}$, we can write
\begin{align*}
I(X_1, V_1)
&:=\rE\bigg[  \max\bigg\{ 0, {\left(\grad U_n(\bm{X}), \bm{V} \right)} \bigg\}
 \left( \frac{{\left(\grad U_n(\bm{X}), \bm{V} \right)}}
 {{|\grad U_n(\bm{X})|^2}}\right) {\partial_1 U_n(\bm{X})}\bigg| X_1, V_1 \bigg]
\\
&=\rE\left\{ \frac{\partial_1 U_n(\bm{X})}
 {|\grad U_n(\bm{X})|^2}  \rE\left[ \left.
\max\left\{ 0, \left(\grad U_n(\bm{X}), \bm{V} \right)  \right\}^2
 \right| \bm{X}, V_1\right]
 \bigg| X_1, V_1 \right\}\\
 &= \rE\left\{ \frac{\partial_1 U_n(\bm{X})}
 {|\grad U_n(\bm{X})|^2} \rE\left[ \left.
\max\left\{ 0, \partial_1 U_n(\bm{X})V_1 + \sqrt{\sum_{j=2}^n [\partial_j U_n(\bm{X})]^2}\times \xi \right\}^2
 \right| \bm{X}, V_1\right]
 \bigg| X_1, V_1 \right\}\\
 &= \rE\left\{ \frac{\partial_1 U_n(\bm{X})}
 {|\grad U_n(\bm{X})|^2}
\max\left\{ 0, \partial_1 U_n(\bm{X})V_1 + \sqrt{\sum_{j=2}^n [\partial_j U_n(\bm{X})]^2}\times \xi \right\}^2
 \bigg| X_1, V_1 \right\},
\end{align*}
where $\xi$ is a standard Gaussian random variable, independent from $\bm{X}$ and $V_1$.
Continuing we have
\begin{align*}
I(X_1, V_1)
 &= \rE\left\{ \frac{\partial_1 U_n(\bm{X})}
 {|\grad U_n(\bm{X})|^2}
\max\left\{ 0, \sqrt{\sum_{j=2}^n [\partial_j U_n(\bm{X})]^2}\times \xi \right\}^2
 \bigg| X_1, V_1 \right\} + \mathcal{E}_2(X_1, V_1)
\end{align*}
where
\begin{align*}
\mathcal{E}_2(X_1, V_1)
&\leq C\rE\left\{ \frac{ |\partial_1 U_n(\bm{X})|^3}
 {|\grad U_n(\bm{X})|^2}
 \bigg| X_1, V_1 \right\} +
C \rE\left\{ \frac{|\partial_1 U_n(\bm{X})|^2}
 {|\grad U_n(\bm{X})|}
 \bigg| X_1, V_1 \right\}\\
 &=: \mathcal{E}_{2,1}(X_1, V_1) + \mathcal{E}_{2,2}(X_1, V_1).
\end{align*}
We control the first term using the Cauchy-Schwarz inequality as follows
\begin{align}
\nonumber\rE[\mathcal{E}_{2,1}(X_1, V_1)]
&\leq C \rE\left\{ \frac{ |\partial_1 U_n(\bm{X})|^4}
 {|\grad U_n(\bm{X})|^4}
  \right\}^{1/2}
 \rE\left\{ |\partial_1 U_n(\bm{X})|^2\right\}^{1/2}
 \intertext{and since $|\partial_1 U_n(\bm{x})|^2/|\nabla U_n(\bm{x})|^2 \leq 1$}
\nonumber&\leq  C \rE\left\{ \frac{ |\partial_1 U_n(\bm{X})|}
 {|\grad U_n(\bm{X})|} \right\}^{1/2}
 \rE\left\{ |\partial_1 U_n(\bm{X})|^2\right\}^{1/2}\\
 &\leq CM^{1/2}\left[
 \frac{M^2}{m^2 \sqrt{n}}+ \frac{M^2 \rE|X_1|}{m^{3/2}n^{1/2}}+ \frac{M}{m}\sqrt{\frac{\log n}{n}} + \frac{1}{n}\right]^{1/2}\label{eq:bndeps21}
\end{align}
by Lemmas~\ref{lem:Unpartial1divbynorm}, \ref{lemma:BPSexpectednbofbouncesupper} in the Appendix, where we used the fact that by Assumption~\ref{ass:potential} we have that $\rE|X_1|<\infty$ (this follows for example by the $\mathbb{L}^1$ Poincar\'e inequality applied on the function $f(X)=X_1$, see Corollary 1.9 of \cite{BakrySimpleProof}).
}

{
For the second error term we have, again using the Cauchy-Schwarz inequality
\begin{align}
\nonumber\rE \left[ \mathcal{E}_{2,2}(X_1, V_1)\right]
&= C \rE\left\{ \frac{|\partial_1 U_n(\bm{X})|^2}
 {|\grad U_n(\bm{X})|}
  \right\}\\
\nonumber&\leq  \rE\left\{ \frac{ |\partial_1 U_n(\bm{X})|^2}
 {|\grad U_n(\bm{X})|^2}
  \right\}^{1/2}
 \rE\left\{ |\partial_1 U_n(\bm{X})|^2\right\}^{1/2}\\
\nonumber&\leq  C \rE\left\{ \frac{ |\partial_1 U_n(\bm{X})|}
 {|\grad U_n(\bm{X})|} \right\}^{1/2}
 \rE\left\{ |\partial_1 U_n(\bm{X})|^2\right\}^{1/2}\\
 &\leq CM^{1/2}\left[
 \frac{M^2}{m^2 \sqrt{n}}+ \frac{M^2 \rE|X_1|}{m^{3/2}n^{1/2}}+ \frac{M}{m}\sqrt{\frac{\log n}{n}} + \frac{1}{n}\right]^{1/2}\label{eq:bndeps22}
 \end{align}
as before.
}

{Finally notice that
\begin{align*}
\lefteqn{\rE\bigg[  \max\bigg\{ 0, \sum_{j=2}^n \partial_j U_n(\bm{X}) V_j \bigg\}
 \left( \frac{-2 \sum_{j=2}^n \partial_j U_n(\bm{X}) V_j}
 {{|\grad U_n(\bm{X})|^2}}\right) {\partial_1 U_n(\bm{X})}\bigg| X_1, V_1 \bigg]}\\
 &=-2\rE\bigg\{ \rE\left[\left.  \frac{\max\{0, \sum_{j=2}^n \partial_j U_n(\bm{X}) V_j\}^2}
 {{|\grad U_n(\bm{X})|^2}} {\partial_1 U_n(\bm{X})} \right| \bm{X} \right]\bigg| X_1, V_1 \bigg\}\\
&=-2\rE\left\{ \left. \rE\left[\left. \mathbb{1}\{\xi >0\}
 \frac{ \left(\sum_{j=2}^n [\partial_j U_n(\bm{X})]^2 \right) \xi^2 }
 { |\grad U_n(\bm{X})|^2 } \partial_1 U_n(\bm{X}) \right| \bm{X} \right]\right| X_1, V_1 \right\}\\
 \intertext{where $\xi$ is an independent standard Gaussian}
&=-\rE\left\{ \left. {\partial_1 U_n(\bm{X})} \frac{ \left(\sum_{j=2}^n [\partial_j U_n(\bm{X})]^2 \right)  }
 { |\grad U_n(\bm{X})|^2 } \right| X_1, V_1 \right\}\\
&=-\rE\bigg\{ \partial_1 U_n(\bm{X}) \bigg| X_1, V_1 \bigg\} + \mathcal{E}_3(X_1, V_1),
\end{align*}
where
\begin{align}
\nonumber\rE[|\mathcal{E}_3(X_1, V_1)|]
&\leq \rE \left[ \frac{|\partial_1 U_n(\bm{X})|^3}{|\nabla U_n(\bm{X})|^2} \right] \\
&\leq CM^{1/2}\left[
 \frac{M^2}{m^2 \sqrt{n}}+ \frac{M^2 \rE|X_1|}{m^{3/2}n^{1/2}}+ \frac{M}{m}\sqrt{\frac{\log n}{n}} + \frac{1}{n}\right]^{1/2}, \label{eq:bndeps3}
\end{align}
by calculations similar to those for the error term $\mathcal{E}_{2,1}$.
Finally
\begin{align*}
-\rE\bigg\{{\partial_1 U_n(\bm{X})} \bigg| X_1, V_1 \bigg\}
&=-\frac{\int \frac{\partial}{\partial x_1} U_n(x_1; x_{2:n}) \re^{-U_n(x_1; x_{2:n})} \rd x_{2:n}}
{\int \re^{-U_n(x_1; x_{2:n})} \rd x_{2:n}}\\
&=-\frac{\frac{\partial}{\partial x_1}\int  U_n(x_1; x_{2:n}) \re^{-U_n(x_1; x_{2:n})} \rd x_{2:n}}
{\int \re^{-U_n(x_1; x_{2:n})} \rd x_{2:n}}\\
&=-\frac{\frac{\partial}{\partial x_1}\int   \re^{-U_n(x_1; x_{2:n})} \rd x_{2:n}}
{\int \re^{-U_n(x_1; x_{2:n})} \rd x_{2:n}}
=-\frac{\partial}{\partial x_1} \log \int   \re^{-U_n(x_1; x_{2:n})} \rd x_{2:n}\\
&=-\frac{\partial}{\partial x_1} \log \re^{-W(x_1)} = W'(x_1).
\end{align*}
}

{Overall we have shown that
\begin{align*}
\rE\bigg[  \max\bigg\{ 0, {\left(\grad U_n(\bm{X}), \bm{V} \right)} \bigg\}
 \left( \frac{-2 {\left(\grad U_n(\bm{X}), \bm{V} \right)}}
 {{|\grad U_n(\bm{X})|^2}}\right) {\partial_1 U_n(\bm{X})}\bigg| X_1, V_1 \bigg]
 &= W'(X_1) + \mathcal{E}_2+\mathcal{E}_3,
\end{align*}
where $\rE[ |\mathcal{E}_2|], \rE[ |\mathcal{E}_3|]\to 0$ as $n\to \infty$. Therefore we have
\begin{align*}
&\rE_{\pi} \bigg[\left|\partial_v f(X_1, V_1)\, \right|\times \\
&\qquad \qquad \bigg| \rE\bigg[  \max\bigg\{ 0, {\left(\grad U_n(\bm{X}), \bm{V} \right)} \bigg\}
 \left( \frac{-2 {\left(\grad U_n(\bm{X}), \bm{V} \right)}}
 {{|\grad U_n(\bm{X})|^2}}\right) {\partial_1 U_n(\bm{X})}\bigg| X_1, V_1 \bigg] - W'(X_1) \bigg| \bigg] \to 0
\end{align*}
as $n \to \infty$.
}

\subsubsection{Proof of \eqref{eq:834}.}
Next we need to verify \eqref{eq:834} for some $p>1$ for which we proceed as follows
\begin{align*}
\mathbb{E}\left[\left(\int_{0}^{T}|\phi_{n}(t)|^{p}\mathrm{d}t\right)^{1/p}\right]^p
&\leq\mathbb{E}\left[\int_{0}^{T}|\phi_{n}(t)|^{p}\mathrm{d}t\right]=\int_{0}^{T}\mathbb{E}\left[|\phi_{n}(t)|^{p}\right]\mathrm{d}t\\
&=\int_{0}^{T}\mathbb{E}\left[\left|
	\epsilon_n^{-1} \rE\left\{\left. f\left(Z_n^{(1)}(t+\epsilon_n)\right)- f\left(Z_n^{(1)}(t)\right)\right| \mathcal{G}_t^n \right\}
				\right|^{p}\right]\mathrm{d}t\\
&=\int_{0}^{T}\mathbb{E}\left[\left|
	\epsilon_n^{-1} \rE\left\{\left.
	 \int_0^{\epsilon_n} \left(\widetilde{\mathcal{A}}_n f\left(Z_n^{(1)}(t+s) \right) + R_{t+s} \right) \rd s
	\right| \mathcal{G}_t^n \right\}
				\right|^{p}\right]\mathrm{d}t\\
\intertext{and using the fact that $\rE[ R_{t+s} \mid \mathcal{G}_t^n] = \rE[ \rE[ R_{t+s} \mid \mathcal{F}_t^n ] \mid \mathcal{G}_t^n]=0$}
&=\int_{0}^{T}\mathbb{E}\left[\left|
	\epsilon_n^{-1} \int_0^{\epsilon_n}\rE\left\{\left.
	 \widetilde{\mathcal{A}}_n f\left(Z_n^{(1)}(t+s) \right)
	\right| \mathcal{G}_t^n \right\}\rd s
				\right|^{p}\right]\mathrm{d}t\\
\intertext{and by Jensen's inequality}
&\leq\int_{0}^{T}\mathbb{E}\left[
	\epsilon_n^{-1} \int_0^{\epsilon_n}\rE\left\{\left. \left|
	 \widetilde{\mathcal{A}}_n f\left(Z_n^{(1)}(t+s) \right) \right|^{p}
	\right| \mathcal{G}_t^n \right\}\rd s
				\right]\mathrm{d}t\\
&=	\int_{0}^{T}\epsilon_n^{-1} \int_0^{\epsilon_n}\mathbb{E}\left[
	\rE\left\{\left. \left|
	 \widetilde{\mathcal{A}}_n f\left(Z_n^{(1)}(t+s) \right) \right|^{p}
	\right| \mathcal{G}_t^n \right\}
				\right]\rd s\mathrm{d}t\\
&=	\int_{0}^{T}\epsilon_n^{-1} \int_0^{\epsilon_n}\mathbb{E}\left[
	 \left|
	 \widetilde{\mathcal{A}}_n f\left(Z_n^{(1)}(t+s) \right) \right|^{p}
				\right]\rd s\mathrm{d}t\\
&=T	\mathbb{E}\left[
	 \left|
	 \widetilde{\mathcal{A}}_n f\left(Z_n^{(1)}(0) \right) \right|^{p}
				\right],								
\end{align*}
by stationarity.
Next recalling the decomposition of $\widetilde{\mathcal{A}_n}$ into $\mathcal{A}_n^{(i)}$, $i=1,2,3$ notice that
$$\sup_{x,v} \left|\mathcal{A}_n^{(1)}f(x,v)\right| = \sup_{x,v}\left|\partial_x f(x,v) v\right|<\infty,$$
since $(x,v)\mapsto \partial_x f (x,v) v$ is continuous and has compact support, since $f$ has compact support. Similarly it follows easily that $\|\mathcal{A}_n^{(3)}f\|_\infty <\infty$ and therefore the only term we have to control corresponds to $\mathcal{A}_n^{(2)}$. For this term notice that
{
\begin{align*}
\lefteqn{\mathbb{E}\left[
	 \left| \mathcal{A}_n^{(2)} f\left(Z_n^{(1)}(0) \right) \right|^{p}
				\right]}\\
&= 	\mathbb{E}\left[
	 \left| \max\left\{0, \left(\nabla U_n(\bm{X}, \bm{V}\right)\right\}
	 \left[ f\left(X_1, V_1 - 2 \frac{\left(\nabla U_n(\bm{X}, \bm{V}\right)}{|\nabla U_n(\bm{X})|^2} \partial_1 U_n(\bm{X}) \right) -f(X_1, V_1)\right] \right|^{p}
				\right]\\
&\leq 2^p\|\partial_v f\|_\infty
	\mathbb{E}\left[
	 \frac{ \left| \left(\nabla U_n(\bm{X}, \bm{V}\right) \right|^{2p} |\partial_1 U_n(\bm{X})|^p}
	 { \left| \nabla U_n(\bm{X}) \right|^{2p} }\right]\\
&\leq 2^p\|\partial_v f\|_\infty
	\mathbb{E}\left\{ \rE\left[ \left.
	 \frac{ \left| \left(\nabla U_n(\bm{X}, \bm{V}\right) \right|^{2p} |\partial_1 U_n(\bm{X})|^p}
	 { \left| \nabla U_n(\bm{X}) \right|^{2p} }
	 \right| \bm{X} \right]
	 \right\}\\
&\leq 2^p\|\partial_v f\|_\infty
	\mathbb{E}\left\{ \rE\left[ \left.
	 \frac{ \left| \nabla U_n(\bm{X}) \right|^{p} |\partial_1 U_n(\bm{X})|^p}
	 { \left| \nabla U_n(\bm{X}) \right|^{2p} }
	 \right| \bm{X} \right]
	 \right\}\leq {2^p\|\partial_v f\|_\infty = O(1).}
\end{align*}
}

\subsubsection{Proof of \eqref{eq:firstcondition} and \eqref{eq:secondcondition}}
Notice that \eqref{eq:firstcondition} follows immediately since $\|f\|_\infty < \infty$, whereas \eqref{eq:secondcondition}  follows from calculations similar to the ones used to prove
\eqref{eq:834}.
\section{Proofs of Wasserstein rates}
\subsection{Proof of Theorem~\ref{thm:Wasserstein}}

Let $\tx(t):=X^{(2)}(t)-X^{(1)}(t)$ and $\tv(t):=V^{(2)}(t)-V^{(1)}(t)$ denote the differences between the two paths in position and momentum. Ignoring for the moment the refreshment events, $(\tx(t), \tv(t))$ will evolve according to the Hamiltonian dynamics, that is
\begin{equation}\label{eq:Hamcouplingdynamicsforward}
\begin{aligned}
\tx'(t)&=\tv(t),\\
\tv'(t)&=-(\grad U( X^{(2)}(t) )-\grad U( X^{(1)}(t) ))=-H(t) \tx(t),\text{ where}\\
\mathcal{H}(t)&:=\int_{s=0}^{1} \grad^2 U(sX^{(1)}(t)+ (1-s)X^{(2)}(t)) \rd s.
\end{aligned}
\end{equation}
By convexity, we can see that $\mathcal{H}(t)$ satisfies that $m I\preceq \mathcal{H}(t)\preceq MI$  where $I$ denotes the identity matrix, where we write
$A\preceq B$ to denote that $B-A$ is positive definite. The effect of the generator $L_{1,2}$ on $|\tx(t)|^2$, $\langle \tx(t), \tv(t)\rangle$ and $|\tv(t)|^2$ is given by
\begin{equation}\label{eq:L12forward}
\begin{aligned}
L_{1,2} |\tx(t)|^2&=2\inner{\tx(t)}{\tv(t)},\\
L_{1,2}{\tx(t)}^T{\tv(t)}&=|\tv(t)|^2-{\tx(t)}^T{\mathcal{H}(t)\tx(t)}-\lref (1-\alpha){\tx(t)}^T{\tv(t)},\\
L_{1,2} |\tv(t)|^2&=-2{\tv(t)}^T{\mathcal{H}(t)\tx(t)}-\lref (1-\alpha^2)|\tv(t)|^2.
\end{aligned}
\end{equation}
	The claim of Theorem~\ref{thm:Wasserstein} is equivalent to showing that $-\mu\cdot d_A^2(Z_1(t),Z_2(t))-L_{1,2} d_A^2(Z_1(t),Z_2(t))\ge 0$. This can be expressed as
	\begin{align*}
	&-\mu\cdot d_A^2(Z_1(t),Z_2(t))-L_{1,2} d_A^2(Z_1(t),Z_2(t))\\
	&=
	-\mu a |\tx(t)|^2 + 2[-\mu b+\lref (1-\alpha)b -a]{\tx(t)}^T{\tv(t)}+[-c \mu+\lref (1-\alpha^2)c-2b]|\tv(t)|^2\\
	&+2b {\tx(t)}^T{\mathcal{H}(t)\tx(t)}+2c {\tv(t)}^T{\mathcal{H}(t)\tx(t)}.
	\end{align*}
	
	Let \begin{align*}
	X&:=\l(\begin{matrix}|\tx(t)|^2 &{\tx(t)}^T{\tv(t)}\\ {\tx(t)}^T{\tv(t)} & |\tv(t)|^2\end{matrix}\r), \quad
	P:=\l(\begin{matrix}{\tx(t)}^T{\mathcal{H}(t)\tx(t)} &{\tv(t)}^T{\mathcal{H}(t)\tx(t)}\\ {\tv(t)}^T{\mathcal{H}(t)\tx(t)} &{\tv(t)}^T{\mathcal{H}(t)\tv(t)} \end{matrix}\r),\\
	V&:=\l(\begin{matrix}-\mu a & - a+b \lref(1-\alpha)-\mu b\\- a+b \lref(1-\alpha)-\mu b & -c \mu+c \lref (1-\alpha^2)-2b \end{matrix}\r),
	\quad W:=\l(\begin{matrix}2 b & c\\c & 0 \end{matrix}\r).
	\end{align*}
	We have
	\[-\mu\cdot d_A^2(Z_1(t),Z_2(t))-L_{1,2} d_A^2(Z_1(t),Z_2(t))=\Tr(VX+WP),\]
	so our goal is to show that $\Tr(VX+WP)\ge 0$ for all the possible $X, P$.
	Using the fact that $m I\preceq \mathcal{H}(t)\preceq M I$, we have $0\preceq m X\preceq P\preceq M X$.
	Let $Y:=P-mX$, and $Z:=M X-P$, then $Y\succeq 0$, $Z\succeq 0$, and for $M>m$, we have
	\begin{align*}&X=\frac{Y+Z}{M-m}, \quad P=\frac{MY + m Z}{M-m},
	\intertext{ and hence }
	&\Tr(VX+WP)=\frac{1}{M-m}\l(\Tr((V+MW)Y+(V+mW)Z) \r).
	\end{align*}
	When $M=m$, we have $\mathcal{H}(t)=M I$ and $P=MX$, hence
	\[\Tr(VX+WP)=\Tr((V+MW) X).\]
	Note that in both cases, $\Tr(VX+WP)\ge 0$ if both $V+MW\succeq 0$ and $V+mW\succeq 0$. This can be equivalently written as the following set of inequalities,
	\begin{align}
	\label{ineq1}-\mu a+2M b&\ge 0,\\
	-\mu a+2m b&\ge 0,\\
	-c \mu+c \lref (1-\alpha^2)-2b&\ge 0,\\
	(-a+b \lref(1-\alpha)-\mu b +M c)^2 &\le  (-\mu a+2M b)(-c \mu+c \lref (1-\alpha^2)-2b),\\
	\label{ineq5}(-a+b \lref(1-\alpha)-\mu b +m c)^2 &\le  (-\mu a+2m b)(-c \mu+c \lref (1-\alpha^2)-2b).
	\end{align}
	These inequalities correspond to the diagonal elements and the determinants of $V+mW$ and $V+MW$ being non-negative. As we have stated, let $\lref=\frac{1}{1-\alpha^2}\l(2\sqrt{M+m}-\frac{(1-\alpha) m}{\sqrt{M+m}}\r)$, $\mu=\frac{(1+\alpha)m}{\sqrt{M+m}}-\frac{\alpha m^{3/2}}{2(M+m)}$. Moreover, 	let 	
	\begin{equation}\label{eq:Wassabcdef}
	\begin{split}
	a&:=1,\\
	b&:=\frac{1+\alpha-\alpha \l(\frac{m}{M+m}\r)^{3/4}+\frac{3}{4} \frac{\alpha m}{M+m}}{2\sqrt{M+m}},\\
	c&:=\frac{1+\alpha- \frac{\alpha }{2}\l(\frac{m}{M+m}\r)^{1/2} }{M+m}.
	\end{split}
	\end{equation}
	Notice that by the change of variables $m\to 1$, $M\to M/m$, and updating $a$, $b$, $c$ and $\mu$ and $\lambda$ with these new values, inequalities \eqref{ineq1}-\eqref{ineq5} are kept 		invariant (they have this homogeneity property). Hence, without loss of generality, we can assume that $m=1$.
	For the choice of $a,b,c$ as in \eqref{eq:Wassabcdef}, the five inequalities can be shown to hold for every possible $0\le \alpha<1$ and $M$ using for example Mathematica.
	Hence the bound \eqref{eq:Wassersteincontraction1} follows.
	
	Now we are going to show the Wasserstein bounds. Note that the matrix $A$ satisfies that $\lambda_{\min}(A)=\frac{a+c-\sqrt{(a+c)^2-4(ac-b^2)}}{2}$ and $\lambda_{\max}(A)=\frac{a+c+\sqrt{(a+c)^2-4(ac-b^2)}}{2}$, hence by defining
	\[W_{2,d_A}(\nu_1,\nu_2)=\l(\inf_{X_1\sim \nu_1,X_2\sim \nu_2} d_A(X_1,X_2)^2 \r)^{1/2},\] then using the assumption $b^2<ac$, we have $W_2(\nu P^t,\pi)^2\le \frac{1}{\lambda_{\min}(A)} W_{2,d_A}(\nu P^t,\pi)^2$.
Let $Z_1(0), Z_2(0)$ be coupled according to the optimal coupling of $\nu$ and $\pi$ according to $W_2$ distance satisfying that $\rE(|Z_1(0)-Z_2(0)|^2)=W_2(\nu,\pi)^2$ (existence is shown by Theorem 4.1 of \cite{Villani2008}). Using \eqref{eq:Wassersteincontraction1} along with Gr\"onwall's lemma, and the definition of the Wasserstein distance, it follows that
	\begin{align*}
	W_2(\nu P^t,\pi)^2&\le \frac{1}{\lambda_{\min}(A)} W_{2,d_A}(\nu P^t,\pi)^2\le \frac{1}{\lambda_{\min}(A)} \rE( d_A^2(Z_1(t),Z_2(t)))\\
	&\le   \frac{e^{-\mu t}}{\lambda_{\min}(A)}\rE(d_A^2(Z_1(0), Z_2(0) )\le  \frac{\lambda_{\max}(A)}{\lambda_{\min}(A)} e^{-\mu t} W_{2}(\nu,\pi)^2, 	
	\end{align*}
	hence \eqref{eqWass2conv} follows.
	
	To show our $L^2$ bounds, we are also going to study the adjoint process $(P^t)^*$. Using the exact same coupling as before, the dynamics \eqref{eq:Hamcouplingdynamicsforward} ran backwards in time becomes
	\begin{equation}\label{eq:Hamcouplingdynamicsbackward}
	\begin{aligned}
	\tx'(t)&=-\tv(t),\\
	\tv'(t)&=\mathcal{H}(t) \tx(t),\\
	\end{aligned}
	\end{equation}
	with $\mathcal{H}(t)$ defined as in \eqref{eq:Hamcouplingdynamicsforward}. For the velocity updates, forward in time we had $v'=\alpha v+\sqrt{1-\alpha^2}Z$ where $Z\sim N(0,I_d)$. Since in stationary we have $v,v'\sim N(0,I_d)$ and $\rE(v (v')^T)=\rho I_d$, one can see that the updates backward in time are still the same.
	Hence the effect of the adjoint becomes
	\begin{equation}\label{eq:L12backward}
	\begin{aligned}
	L_{1,2}^* |\tx(t)|^2&=-2{\tx(t)}^T{\tv(t)},\\
	L_{1,2}^*{\tx(t)}^T{\tv(t)}&=|\tv(t)|^2-{\tx(t)}^T{\mathcal{H}(t)\tx(t)}+\lref (1-\alpha){\tx(t)}^T{\tv(t)},\\
	L_{1,2}^* |\tv(t)|^2&=2{\tv(t)}^T{\mathcal{H}(t)\tx(t)}-\lref (1-\alpha^2)|\tv(t)|^2.
	\end{aligned}
	\end{equation}
	Notice that this is very similar to the forward case \eqref{eq:L12forward}, except that we need to replace $\tv(t)$ by $-\tv(t)$. Based on this, by repeating the previous argument for $A':=\l(\begin{matrix}a &-b\\ -b &c\end{matrix}\r)$, we have
	\begin{equation}\label{eq:Wassersteincontraction2}
	L_{1,2}^* \ d_{A'}^2(Z_1(t),Z_2(t))\le -\mu\cdot d_{A'}^2(Z_1(t),Z_2(t)),
	\end{equation}
	where $a$, $b$ and $c$ are defined as in \eqref{eq:Wassabcdef}.
	
	Hence we have shown that the adjoint process is also a contraction with the same rate $\mu$, but with respect to a different metric $d_{A'}$ instead of $d_A$ used for the forward process. 
	Now we are going to show that $d_A^2$ and $d_{A'}^2$ are equivalent up to a constant factor $C:=\frac{ac+b^2+2\sqrt{ac b^2}}{ac-b^2}$.
	Notice that for any $z_1,z_2\in \mathbb{R}^{2d}$,
	\begin{equation}\label{eqdAApequivalence}
	d_A^2(z_1,z_2)/C\le d_{A'}^2(z_1,z_2)\le d_A^2(z_1,z_2)\cdot C,
	\end{equation}
	as long as $A\preceq C A'$ and $A' \preceq C A$, and by rearrangement, this is equivalent to
	\[\l(\begin{matrix}a(C-1)& -b(1+C)\\-b(1+C)& c(C-1) \end{matrix}\r)\succeq 0 \text{ and } \l(\begin{matrix}a(C-1)& b(C+1)\\b(C+1)& c(C-1) \end{matrix}\r)\succeq 0,\]
	which holds for $C$ defined as above.

	For $f:\mathbb{R}^{2d}\to \mathbb{R}$, let
	$$\|f\|_{\mathrm{Lip},d_A}:=\sup_{z_1,z_2\in \mathbb{R}^{2d}, z_1\ne z_2}\frac{|f(z_1)-f(z_2)|}{d_{A}(z_1,z_2)},$$
	be its Lipschitz coefficient with respect to the $d_A$ distance.
	Then based on \eqref{eq:Wassersteincontraction1},\eqref{eq:Wassersteincontraction2}, and \eqref{eqdAApequivalence}, for any $t\ge 0$, $f:\mathbb{R}^{2d}\to \mathbb{R}$, have
	\begin{align*}
	\|(P^t)^* P^t f \|_{\mathrm{Lip},d_A}&\le \sqrt{C}\|(P^t)^* P^t f \|_{\mathrm{Lip},d_{A'}}\le \sqrt{C}\exp\l(-\frac{\mu t}{2}\r) \|P^t f\|_{\mathrm{Lip},d_{A'}}\\
	&\le C\exp\l(-\frac{\mu t}{2}\r) \|P^t f\|_{\mathrm{Lip},d_{A}}
	\le C\exp\l(-\mu t\r) \|f\|_{\mathrm{Lip},d_{A}}.
	\end{align*}
	Based on Propositions 29 and 30 of \cite{OllivierJFA} with $\kappa=1-C\exp\l(-\mu t\r)$, it follows that for any $t>\frac{\log(C)}{\mu}$, the reversible kernel $(P^t)^* P^t$ has as spectral radius of at most $C\exp\l(-\mu t\r)$. Thus for every $f\in L_0^{2}(\pi)$, we have
\begin{equation}
\|P^t f \|^2=\inner{f}{(P^t)^* P^t f} \leq \|f\| \|(P^t)^* P^t f\|\le C e^{-\mu t} \|f\|^2,
\end{equation}
	and the claim of the Theorem follows by noticing that $\|P^t f \|^2\le \|f\|^2$ for every $t\ge 0$.
	
\begin{remark}
We note that for any given $\lref>0, \mu>0$, the contraction rate of $d_A^2(Z_1(t),Z_2(t))$ is at least $\mu$ as long as there are constants $a,b,c$ such that $a>0$, $c>0$, $b^2<ac$ and inequalities \eqref{ineq1}-\eqref{ineq5} hold. Unfortunately due to the non-linearity of these inequalities we did not manage to find an analytical expression for the largest possible $\mu$ for a given $\lref$ (and then the largest possible $\mu$ for any $\lref$). The reader can possibly slightly improve these rates by numerical optimization for a given $\alpha$, $m$ and $M$. Note however that in our numerical experiments, it seems that the choices of $\lref$ as stated leads to $\mu$ that is close to optimal in most of the domain $0\le \alpha<1$, and $0<m\le M$ (i.e. if we increase $\mu$ by a few percent, typically there is no longer a $\lref>0$ and parameters $a,b,c$ satisfying all of the inequalities).
\end{remark}
\subsection{Proof of Proposition~\ref{prop:WassersteinGauss}}
Assume without loss of generality that $m=1$ (the general case can be obtained from this  by rescaling). Let $D:=\l(\begin{matrix}aH & bI\\b I & c I \end{matrix}\r)$ be a block matrix.
Then
\[d_{D}^2(Z_1(t),Z_2(t))=a{\tx(t)}^T{H \tx(t)}+2b{\tx(t)}^T{\tv(t)}+ c|\tv(t)|^2,\]
and the effect of the generator on these terms equal
\begin{align}
L_{1,2} {\tx(t)}^T{H \tx(t)}&=2{\tx(t)}^T{H\tv(t)},\\
L_{1,2}{\tx(t)}^T{\tv(t)}&=|\tv(t)|^2-{\tx(t)}^T{H\tx(t)}-\lref (1-\alpha){\tx(t)}^T{\tv(t)},\\
L_{1,2} |\tv(t)|^2&=-2{\tv(t)}^T{H\tx(t)}-\lref (1-\alpha^2)|\tv(t)|^2.
\end{align}
We have
\begin{align*}
&-\mu\cdot d_D^2(Z_1(t),Z_2(t))-L_{1,2} d_D^2(Z_1(t),Z_2(t))\\
&=
2[-\mu b+\lref (1-\alpha)b]{\tx(t)}^T{\tv(t)}+[-c \mu+\lref (1-\alpha^2)c-2b]|\tv(t)|^2\\
&+(2b-\mu a) {\tx(t)}^T{H\tx(t)}+2(c-a) {\tv(t)}^T{H\tx(t)}.
\end{align*}
Let $X$ and $P$ defined as in the proof of Theorem \ref{thm:Wasserstein}, and let
\begin{align*}
V&:=\l(\begin{matrix}0 & b \lref(1-\alpha)-\mu b\\ b \lref(1-\alpha)-\mu b & -c \mu+c \lref (1-\alpha^2)-2b \end{matrix}\r),
\quad W:=\l(\begin{matrix}2 b-\mu a & c-a\\c-a & 0 \end{matrix}\r).
\end{align*}
Then we have $-\mu\cdot d_D^2(Z_1(t),Z_2(t))-L_{1,2} d_D^2(Z_1(t),Z_2(t))=\Tr(VX+WP)$, and using the same argument as in the proof of Theorem \ref{thm:Wasserstein},
it follows that $\Tr(VX+WP)\ge 0$ if both $V+MW\succeq 0$ and $V+mW\succeq 0$. This can be verified (for example by Mathematica) for the choices
$\lref=2\sqrt{m}/(1-\alpha)$,  $\mu=\frac{\sqrt{m}}{3}$, $a=1$, $b=\frac{1}{4}$, $c=1$.
The proof of \eqref{eqL2bndGaussian} is analogous to the proof of \eqref{eqL2bndWass}. First we show that for $D':=\l(\begin{matrix}aH & -bI\\-b I & c I \end{matrix}\r)$,
\begin{equation}
L_{1,2}^* \ d_{D'}^2(Z_1(t),Z_2(t))\le -\mu\cdot d_{D'}^2(Z_1(t),Z_2(t)),
\end{equation}
then use the same argument as previously.
\section{Proof of Theorem~\ref{thm:hypoco}}

The generator of the RHMC process will be denoted by $\mathcal{A}$ and it is given for smooth enough functions by
$$\mathcal{A}f(x,v) = \langle \nabla_x f, v\rangle - \langle \nabla U, \nabla_v f\rangle
+ \lref \left[Q_\alpha f(x,v) - f(x,v)\right],$$
where recall that $\alpha \in (0,1)$ and
$$Q_\alpha f(x,v):= \frac{1}{\sqrt{2\pi}^d}\int \re^{-\bm{\xi}'\bm{\xi}/2}
f\left(x,\alpha v+\sqrt{1-\alpha^2} \xi\right) \rd \bm{\xi}.$$

\paragraph{Hypo-coercivity, Exponential Convergence and Asymptotic Variance.}
In the context of MCMC one is interested in optimising the computational resources
needed to produce an estimate of a certain precision. For this reason we are also interested in understanding the asymptotic variance.
Geometric ergodicity is enough to show that a large class of functions, determined by the Lyapunov function, have finite asymptotic variance.
However, since the convergence rates are not explicit in the parameters of the process, geometric ergodicity often does not allow one to optimise the asymptotic variance.

{Usually controlling the asymptotic variance for a large enough class of functions is closely related to establishing a \textit{spectral gap}, that is showing that the $L^2(\pi)$ spectrum of the generator $\mathcal{L}$ lies in
$\{z\in \mathds{C}: \Re{z} \leq -\mu\}$, for some $\mu>0$.
In the reversible case, it is well known that geometric ergodicity is equivalent to having a spectral gap,
but in the non-reversible case this is no longer true, see \cite{kontoyiannis2012} and references therein (although it may be equivalent to a spectral gap on a different Banach space).
For reversible processes, an $L^2$-spectral gap is also equivalent to \textit{coercivity} of the associated Dirichlet form, that is $\langle -\mathcal{L} f, f\rangle \geq \mu \|f\|^2$, for all $f\in L_0^2(\pi)$. Moreover, coercivity is equivalent to $\|P^t f\|\leq \re^{-\mu t} \|f\|$, for all $f\in L_0^2(\pi)$, for \textit{all} Markov processes, whether reversible or not. For this reason, and perhaps abusively, coercivity is sometimes in the literature referred to as a spectral gap, or a spectral gap inequality. Another reason is that, an inequality of the form $\langle -\mathcal{L} f, f\rangle \geq \mu \|f\|^2$ is often easy to prove, e.g. for diffusions, by rewriting the Dirichlet form in a form involving the Sobolev norm and then applying a Poincar\'e inequality. }

{Interestingly enough however, for non-reversible processes it is possible that coercivity fails to hold, although we still have $\|P^t f\|\leq C\re^{-\mu t} \|f\|$, for all $f\in L_0^2(\pi)$, for some $C>1$. This is not possible for reversible processes, since one can use spectral calculus to show that $\|P^t f\|\leq C\re^{-\mu t} \|f\|$, for all $f\in L_0^2(\pi)$ also implies the same inequality with $C\equiv 1$.
This fact is actually observed for piecewise deterministic Markov processes such as the BPS and Zig-Zag samplers,
see \cite{peters2012rejection,bouncy2018,bierkens2016zig}. This class of processes also includes RHMC.
Although geometric ergodicity has been established for BPS (\cite{deligiannidis2017exponential,durmus2018geometric}), Zig-Zag (see \cite{bierkens2017ergodicity,fetique2017long}) and RHMC (\cite{RHMC}),
an easy calculation shows that, writing $\mathcal{L}$ for the generator of any of the above processes, we have $\langle \mathcal{L}f, f\rangle = 0$ for any function $f\in L^2(\pi)$ such that $f(x,v)= f(x)$, that is functions of the location only. The reason for this is that the Dirichlet form $\mathcal{E}(f,f):=\langle \mathcal{L}f, f \rangle$ only captures the symmetric part of the generator $\mathcal{L}$, which in these processes only affects the velocity component, whereas the location component is only affected by the anti-symmetric part of the generator. This means that although BPS, Zig-Zag and RHMC are geometrically ergodic, we certainly cannot have an inequality of the form $\langle -\mathcal{L} f, f\rangle \geq \mu \|f\|^2$ for all $f\in L_0^2(\pi)$. However, it may still be true that these processes admit a spectral gap in the classical sense, and in fact this has been shown for one-dimensional Zig-Zag in \citet{bierkens2019spectral}. Notice however, that in the non-reversible case, a classical spectral gap requires additional work, and potentially assumptions, to guarantee exponential decay of the semigroup, see \cite[Section~4.2]{bierkens2019spectral}.}

In fact this situation arises very often in so called kinetic equations which include for example the underdamped Langevin processes. For such processes a range of methods have been developed recently that are widely termed as \textit{hypocoercivity},
see \cite{nier2005hypoelliptic,dric2009hypocoercivity,dolbeault2015hypocoercivity} and references therein. In fact such methods have already been applied to piecewise deterministic Markov processes, see \cite{monmarche2014hypocoercive}.
Although this approach is often quite deep and involved, the underlying principle is that of adjusting the norm, or metric, in which the convergence is studied. This principle has been extremely successful recently, for example in the convergence of HMC when log-concavity fails locally in \cite{bou2018coupling}. In the case of hypocoercive estimates, the principle is to move from the $L^2$ norm to a stronger norm, usually some form of Sobolev norm.

\subsection{Strong continuity in $H^1(\pi)$. }
We will establish that the abstract Cauchy problem
\begin{align*}
\frac{\partial u(t,z) }{\partial t} &= \mathcal{A} u,\\
u(0,z) &= f,
\end{align*}
where the class of initial conditions $f$ will be specified in the sequel, admits a unique solution in $H^1(\pi)$ given by $u(t,z):= P^t f(z)$. This will justify computing the time derivatives of $\lann P^t f, P^t f \rann$.

Before we proceed we will need to introduce some additional notation. We decompose the generator $\mathcal{A}$ of RHMC into its symmetric and antisymmetric component as follows
$$\mathcal{A}f(x,v) = Bf(x,v) + \lref (-S) f,$$
where
\begin{equation}\label{eq:definitionofB}
Bf := \langle \nabla_x f, v\rangle - \langle \nabla_v f, \nabla U \rangle, \qquad Sf := [I-Q_\alpha]f.
\end{equation}
As before we write $\{P^t: t\geq 0\}$ for the semi-group of transition kernels of RHMC, but in this section we slightly change our point of view and consider it as a semigroup on $L^2(\pi)$, that is
$P^t: L^2(\pi)\to L^2(\pi)$. Its generator will be given by $\mathcal{A}$ for smooth enough functions.

In fact even more is true as we will next show that $P^t$ is also strongly continuous as a semi-group on $H^1(\pi)$. To see why, first recall that the anti-symmetric operator $B$ generates the Hamiltonian flow $z\mapsto \Xi(t,z)$ with respect to $H(\bm{x},\bm{v}) = U(\bm{x}) + |\bm{v}|^2/2$. Let us write $\{T^t:t\geq 0\}$ for the semigroup generated by $B$, that is
$T^t f(z) = f\left( \Xi(t,z)\right)$ for $z\in \mathcal{Z}$. Then
given a smooth function $f\in H^1(\pi)$, from the chain rule we have
$$\nabla T^t f(z) = \nabla f\left( \Xi(t,z)\right) \nabla_z\Xi(t,z).$$
From the variational equations of the Hamiltonian dynamics (see Section 6.1.2 of \cite{IntroductiontoHamiltonian}) and the upper bounds $M$ and $1$ of the Hessians of $U(\xx)$ and $\frac{\|\vv\|^2}{2}$ it follows that for $C=\max(1,M)$, we have
 $\|\nabla_z\Xi(t,z)\|\leq \re^{Ct}$ for every $t\ge 0$. Using this, we conclude that
 \begin{align*}
\|\nabla_x T^t f\|^2 + \|\nabla_v T^t f\|^2
&\leq \re^{2Ct} \iint \pi(\rd z)\left[ \left|\nabla_x f \left(\Xi(t,z)\right) \right|^2
	+\left|\nabla_v f \left(\Xi(t,z)\right) \right|^2\right]\\
&=	\re^{2Ct} \iint \pi(\rd z)\left[ \left|\nabla_x f \left(z\right) \right|^2
	+\left|\nabla_v f \left(z\right) \right|^2\right],
\end{align*}
by stationarity of the flow.  By an approximation argument we can further show that
$T^t: H^1(\pi)\to H^1(\pi)$ for all $t\geq 0$. Finally $\{T^t:t\geq 0\}$ is strongly continuous on $H^1(\pi)$, since
\begin{align}
\| \nabla T^s f - \nabla f\|^2
&= \int| \nabla f\left( \Xi(s, z) \right)\nabla_z \Xi(s,z) - \nabla f\left(z \right)|^2 \pi(\rd z)\notag\\
&\leq  \int| \nabla f\left( \Xi(s, z) \right) \left[ \nabla_z \Xi(s,z) - I\right] |^2 \pi(\rd z) \notag\\
&\qquad +   \int| \nabla f\left( \Xi(s, z) \right) - \nabla f\left(z \right)|^2 \pi(\rd z)\notag\\
&\leq  \int| \nabla f\left( \Xi(s, z) \right)|^2 |\nabla_z \Xi(s,z) - I |^2 \pi(\rd z) \notag\\
&\qquad +   \int| \nabla f\left( \Xi(s, z) \right) - \nabla f\left(z \right)|^2 \pi(\rd z)\notag\\
&\leq  \int| \nabla f\left( \Xi(s, z) \right)|^2 |\nabla_z \Xi(s,z) - I |^2 \pi(\rd z) \notag\\
&\qquad +   2\int|T^s \nabla f(z) - \nabla f(z)|^2 \pi(\rd z).\label{eq:strongcontinuityinH1}
\end{align}
Since $g:=\nabla f\in L^2(\pi)$, for every $\epsilon>0$ there is a smooth, compactly supported function $g_\epsilon$ such that $\|g-g_\epsilon\|_{L^2(\pi)}<\epsilon$. Then
\begin{align*}
\int|T^s g(z) - g(z)|^2 \pi(\rd z)
&= \int|T^s g(z) - T^s g_\epsilon(z)+ T^s g_\epsilon(z) -  g_\epsilon(z) + g_\epsilon(z)- g(z)|^2 \pi(\rd z)\\
&\leq \int \pi(\rd z) \left| g\left(\Xi(s,z)\right)-g_\epsilon\left(\Xi(s,z)\right)\right|^2
	+ \int \pi(\rd z) \left| g\left(z\right)-g_\epsilon\left(z\right)\right|^2\\
	&\qquad + \int \pi(\rd z) \left| g_\epsilon\left(\Xi(s,z)\right)-g_\epsilon\left(z\right)\right|^2\\
&= 2 \|g-g_\epsilon\|+  \int \pi(\rd z) \left| g_\epsilon\left(\Xi(s,z)\right)-g_\epsilon\left(z\right)\right|^2\\
&\leq 2 \epsilon +  \int \pi(\rd z) \left| g_\epsilon\left(\Xi(s,z)\right)-g_\epsilon\left(z\right)\right|^2.
\end{align*}
For every fixed $\epsilon >0$, the second term vanishes by bounded convergence. Since $\epsilon>0$ is arbitrary this shows that $\|T^s \nabla f - \nabla f\|^2  \to 0$ as $s \to 0$.

Going back to \eqref{eq:strongcontinuityinH1}, notice that the first term also vanishes  by the dominated convergence theorem, since $|\nabla_z \Xi(s,z) - I| \leq 2\re^{Cs}$ uniformly in $z$,  $|\nabla_z \Xi(s,z) -I | \to 0$ pointwise.
Thus $T^t$ is strongly continuous and therefore it admits a densely defined generator, which we denote by $B$,
$$B: \mathcal{D}(B) \subseteq H^1(\pi) \to H^1(\pi).$$
Again it is straightforward to check that $B$ has the expression given earlier.

In addition notice that $S$ is a bounded operator on $H^1(\pi)$. To see why first notice that an easy calculation, which will be provided later on in Section \ref{sec:Proofthmhypoco} for completeness,
shows that $\nabla_x Q_\alpha = Q_\alpha \nabla_x$ and $\nabla_v Q_\alpha = \alpha Q_\alpha \nabla_v$ whence
$$\|\nabla_x Q_\alpha f\|^2+ \|\nabla_v Q_\alpha f\|^2 \leq
\| Q_\alpha  \nabla_xf\|^2+ \alpha \|Q_\alpha \nabla_v f\|^2 \leq C \left( \|\nabla_x f\|^2+ \|\nabla_v f\|^2\right),$$
since $Q_\alpha$ is a contraction on $L^2(\pi)$.
Therefore, applying \cite[Theorem~3.2]{phillips1953perturbation}, the operator
$\mathcal{A}:=B+\lref(-S)$ has domain $\mathcal{D}(B)$ and generates a strongly continuous on $H^1(\pi)$, which we will denote again by $\{P^t:t\geq 0\}$. This implies that for every $f\in \mathcal{D}(B)$, $P^t f \in \mathcal{D}(\mathcal{A})$ for all $t\geq 0$ and $\mathcal{A}P^t f= P^t \mathcal{A} f$.
This essentially shows that given $f \in \mathcal{D}(B)$ the abstract Cauchy problem
\begin{align*}
\frac{\partial u(t,z) }{\partial t} &= \mathcal{A} u,\\
u(0,z) &= f,
\end{align*}
admits a unique solution in $H^1(\pi)$ given by $u(t,z):= P^t f(z)$.


\subsection{Proof of Theorem~\ref{thm:hypoco}.}\label{sec:Proofthmhypoco}
We introduce some additional notation to keep the presentation concise.  First recall the decomposition $\mathcal{A} = B +\lref(-S)$ where
$$
Bf = \langle \nabla_x f, v\rangle - \langle \nabla_v f, \nabla U \rangle, \qquad Sf = [I-Q_\alpha]f,$$
and let us define the Dirichlet form $\mathcal{E}(f,g) := \langle f, S g\rangle$.
We will also write $A:= \nabla_v$, $C:= \nabla_x$.
From \cite[p. 40]{dric2009hypocoercivity}, or an easy calculation,
we have
$$[A,B]=AB-BA=\nabla_x, \quad\text{ and} \qquad [B,C]= \nabla^2 U \cdot \nabla_v = \nabla^2 U \cdot A.$$

Since $P^t = \exp (t \mathcal{A})$, where $\mathcal{A}$ is the generator of the RHMC process, an easy calculation shows that for all $f, g \in \mathcal{D}(\mathcal{B})$ we have
\begin{align*}
\frac{\rd}{\rd t }\langle P^t f, P^t g \rangle  \Big|_{t=0}
&=\langle \mathcal{A}f, g\rangle +\langle f, \mathcal{A}g\rangle,
\end{align*}
This also implies that
$$\frac{\rd}{\rd t }\langle P^t f, P^t f\rangle \Big|_{t=0} = 2\langle \mathcal{A}f, f\rangle
=-2\lref \mathcal{E}(f,f).$$
since $B$ is antisymmetric, in the sense that
$\langle Bf, g\rangle=-\langle f, Bg\rangle$.

We want to compute $\rd \lann P^t f, P^t f\rann/\rd t|_{t=0}$.
To keep notation to a minimum we will write $h$ rather than $P^t f$.
We proceed by computing the
derivative of each term individually,
\begin{align*}
 \frac{\rd}{\rd t }\|A h\|^2
 &= 2\langle A h,  A \mathcal{A}h\rangle = -2\lref\langle A h,  A Sh\rangle + 2\langle A h,  A Bh\rangle, \\
 \frac{\rd}{\rd t }\langle C h, A h\rangle
 &= \langle C h, A (-\lref S+B) h\rangle + \langle C (-\lref S+B)h, A h\rangle,\\
  \frac{\rd}{\rd t }\|C h\|^2
 &= 2\langle C h, C \mathcal{A} h\rangle = -2\lref\langle C h, C S h\rangle + 2\langle C h, C B h\rangle.
\end{align*}
\paragraph{Term one. }
We now compute the first term which is given by
$$-2\lref\langle A h,  A Sh\rangle + 2\langle A h,  A Bh\rangle.$$
Notice that
\begin{align*}
\frac{\partial}{\partial v_i} Q_\alpha f(\bm{x},\bm{v})
&= \frac{\partial}{\partial v_i} \rE \left[  f\left(\bm{x},\alpha \bm{v}+ \sqrt{1-\alpha^2} \bm{\xi}\right)\right]\\
&=  \rE \left[ \alpha f_{v_i}\left(\bm{x},\alpha \bm{v}+ \sqrt{1-\alpha^2} \bm{\xi}\right)\right]\\
&= \alpha  \rE \left[ f_{v_i}\left(\bm{x},\alpha \bm{v}+ \sqrt{1-\alpha^2} \bm{\xi}\right)\right],
\end{align*}
where to keep notation clear we write
$\partial G(\bm{x},\bm{v}) /\partial v_i$ to denote the derivative of the expression $G(x,v)$ w.r.t.\ $v_i$, whereas we write $f_{v_i}\left(\bm{x},\alpha \bm{v}+ \sqrt{1-\alpha^2} \bm{\xi}\right)$ to denote the derivative
of $f$ w.r.t.\ $v_i$ evaluated at $\alpha \bm{v}+ \sqrt{1-\alpha^2} \bm{\xi}$.

The above calculation shows that $A Q_\alpha = \alpha Q_\alpha A$ and therefore
\begin{align*}
-\lref \langle A h,  A S h\rangle
&= \lref \langle A h, A (Q_\alpha -I) h\rangle\\
&= \lref \langle A h, A Q_\alpha h\rangle - \lref \langle A h, A h\rangle\\
&= \lref \alpha \langle A h, Q_\alpha A h\rangle - \lref \langle A h, A h\rangle\\
&=\lref\langle A h, (\alpha Q_\alpha -I) A  h\rangle\\
&=\lref\langle A h, \alpha( Q_\alpha -I) A  h\rangle
-(1-\alpha)\lref\langle A h,  A  h\rangle\\
&=-\lref\alpha\langle A h,  SA  h\rangle
-(1-\alpha)\lref\langle A h,  A  h\rangle.
\end{align*}
Continuing we have
\begin{align*}
 \langle A h,  A B h\rangle
 &= \langle A h,  (A B - BA) h\rangle + \langle A h,   B A h\rangle\\
 &= \langle A h,  [A,B] h\rangle + 0 = \langle A h,  C h\rangle,
\end{align*}
since by the anti-symmetry of $B$, it follows that $\langle g, Bg\rangle=0$ for any $g$.
\paragraph{Term two.}
We next compute the second term
$$\langle C f, A (-\lref S+B) f\rangle + \langle C (-\lref S+B)f, A f\rangle.$$
First we compute the derivative along $B$
\begin{align*}
 \langle A B h,  C  h\rangle + \langle A  h,  C B h\rangle
 &= \langle A B h,  C  h\rangle + \langle A  h,  BC h\rangle  + \langle A  h,  [C,B] h\rangle \\
 \intertext{and using that $B^\ast = -B$}
 &= \langle A B h,  C  h\rangle - \langle BA  h,  C h\rangle  + \langle A  h,  [C,B] h\rangle \\
 &= \langle [A, B] h,  C  h\rangle + \langle A  h,  [C,B] h\rangle \\
 &= \langle C h,  C  h\rangle + \langle A  h,  [C,B] h\rangle \\
 &= \|Ch\|^2- \langle A  h,  \nabla^2 U A h\rangle.
 \end{align*}
To compute the derivative along $S$ first notice that
$C Q_\alpha = Q_\alpha C$, where in the r.h.s.\ we tensorise $Q_\alpha$ allowing it to act on each component separately, in the sense that
$$\frac{\partial}{\partial x_i} \rE\left[ f\left(\bm{x}, \alpha \bm{v} + \sqrt{1-\alpha^2} \bm{\xi}\right)\right]
 =  \rE\left[ \frac{\partial}{\partial x_i}f\left(\bm{x}, \alpha \bm{v} + \sqrt{1-\alpha^2} \bm{\xi}\right)\right].$$
Therefore
\begin{align*}
\lefteqn{ -\lref\langle A S h,  C  h\rangle  -\lref\langle A  h,  C S h\rangle }\\
&= \lref \langle A (Q_\alpha -I) h, C h\rangle +  \lref \langle A  h, C (Q_\alpha -I)h\rangle\\
 &= \lref\langle (\alpha Q_\alpha -I) A  h,  C  h\rangle + \lref\langle A  h,
  (Q_\alpha -I) C  h\rangle\\
 &= \alpha\lref\langle ( Q_\alpha -I) A  h,  C  h\rangle + (\alpha-1) \lref\langle A  h,  C  h\rangle+ \lref\langle A  h, (Q_\alpha -I) C  h\rangle\\
 &= -(1+\alpha) \lref \langle SA h, C h\rangle -(1-\alpha)\lref \langle A h, C h\rangle,
 \end{align*}
where we used again the fact that $Q_\alpha$ is positive. 
\paragraph{Term three.}
Using the same arguments as before we have
\begin{align*}
 \langle C h,  C Q_\alpha h\rangle
 &= \sum_{i=1}^d \left\langle \frac{\partial}{\partial x_i}h, \frac{\partial}{\partial x_i} Q_\alpha h\right\rangle\\
 &= \sum_{i=1}^d \left\langle\frac{\partial}{\partial x_i}h, Q_\alpha \frac{\partial}{\partial x_i}  h\right\rangle
 = \langle C h, Q_\alpha C  h\rangle,
\end{align*}
where we are overloading the inner product by allowing it to take both vectors and scalars as arguments, in the case of scalars it integrates the product, in the case of vectors the vector inner product.
Therefore
\begin{align*}
 -\lref\langle C h,  C S h\rangle
 &= \lref \langle C h, C (Q_\alpha -I) h\rangle
 =-\lref \langle C h, SC h\rangle.
\end{align*}
The next one is
\begin{align*}
 \langle C h,  C B h\rangle
 &= \langle C h,  C B h\rangle\\
 &= \langle C h,  B C h\rangle - \langle C h,  [B,C] h\rangle  = 0- \langle C h,  \nabla^2 U\cdot A h\rangle.
\end{align*}
\paragraph{Combining all the terms.}
We now have the tools to compute the derivative of
$$\lann h, h\rann:=a\|A h\|^2 - 2b \inner{C h}{A h} + c\|C h\|^2,$$
which, after multiplying by $-1$, is given by
\begin{align*}
\lefteqn{-\frac{\rd }{\rd t} \lann h, h\rann}\\
&= - a\frac{\rd }{\rd t}\|A h\|^2 + 2b\frac{\rd }{\rd t} \inner{A h}{C h} - c\frac{\rd }{\rd t}\|C h\|^2\\
&=2 a \left[\lref(1-\alpha)\|A h\|^2 +\lref\alpha \inner{S Ah}{Ah}
-\inner{Ah}{Ch}\right]\\
&\qquad +2 b \left[ \|Ch\|^2 - \inner{\grad^2 U Ah} {Ah}
- (1+\alpha) \lref  \langle S^{1/2}Ah, S^{1/2}Ch\rangle
-(1-\alpha)\lref\inner{Ah}{Ch}\right]\\
&\qquad +2c \left[ \lref \inner{S C h}{Ch} + \inner{\grad^2 U Ah}{Ch}
\right]
\\
&= 2 a\lref(1-\alpha)\|A h\|^2 -2(a+(1-\alpha)b\lref)\inner{Ah}{Ch} +2 b \|Ch\|^2
\\
&\qquad
-2b \inner{\grad^2 U Ah} {Ah} + 2c\inner{\grad^2 U Ah}{Ch}\\
&\qquad + 2 a\lref\alpha \inner{S Ah}{Ah}+ 2 c\lref \inner{S C h}{Ch}- 2(1+\alpha) b\lref  \inner{S Ah}{Ch}.
\end{align*}
\begin{remark}
At this stage we can rewrite the above inequality as
\begin{align}
-\frac{1}{2}\frac{\rd }{\rd t} \lann h, h\rann
&\geq
\left[ a(1-\alpha) \lref - bM\right]\|A h\|^2 +b \|Ch\|^2  - \left\|J Ah\right\| \|Ch\|
\notag\\
&\qquad +a\alpha \lref \|S^{1/2}Ah\|^2 + c\lref \|S^{1/2}Ch\|^2 - (1+\alpha)b\lref
\|S^{1/2}Ah\| \|S^{1/2}Ch\|,\label{eq:quadraticformtooptimize}
\end{align}
where $S^{1/2}$ is the positive, self-adjoint square root of $S$, and
$$Jf:= \left( aI+(1-\alpha)b \lref I -c \nabla^2 U\right) f,$$
which is also self-adjoint, since $\nabla^2 U$ is symmetric, whence its norm is given by
\begin{align*}
\sup_{\|f\|=1} | \langle Jf, f\rangle|
&=\sup_{\|f\|=1} \left| \left[a+b\lref(1-\alpha)\right] \langle f, f\rangle - c \langle \nabla^2 U f, f\rangle \right|\\
&=\sup_{\|f\|=1}
\max \left\{ \left[a+b\lref(1-\alpha)\right] \langle f, f\rangle - c \langle \nabla^2 U f, f\rangle, c \langle \nabla^2 U f, f\rangle-\left[a+b\lref(1-\alpha)\right] \langle f, f\rangle \right\}\\
&\leq\sup_{\|f\|=1} \max\left\{ \left(a+(1-\alpha)\lref b\right) \|f\| - c m \|f\|, cM \|f\| - \left(a+(1-\alpha)\lref b\right) \|f\| \right\}\\
&= \max\left\{ a+(1-\alpha)\lref b - c m, cM - a-(1-\alpha)\lref b \right\}.
\end{align*}
Therefore, if we can find $a,b,c>0$, such that $b<\sqrt{4 a \alpha c}/(1+\alpha)$ and
\begin{align*}
4\left[ a(1-\alpha) \lref - bM\right]b &> \max\{cM-a-(1-\alpha)\lref b, a+(1-\alpha)b\lref - cm\}^2,
\end{align*}
then the RHS of \eqref{eq:quadraticformtooptimize} is a positive definite quadratic form. In principle this can be used to optimise the convergence rates among
norms of the form \eqref{eq:newnorm}.
\end{remark}
We take a slightly different approach.
Our goal is to show that for every $h$, we have $\frac{\rd }{\rd t} \lann h, h\rann\le -\mu \lann h, h\rann$, or equivalently
\[-\frac{\rd }{\rd t} \lann h, h\rann-\mu \lann h, h\rann\ge 0,\]
After rearrangement, we obtain that
\begin{align}
\nonumber&-\frac{\rd }{\rd t} \lann h, h\rann-\mu \lann h, h\rann\\
\nonumber&= a(2\lref(1-\alpha)-\mu)\|A h\|^2 -2(a+(1-\alpha)b\lref-\mu b)\inner{Ah}{Ch} +(2 b-c \mu) \|Ch\|^2\\
\nonumber&\quad -2b \inner{\grad^2 U Ah} {Ah} + 2c\inner{\grad^2 U Ah}{Ch}\\
&\quad + 2 a\lref\alpha \inner{S Ah}{Ah}+ 2 c\lref \inner{S C h}{Ch}- 2(1+\alpha) b\lref  \inner{S Ah}{Ch}.\label{eq:dperdthh}
\end{align}

We will use the following two lemmas.
\begin{lemma}\label{lem:VXWPZQ}
	If $V, W, Z, A \in \mathbb{R}^{2 \times 2}$ are symmetric matrices such that $0 \preceq A$, $-Z \preceq A$, $A \preceq V+mW$ and $A \preceq V+MW$, then $\Tr(VX + WP + ZQ) \ge 0$ for all symmetric matrices $X,P,Q$ such that $0 \preceq Q \preceq X$ and $mX \preceq P \preceq MX$.
\end{lemma}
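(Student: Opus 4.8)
The plan is to reduce the statement to the convex-combination argument already used in the proof of Theorem~\ref{thm:Wasserstein}, isolating the new term $\Tr(ZQ)$ and controlling it with the hypothesis $-Z\preceq A$. The only nontrivial fact invoked is that $\Tr(RS)\ge 0$ whenever $R,S$ are symmetric positive semidefinite (write $R=R^{1/2}R^{1/2}$, so $\Tr(RS)=\Tr(R^{1/2}SR^{1/2})\ge 0$).

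First I would handle the term $\Tr(ZQ)$. Since $-Z\preceq A$, the matrix $Z+A$ is positive semidefinite, and $Q\succeq 0$, so $\Tr((Z+A)Q)\ge 0$, i.e.\ $\Tr(ZQ)\ge -\Tr(AQ)$. Since $A\succeq 0$ and $X-Q\succeq 0$ (note also $X\succeq Q\succeq 0$), we get $\Tr(A(X-Q))\ge 0$, hence $\Tr(AQ)\le\Tr(AX)$. Combining, $\Tr(ZQ)\ge-\Tr(AX)$, so it suffices to show $\Tr((V-A)X+WP)\ge 0$.

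For this I would apply the decomposition from the proof of Theorem~\ref{thm:Wasserstein}: set $Y:=P-mX\succeq 0$ and $Y':=MX-P\succeq 0$, which satisfy $Y+Y'=(M-m)X$ and $MY+mY'=(M-m)P$. When $M>m$ this gives $X=(Y+Y')/(M-m)$ and $P=(MY+mY')/(M-m)$, so $\Tr((V-A)X+WP)=\frac{1}{M-m}\Tr\big(((V+MW)-A)Y+((V+mW)-A)Y'\big)$, and each summand is the trace of a product of two positive semidefinite matrices — $(V+MW)-A\succeq 0$ and $(V+mW)-A\succeq 0$ by hypothesis, $Y,Y'\succeq 0$ — hence nonnegative. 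In the degenerate case $M=m$ one has $P=mX$ and $X\succeq Q\succeq 0$, so $\Tr((V-A)X+WP)=\Tr(((V+mW)-A)X)\ge 0$ directly.

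I do not expect any real obstacle: the lemma is a purely linear-algebraic rearrangement of the given semidefinite constraints. The only point requiring care is bookkeeping of the positivity hypotheses — in particular that bounding $\Tr(ZQ)$ from below uses \emph{both} $-Z\preceq A$ and $Q\preceq X$, and that $X\succeq 0$ is itself a consequence of $0\preceq Q\preceq X$.
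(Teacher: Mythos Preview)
Your proof is correct and is essentially the same as the paper's. Both arguments pair the positive semidefinite matrices $Z+A$, $A$, $V+MW-A$, $V+mW-A$ respectively with $Q$, $X-Q$, $P-mX$, $MX-P$ and use $\Tr(RS)\ge 0$ for $R,S\succeq 0$; the paper packages this as a single algebraic identity $VX+WP+ZQ=(Z+A)Q+A(X-Q)+\frac{1}{M-m}\big((V+MW-A)(P-mX)+(V+mW-A)(MX-P)\big)$, whereas you present the same four terms in two steps (first bounding $\Tr(ZQ)\ge -\Tr(AX)$, then reducing to the decomposition of Theorem~\ref{thm:Wasserstein}).
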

\begin{proof}[Proof of Lemma~\ref{lem:VXWPZQ}]
First, suppose that $M=m$. By the assumptions we have $P=mX$, $A\succeq 0$, and $Z+A\succeq 0$. Note that if $S,T$ are symmetric positive semidefinite matrices, then $\Tr(ST) \ge 0$. Using this fact, it follows that
\begin{align*}    \Tr(VX + WP + ZQ)&=\Tr((V+mW)X + ZQ)\ge \Tr(AX + (Z+A)Q-AQ)\\
	&\ge \Tr(A(X-Q))\ge 0.
	\end{align*}
Now suppose that $M>m$.
Let
$$A_1 = Z+A,\quad  A_2 = A, \quad A_3 = \frac{1}{M-m}(V+MW-A), \quad A_4 = \frac{1}{M-m}(V+mW-A).$$
Then $A_1, A_2, A_3, A_4 \succeq 0$, and
$$V = A_2 - m A_3 + M A_4, \quad W = A_3 - A_4, \quad Z = A_1 - A_2.$$
So
\begin{align*}
	VX + WP + ZQ &= (A_2 - m A_3 + M A_4) X + (A_3 - A_4) P + (A_1-A_2) Q \\
	&= A_1 Q + A_2 (X-Q) + A_3 (P-mX) + A_4 (MX-P).
\end{align*}
Using positive definiteness of both terms in the matrix products, we have
$$\Tr(A_1 Q), \Tr(A_2(X-Q)), \Tr(A_3(P-mX)), \Tr(A_4(MX-P)) \ge 0,$$
and therefore
\begin{equation*}
\Tr(V X + W P + Z Q) \ge 0.\qedhere
\end{equation*}
\end{proof}

%
%

Now we are ready to complete the proof of Theorem~\ref{thm:hypoco}.
\begin{proof}[Proof of Theorem \ref{thm:hypoco}]
Let $a:=1$, and
\begin{align*}
b&:=\frac{1+\alpha-\alpha \l(\frac{m}{M+m}\r)^{3/4}+\frac{3}{4} \frac{\alpha m}{M+m}}{2\sqrt{M+m}},\\
c&:=\frac{1+\alpha- \frac{\alpha }{2}\l(\frac{m}{M+m}\r)^{1/2} }{M+m},\\
X&:=\l(\begin{matrix}\|A h\|^2 & \inner{Ah}{Ch}\\\inner{Ah}{Ch} & \|C h\|^2  \end{matrix}\r),\\
P&:=\l(\begin{matrix}\inner{\grad^2 U(x) A h}{A h} & \inner{\grad^2 U(x) Ah}{Ch}\\\inner{\grad^2 U(x) Ah}{Ch} & \inner{\grad^2 U(x) C h}{Ch} \end{matrix}\r),\\
Q&:=\l(\begin{matrix}\inner{S A h}{A h} & \inner{S Ah}{Ch}\\\inner{S Ah}{Ch} & \inner{S C h}{Ch} \end{matrix}\r),\\
V&:=\l(\begin{matrix} 2a(1-\alpha)\lref-a\mu & -a -(1-\alpha)b \lref+b\mu \\
-a -(1-\alpha)b \lref+b\mu  &   2 b-c \mu \end{matrix}\r),\\
W&:=\l(\begin{matrix}-2b & c\\c & 0  \end{matrix}\r),\\
Z&:=\l(\begin{matrix}2a \alpha \lref & -(1+\alpha)b \lref \\ -(1+\alpha)b \lref & 2 c\lref  \end{matrix}\r),\\
A&:= \l(\begin{matrix}\frac{4(-3+2m-2M)(-1+\alpha)}{3\sqrt{m+M}(1+\alpha)} & -\frac{(-3+2m-2M)(-1+\alpha)}{3(m+M)}\\-\frac{(-3+2m-2M)(-1+\alpha)}{3(m+M)} & -\frac{(-3+2m-2M)(-1+\alpha)(1+\alpha)}{3(m+M)^{3/2}}  \end{matrix}\r).
\end{align*}
{Using the fact that $x^{\sfrac{3}{4}}-\sfrac{3}{4}x \geq 0$ for $x\in [0,\sfrac{1}{2}]$, it is easy to check that $b^2<a c$.} Using the assumption that $m I \preceq \grad^2 U\preceq M I$, we have $m X\preceq P \preceq M X$. Moreover, using the fact that $0\preceq S \preceq I$, we have $0\preceq Q\preceq X$. Based on \eqref{eq:dperdthh} and the above definitions it follows that
\begin{equation}
-\frac{\rd }{\rd t} \lann h, h\rann-\mu \lann h, h\rann=\Tr(VX + WP + ZQ).
\end{equation}
One can check, for example using Mathematica, that for every $M\ge 1, 0\le \alpha<1$, the inequalities $0 \preceq A$, $-Z \preceq A$, $A \preceq V+mW$ and $A \preceq V+MW$ hold for $A$ defined as above. Therefore \eqref{eq:hypoco_rate} follows from Lemma \ref{lem:VXWPZQ}, and by Gr\"onwall's lemma, this implies that $\lann P^t f, P^t f\rann\le \exp(-\mu t)\lann f,f\rann$.

\subsubsection{From $H^1$ to $L^2$\label{sec:h1tol2}.}
To show our $L^2$ bound, we study the reversed process. Denote the variant of the scalar product $\lann\cdot,\cdot \rann$ when $b$ is replaced by $-b$ by $\lann \cdot,\cdot\rann'$, i.e.
\begin{equation}\label{eq:newnormp}
\lann h, h\rann':=a\|\nabla_v h\|^2 + 2b \langle \nabla_x h, \nabla_v h\rangle + c\|\nabla_x h\|^2.
\end{equation}
Then by repeating the same arguments as above with $v$ replaced by $-v$ everywhere, one can show that we have
\begin{equation}\frac{\rd }{\rd t} \lann (P^*)^t f, (P^*)^t f\rann'\le -\mu \lann (P^*)^t f, (P^*)^t f\rann',
\label{eq:hypoco_rate}
\end{equation}
and hence $\lann (P^*)^t f, (P^*)^t f\rann'\le \exp(-\mu t)\lann f,f\rann'$. Similarly to the previous proofs, we can show that $\lann\cdot, \cdot\rann$ and $\lann\cdot, \cdot\rann'$ are equivalent up to the same constant factor $C$, and
\[\lann (P^t)^* P^t f, (P^t)^* P^t f\rann\le C^2\exp(-2\mu t) \lann f,f\rann.\]
In addition, there exist constants $C_1, C_2>0$ such that
$\lann f, f\rann \leq C_1 \|\nabla f\|^2$ and $\|f\|^2\leq C_2 \lann f, f\rann$.
Thus, letting $f$ be $k$-Lipschitz we have
\begin{align*}
\|(P^t)^* P^t f\|^2
&\leq C_2 \lann (P^t)^* P^t f, (P^t)^* P^t f\rann'\\
&\leq C_2\exp(-2\mu t)\lann f,f\rann'\\
&\leq C_1 C_2 \exp(-2\mu t) \|\nabla f\|^2 \leq C_1 C_2 k^2 \exp(-2\mu t).
\end{align*}
Choose $t$ such that $C_1C_2 k^2 \re^{-2\mu t}=: 1-\kappa<1$ and define the self-adjoint operator $Q = (P^t)^* P^t$.
Iterating the above we have for $n\geq 1$ that
\begin{align*}
\|Q^n f\|^2
&\leq C_1 C_2 (1-\kappa)^{2n} k^2 =: C(f) (1-\kappa)^{2n}.
\end{align*}
The rest is similar to the proof of Proposition~2.8 from \citet{hairer2014spectral}.
Let $f$ be $k$-Lipschitz, and without loss of generality also assume that $\|f\|=1$.
Let $\nu_f$ be the spectral measure corresponding to the self-adjoint operator $Q$ applied to the function $f$. In particular, since $\|f\|=1$, $\nu_f$ is a probability measure.
Then
\begin{align*}
\|Q^n f\|^2
&= \int_{-1}^1 t^{2n} \nu_f (\rd t)\\
&= \int_{-1}^1 t^{2n (2n+2m)/(2n+2m)} \nu_f (\rd t)\\
&\leq \left(\int_{-1}^1 t^{2(n+m)}\nu_f (\rd t)\right)^{\frac{2n}{2(n+m)}}\\
&= \left(\|Q^{n+m} f\|^2 \right)^{\frac{2n}{2(n+m)}}\\
&\leq \left[C(f) (1-\kappa)^{2(n+m)} \right]^{\frac{2n}{2(n+m)}}\\
&\leq C(f)^{\frac{2n}{2(n+m)}} (1-\kappa)^{2n},
\end{align*}
and letting $m\to \infty$ we get for any $k$-Lipschitz $f$
\begin{align*}
\|Q^n f\|^2
&\leq \|f\|^2(1-\kappa)^{2n},
\end{align*}
noticing that the upper bound is independent of the Lipschitz constant.
Since Lipschitz functions are dense we conclude.
\end{proof}

\begin{remark}
Given any $\lref>0,\mu>0$, the contraction $\frac{\rd }{\rd t} \lann h, h\rann\le -\mu \lann h, h\rann$ holds as long as there exists coefficients $a,b,c\in \mathbb{R}$ and a $2\times 2$ real valued symmetric matrix $A$ such that $a>0$, $c>0$, $b^2<a c$ and
$0 \preceq A$, $-Z \preceq A$, $A \preceq V+mW$ and $A \preceq V+MW$ (with $V$ and $W$ defined as above). Note that as in the proof of Theorem~\ref{thm:Wasserstein}, due to the non-linearity of the constraints we did not manage to find an analytical expression for the largest possible $\mu$ for a given $\lref$, and the largest possible $\mu$ for any $\lref$. However, we believe that the choice of $\lref$ and $\mu$ as given here is close to optimal in most of the parameter range $0<m\le M<\infty$, $0\le \alpha<1$.
\end{remark}

\section*{Acknowledgements}
The authors would like to thank Peter Holderrieth for a careful reading of the manuscript and his invaluable suggestions and Philippe Gagnon for his insightful comments on the manuscript. G.D.\ would like to thank Gabriel Stoltz for many useful discussions.
The authors would also like to thank the anonymous referees for numerous suggestions that have greatly improved the content and the presentation of the paper.
This material is based upon work supported in part by the U.S. Army Research Laboratory and the U. S. Army Research Office, and by the U.K. Ministry of Defence (MoD) and the U.K. Engineering and Physical Research Council (EPSRC) under grant number EP/R013616/1 and by the EPSRC EP/R034710/1. A part of this research was done while A. Doucet, G. Deligiannidis and D. Paulin were hosted by the Institute for Mathematical Sciences in Singapore.
\nocite{*}
\bibliographystyle{plainnat}
\begin{appendix}
	\section{Auxiliary results}
Notice that using the independence of $X$ and $Z$, and the fact that the standard normal distribution is isotropic, we have
\[\rE_{X\sim \pi, Z\sim N(0,\mathbb{I}_d)}\left[ \left( \grad U(X),Z\right)_+\right]=\rE(|\grad U(X)|)  \rE\left[ \left( w ,Z\right)_+\right],\]
where $w$ is an arbitrary fixed $d$ dimensional unit vector. Now noticing that $\left( w ,Z\right)$ is a one dimensional standard normal random variable, it follows that
$\rE\left[ \left( w ,Z\right)_+\right]=\int_{x=0}^{\infty}\frac{1}{\sqrt{2\pi}} x \exp\l(-\frac{x^2}{2}\r) \mathrm{d}x=\frac{1}{\sqrt{2\pi}}$. Hence the key part of the proof is to find lower and upper bounds on
\[\rE(|\grad U(X)|)=\frac{\int_{x\in \rR^d} |\grad U(x)| e^{-U(x)} \mathrm{d}x }{\int_{x\in \rR^d}  e^{-U(x)} \mathrm{d}x }.\]
By shifting $U$, we can assume without loss of generality that $U(0)=0$ and $\grad U(0)=0$ (hence the minimum is taken in the origin 0).
Let $\mathbb{S}^d_1$ denote the $d$-dimensional unit sphere, then by writing the above integrals along half-lines, we have
\begin{equation}\label{eq:rsphereint}\rE(|\grad U(X)|)=\frac{\int_{u\in \mathbb{S}^d_1}  \int_{r=0}^{\infty} |\grad U(ru) | e^{-U(ru)} r^{d-1} \mathrm{d}r \, \mathrm{d} u }{\int_{u\in \mathbb{S}^d_1}  \int_{r=0}^{\infty} e^{-U(ru)} r^{d-1} \mathrm{d}r\, \mathrm{d} u }\ge \frac{\int_{u\in \mathbb{S}^d_1}  \int_{r=0}^{\infty} \l|\frac{\partial }{\partial r} U(ru) \r| e^{-U(ru)} r^{d-1} \mathrm{d}r \, \mathrm{d} u }{\int_{u\in \mathbb{S}^d_1}  \int_{r=0}^{\infty} e^{-U(ru)} r^{d-1} \mathrm{d}r\, \mathrm{d} u }
\end{equation}
If we could lower bound the ratios of the one dimensional integrals $$\frac{\int_{r=0}^{\infty} \l|\frac{\partial }{\partial r} U(ru) \r|  e^{-U(ru)} r^{d-1} \mathrm{d}r}{\int_{r=0}^{\infty} e^{-U(ru)} r^{d-1} \mathrm{d}r},$$ then a lower bound for $\rE(|\grad U(X)|)$ follows by rearrangement. This is shown in the following Lemma.

\begin{lemma}\label{lemma:BPSexpectednbofbounceslower}
	Let $d \in \rZ_{\ge 1}$, $m \in \rR_{>0}$, and let $V \colon \rR_{\ge 0} \to \rR$ be a differentiable function such that $x \mapsto V(x) - m \frac{x^2}{2}$ is convex, and $V'(0) = 0$.
	Let $A = \int_0^{\infty} x^{d-1} e^{-V(x)} dx$ and $B = \int_0^{\infty} V'(x) x^{d-1} e^{-V(x)} dx$.
	Then $B \ge \sqrt{2m} \frac{\Gamma(\frac{d+1}{2})}{\Gamma(\frac{d}{2})} A$.
\end{lemma}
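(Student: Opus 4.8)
The plan is to reduce the lemma to the extremal case $V(x) = m x^2/2$, for which both integrals are explicit Gaussian-type integrals, and then show that replacing a general admissible $V$ by this quadratic can only decrease the ratio $B/A$. Write $g(x) := V(x) - m x^2/2$, which by hypothesis is convex on $\rR_{\ge 0}$ with $g'(0) = -V'(0)\cdot(-1)$... more precisely $g'(0) = V'(0) = 0$, so $g$ is convex with a critical point at $0$, hence $g$ is nondecreasing on $\rR_{\ge 0}$ and $g' \ge 0$ there. The first step is to rewrite
\[
B = \int_0^\infty V'(x)\, x^{d-1} e^{-V(x)}\, dx = \int_0^\infty (m x + g'(x))\, x^{d-1} e^{-V(x)}\, dx,
\]
so $B \ge \int_0^\infty m x\, x^{d-1} e^{-V(x)}\, dx$ because $g' \ge 0$. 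Thus it suffices to prove
\[
\int_0^\infty x^{d}\, e^{-V(x)}\, dx \;\ge\; \sqrt{\tfrac{2}{m}}\,\frac{\Gamma(\tfrac{d+1}{2})}{\Gamma(\tfrac{d}{2})}\, \int_0^\infty x^{d-1} e^{-V(x)}\, dx,
\]
i.e. that the ``average radius'' under the weight $x^{d-1}e^{-V(x)}\,dx$ is at least that under $x^{d-1}e^{-mx^2/2}\,dx$ (the quadratic case, where $\int_0^\infty x^d e^{-mx^2/2}dx\big/\int_0^\infty x^{d-1}e^{-mx^2/2}dx = \sqrt{2/m}\,\Gamma(\tfrac{d+1}{2})/\Gamma(\tfrac{d}{2})$ by the standard Gamma-integral evaluation $\int_0^\infty x^{s-1}e^{-mx^2/2}dx = \tfrac12 (2/m)^{s/2}\Gamma(s/2)$).

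The key step is a stochastic-dominance / correlation-inequality argument. Let $\nu$ be the probability measure on $\rR_{\ge0}$ with density proportional to $x^{d-1} e^{-m x^2/2}$ and let $\mu$ have density proportional to $x^{d-1} e^{-V(x)} = x^{d-1} e^{-mx^2/2} e^{-g(x)}$; thus $d\mu/d\nu \propto e^{-g(x)}$ with $g$ nondecreasing. A measure obtained from $\nu$ by reweighting with a nonincreasing factor stochastically dominates $\nu$ from below — i.e. $\mu$ is stochastically \emph{smaller} than... wait, reweighting by the \emph{decreasing} function $e^{-g}$ tilts mass toward small $x$, so $\mu \preceq_{\mathrm{st}} \nu$, which would give the \emph{wrong} inequality. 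So this direct route fails, and the correct observation must instead exploit the $x^{d-1}$ factor together with $g$ convex rather than merely monotone. The honest route is: since $g' \ge 0$ we already extracted the good term $mx$ from $V'$, so we only need $\mathbb{E}_\mu[X] \ge \mathbb{E}_\nu[X]$ where now $d\mu/d\nu \propto e^{-g}$ with $g$ \emph{convex and increasing}. This still looks like it goes the wrong way pointwise, so the real content must be that we do \emph{not} drop $g'$ but instead integrate by parts.

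Therefore I would instead run the whole thing through integration by parts, which is cleaner and self-contained. Write $A = \int_0^\infty x^{d-1} e^{-V(x)} dx$; integrating by parts with $u = x^d/d$, $dv = -V'(x) e^{-V(x)} dx$ gives
\[
\int_0^\infty \frac{x^d}{d}\big(-V'(x)\big) e^{-V(x)}\, dx \cdot(-1) \;=\; A \;-\; \Big[\tfrac{x^d}{d} e^{-V(x)}\Big]_0^\infty,
\]
so (the boundary term vanishing since $V$ is eventually $\gtrsim mx^2/2$) we get the identity $B = d A / \,$... more carefully $\int_0^\infty x^d V'(x) e^{-V(x)} dx = d\int_0^\infty x^{d-1} e^{-V(x)} dx = dA$. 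Hence the task becomes a \emph{lower} bound for $B = \int V' x^{d-1} e^{-V}$ given the \emph{exact} value $\int V' x^{d} e^{-V} = dA$, under $V' = mx + g'$ with $g$ convex increasing. Substituting $V' = mx + g'$ into both relations and eliminating, one reduces to a one-variable inequality comparing $\int x^{d+1} e^{-V}$, $\int x^d e^{-V}$, $\int x^{d-1} e^{-V}$ that follows from Cauchy--Schwarz (a log-convexity / Lyapunov inequality for the moments of $\mu$) combined with the convexity of $g$; the quadratic case is the equality case. The main obstacle I anticipate is precisely assembling this final one-variable inequality so that convexity of $g$ (not just monotonicity) is used in exactly the right place, since monotonicity alone is genuinely insufficient — I expect an FKG/Chebyshev correlation inequality between the increasing functions $x \mapsto x$ and $x \mapsto g'(x)$ under a suitable measure to be the crux, yielding $\mathrm{Cov}(X, g'(X)) \ge 0$ and closing the gap.
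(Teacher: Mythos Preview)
Your proposal has the right ingredients but assembles them in the wrong order, and the final ``crux'' you identify does not close the gap. The paper's proof also uses (i) integration by parts and (ii) monotonicity of $g(x)=V(x)-\tfrac{m}{2}x^2$, but it integrates by parts \emph{on $B$ itself}: for $d\ge 2$,
\[
B=\int_0^\infty V'(x)\,x^{d-1}e^{-V(x)}\,dx=(d-1)\int_0^\infty x^{d-2}e^{-V(x)}\,dx,
\]
so the claim becomes $\int_0^\infty (c-x)\,x^{d-2}e^{-V(x)}\,dx\ge 0$ with $c=\sqrt{2/m}\,\Gamma(\tfrac{d}{2})/\Gamma(\tfrac{d-1}{2})$. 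In other words, the needed inequality is $\mathbb{E}_\rho[X]\le c$ for the measure $\rho\propto x^{d-2}e^{-mx^2/2}e^{-g(x)}$, and now the stochastic-dominance intuition you started with points the \emph{right} way: reweighting the reference measure $\propto x^{d-2}e^{-mx^2/2}$ by the decreasing factor $e^{-g}$ shifts mass toward the origin and can only decrease the mean below its Gaussian value $c$. The paper writes this comparison directly: since $g$ is nondecreasing, $(c-x)\big(e^{-V(x)}-e^{-V(c)-\frac{m}{2}(x^2-c^2)}\big)\ge 0$ for every $x$, and the remaining Gaussian integral vanishes by the choice of $c$. The case $d=1$ is handled separately since $x^{d-2}$ is not integrable at $0$; there $B=e^{-V(0)}$ and $A\le e^{-V(0)}\sqrt{\pi/(2m)}$.

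Your integration by parts instead produced the identity $\int_0^\infty x^d V'(x)e^{-V(x)}\,dx=dA$, which relates the \emph{wrong} integral to $A$ and leaves you needing to bound $B$ via $M_1/M_0$ rather than $M_0/M_{-1}$; as you yourself observed, the dominance then goes the wrong way. The proposed FKG/Chebyshev patch, $\mathrm{Cov}_\mu(X,g'(X))\ge 0$, gives $G_1 M_0\ge M_1 G_0$ (in your notation $G_k=\int g' x^{d-1+k}e^{-V}$, $M_k=\int x^{d-1+k}e^{-V}$), which is an \emph{upper} bound on $G_0$ and hence on $B=mM_1+G_0$, not the lower bound you need. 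So this route does not close: the missing idea is simply to integrate $B$ by parts to drop the moment order by one, after which your original stochastic-dominance argument works immediately.
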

\begin{proof}
	First let $d = 1$.
	Then $B = \int_0^{\infty} (-e^{-V(x)})' dx = e^{-V(0)}$.
	We have $V(x) \ge V(0) + m \frac{x^2}{2}$ for $x \ge 0$, so $A \le \int_0^{\infty} e^{-V(0)-m \frac{x^2}{2}} dx = e^{-V(0)} \sqrt{\frac{\pi}{2m}} = \sqrt{\frac{\pi}{2m}} B$, so $B \ge \sqrt{2m} \frac{\Gamma(1)}{\Gamma(\frac{1}{2})} A$.
	
	Now let $d \ge 2$.
	Then
	\[
	B = \int_0^{\infty} ((-x^{d-1} e^{-V(x)})' + (d-1) x^{d-2} e^{-V(x)}) dx = (d-1) \int_0^{\infty} x^{d-2} e^{-V(x)} \rd x,
	\]
	so the claim is equivalent to $\int_0^{\infty} (c-x) x^{d-2} e^{-V(x)} \rd x \ge 0$, where $c = \frac{\Gamma(\frac{d}{2})}{\Gamma(\frac{d-1}{2})} \cdot \frac{\sqrt{2}}{\sqrt{m}}$ (here we have used $\Gamma(\frac{d+1}{2}) = \frac{d-1}{2} \Gamma(\frac{d-1}{2})$).
	The function $x \mapsto V(x) - m \frac{x^2}{2}$ is convex, and its derivative at $x = 0$ is $0$, so this function is monotone increasing on $\rR_{\ge 0}$.
	Hence $V(x) \ge V(c) + \frac{m}{2}(x^2-c^2)$ if $x \ge c$, and $V(x) \le V(c) + \frac{m}{2}(x^2-c^2)$ if $x \le c$.
	Thus
	\begin{align*}
	\int_0^{\infty} (c-x) x^{d-2} e^{-V(x)} \rd x &\ge \int_0^{\infty} (c-x) x^{d-2} e^{-V(c) - \frac{m}{2}(x^2-c^2)} \rd x \\
	&= e^{\frac{m}{2} c^2 - V(c)} \int_0^{\infty} (c-x) x^{d-2} e^{-\frac{m}{2} x^2} \rd x.
	\end{align*}
	We have $\int_0^{\infty} x^{\alpha} e^{-\frac{m}{2} x^2} \rd x = \frac{1}{2} (\frac{m}{2})^{-\frac{\alpha+1}{2}} \Gamma(\frac{\alpha+1}{2})$ for every $m > 0$ and $\alpha > -1$.
	So
	\[
	\int_0^{\infty} (c-x) x^{d-2} e^{-V(x)} \rd x \ge e^{\frac{m}{2} c^2 - V(c)} \frac{1}{2} \l(c \l(\frac{m}{2}\r)^{-\frac{d-1}{2}} \Gamma\l(\frac{d-1}{2}\r) - \l(\frac{m}{2}\r)^{-\frac{d}{2}} \Gamma\l(\frac{d}{2}\r)\r) = 0.\qedhere
	\]	
\end{proof}
The following lemma will be used to find a simpler lower bound for the ratio $\frac{\Gamma(\frac{d+1}{2})}{\Gamma(\frac{d}{2})}$.
\begin{lemma}\label{lemma:Gammabnd}
	If $s > 0$, then $\frac{\Gamma(s+\frac{3}{4})}{\Gamma(s+\frac{1}{4})} > \sqrt{s}$.
\end{lemma}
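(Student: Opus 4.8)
The plan is to prove $\Gamma(s+3/4)^2 > s\,\Gamma(s+1/4)^2$ for all $s>0$ by exploiting the log-convexity of $\Gamma$ together with the functional equation $\Gamma(s+1) = s\,\Gamma(s)$. The key observation is that $s+3/4$ is the midpoint of $s+1/4$ and $s+5/4$, so log-convexity of $\Gamma$ gives
\[
\Gamma\!\l(s+\tfrac34\r)^2 \le \Gamma\!\l(s+\tfrac14\r)\,\Gamma\!\l(s+\tfrac54\r).
\]
That is the wrong direction, so instead I would use strict log-convexity in the reverse-comparison form: the function $s\mapsto \log\Gamma(s+3/4) - \tfrac12\log\Gamma(s+1/4) - \tfrac12\log\Gamma(s+1/4+1)$ — no; rather, the cleanest route is to write $\Gamma(s+5/4) = (s+1/4)\Gamma(s+1/4)$ and combine with a sharper interpolation. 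Concretely, set $f(s) = \log\Gamma(s+3/4) - \tfrac12\log s - \log\Gamma(s+1/4)$; the claim is $f(s) > 0$ on $(0,\infty)$.

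First I would check the boundary behaviour: as $s\to 0^+$, $\Gamma(s+3/4)\to\Gamma(3/4)$, $\Gamma(s+1/4)\to\Gamma(1/4)$, and $-\tfrac12\log s\to +\infty$, so $f(s)\to+\infty$. As $s\to\infty$, Stirling's formula gives $\Gamma(s+3/4)/\Gamma(s+1/4) \sim s^{1/2}$, so $f(s)\to 0^+$; one should verify the approach is from above, e.g.\ via the asymptotic expansion $\log\Gamma(s+a) - \log\Gamma(s+b) = (a-b)\log s + \tfrac{(a-b)(a+b-1)}{2s} + O(s^{-2})$, which with $a=3/4$, $b=1/4$ yields $f(s) = \tfrac{1}{8s} + O(s^{-2}) > 0$ for large $s$. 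So it suffices to show $f$ has no interior zero, for which I would show $f$ is monotone, i.e.\ $f'(s) < 0$ for all $s>0$. We have
\[
f'(s) = \psi\!\l(s+\tfrac34\r) - \psi\!\l(s+\tfrac14\r) - \frac{1}{2s},
\]
where $\psi = \Gamma'/\Gamma$ is the digamma function. Using the integral representation $\psi(x+\tfrac34) - \psi(x+\tfrac14) = \int_0^\infty \frac{e^{-(x+1/4)t} - e^{-(x+3/4)t}}{1-e^{-t}}\,dt$ and $\frac{1}{2s} = \tfrac12\int_0^\infty e^{-st}\,dt$, the inequality $f'(s)<0$ reduces to showing a single integral is negative; alternatively, one can use the series $\psi(x+\tfrac34)-\psi(x+\tfrac14) = \sum_{n\ge0}\l(\tfrac{1}{n+s+1/4} - \tfrac{1}{n+s+3/4}\r) = \sum_{n\ge 0}\tfrac{1/2}{(n+s+1/4)(n+s+3/4)}$, and compare termwise with $\tfrac1{2s} = \sum_{n\ge0}\l(\tfrac1{?}\r)$ — here a telescoping/convexity estimate showing $\sum_{n\ge0}\tfrac{1/2}{(n+s+1/4)(n+s+3/4)} < \tfrac1{2s}$ finishes it. The main obstacle is this last monotonicity estimate: proving $f'(s)<0$ cleanly without getting bogged down in special-function identities. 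A robust fallback, if the digamma manipulation is awkward, is to bypass monotonicity entirely and instead prove the inequality directly from the Gauss multiplication / Legendre duplication formula applied to $\Gamma(s+3/4)\Gamma(s+1/4)$, or from the Gautschi–Kershaw-type bounds $\frac{\Gamma(x+1)}{\Gamma(x+1/2)} > \sqrt{x+1/4}$ (with $x = s-1/4$, giving exactly $\Gamma(s+3/4)/\Gamma(s+1/4) > \sqrt{s}$), citing the known sharp form of the Kershaw inequality; this is the shortest path and I would likely adopt it, with the digamma argument as a self-contained backup.
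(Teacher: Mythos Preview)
Your proposal is essentially correct but takes a genuinely different route from the paper. You work with the continuous derivative $f'(s)=\psi(s+\tfrac34)-\psi(s+\tfrac14)-\tfrac{1}{2s}$ and aim to show it is negative; the step you leave vague (``a telescoping/convexity estimate'') does go through: from $(n+s+\tfrac14)(n+s+\tfrac34)=(n+s)^2+(n+s)+\tfrac{3}{16}>(n+s)(n+s+1)$ one gets
\[
\sum_{n\ge 0}\frac{1}{(n+s+\tfrac14)(n+s+\tfrac34)}<\sum_{n\ge 0}\Bigl(\frac{1}{n+s}-\frac{1}{n+s+1}\Bigr)=\frac{1}{s},
\]
which is exactly $f'(s)<0$. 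Combined with your boundary analysis ($f\to+\infty$ at $0^+$, $f\to 0^+$ at $\infty$) this gives the result. Your Kershaw-inequality fallback is also legitimate, though it outsources the content.

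The paper's argument is shorter and more elementary: instead of differentiating, it sets $\phi(s)=\tfrac{1}{\sqrt{s}}\,\Gamma(s+\tfrac34)/\Gamma(s+\tfrac14)$ and uses only the recursion $\Gamma(x+1)=x\,\Gamma(x)$ to compute $(\phi(s+1)/\phi(s))^2=1-\tfrac{1}{(1+s)(1+4s)^2}<1$, so $\phi$ is strictly decreasing along integer shifts; since Stirling gives $\phi(s)\to 1$, one obtains $\phi(s)>1$ immediately. The paper thus avoids digamma identities and any continuous monotonicity argument entirely. Your approach has the advantage of yielding the stronger statement that $\phi$ is strictly decreasing on all of $(0,\infty)$, not just along integer steps, but at the cost of invoking more special-function machinery.
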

\begin{proof}
	Let $\phi(s) := \frac{1}{\sqrt{s}} \frac{\Gamma(s+\frac{3}{4})}{\Gamma(s+\frac{1}{4})}$ for $s > 0$.
	Stirling's formula implies that $\lim_{s \to \infty} \phi(s) = 1$.
	For $s > 0$ we have $\phi(s) > 0$ and $(\frac{\phi(s+1)}{\phi(s)})^2 = 1 - \frac{1}{(1+s) (1+4s)^2} < 1$, so $\phi(s) > \phi(s+1)$.
	Thus $\phi(s) > \phi(s+1) \ge \phi(s+n)$ for every $n \in \rZ_{\ge 1}$, and taking $n \to \infty$ we get $\phi(s) > 1$.
\end{proof}

Taking $s = \frac{d-\frac{1}{2}}{2}$ for $d \in \rZ_{\ge 1}$, we get
\begin{equation}\label{eq:Gammalowerbnd}
\frac{\Gamma(\frac{d+1}{2})}{\Gamma(\frac{d}{2})} > \sqrt{\frac{d-\frac{1}{2}}{2}}.
\end{equation}

The next lemma will show the upper bound.

\begin{lemma}\label{lemma:BPSexpectednbofbouncesupper}
Suppose that the potential $U:\rR^n \to \rR$ satisfies Assumption \ref{ass:potential}. Then for every $1\le i\le n$, we have $\rE\l( (\partial_i U(X))^2 \r) \le M$, implying that $\rE\left( |\grad U_n(\bm{X}) |^2 \right) \leq n M$ and
$\rE\left(|\grad U_n(\bm{X})|\right)\le \sqrt{nM}$.
\end{lemma}

\begin{proof}
By Jensen's inequality, we have
\[\rE(|\grad U(X)|)\le \l[\rE(|\grad U(X)|^2)\r]^{1/2}=\l[\rE(\sum_{i=1}^{d} (\partial_i U(X))^2 \r]^{1/2}.
\]
Here
\[\rE\l( (\partial_i U(X))^2 \r) = \frac{\int_{x\in \rR^d} (\partial_i U(x))^2 \exp(-U(x)) \mathrm{d} x } {\int_{x\in \rR^d} \exp(-U(x)) \mathrm{d} x},\]
and from integration by parts, it follows that for every $1\le i\le d$, we have
\begin{align*}
&\int_{x\in \rR^d} (\partial_i U(x))^2 \exp(-U(x)) \mathrm{d} x=\int_{x_{-i}\in \rR^{d-1}} \int_{x_i\in \rR} (\partial_i U(x))^2 \exp(-U(x)) \mathrm{d}x_{i} \mathrm{d}x_{-i}\\
&=\int_{x_{-i}\in \rR^{d-1}} \left\{ \left[  -\partial_i U(x) \exp(-U(x)) \right]_{x_i={-\infty}}^{\infty} +  \int_{x_i\in \rR} \partial_i^2 U(x) \exp(-U(x)) \mathrm{d}x_{i} \right\} \mathrm{d}x_{-i}\\
&=\int_{x_{-i}\in \rR^{d-1}} \int_{x_i\in \rR} \partial_i^2 U(x) \exp(-U(x)) \mathrm{d}x_{i} \mathrm{d}x_{-i}\le M \int_{x\in \rR^d} \exp(-U(x)) \mathrm{d} x.
\end{align*}
The second and third claims now follow by summing up in $i$, and using Jensen's inequality.
\end{proof}

\begin{proof}[Proof of Proposition \ref{proposition:BPSexpectednbofbounces}]
The result follows from Lemmas \ref{lemma:BPSexpectednbofbounceslower}, \ref{lemma:Gammabnd} and \ref{lemma:BPSexpectednbofbouncesupper}.
\end{proof}

\begin{lemma}\label{lem:Unpartial1divbynorm}
Suppose that $U_n(\bm{X}):\rR^n\to \rR$ with $m \bm{I}_d\preceq\nabla^2 U_n(\bm{X}) \preceq M \bm{I}_d$. Then
\[\rE\left[ \left. \frac{\partial_1 U_n(\bm{X}) V_1}{\left(\sum_{j=1}^n [\partial_j U_n(\bm{X})]^2 \right)^{1/2}} \right| X_1, V_1 \right] \to 0\text{ as }n\to \infty.\]
\end{lemma}
\begin{proof}
\[\rE\left[ \left. \frac{\partial_1 U_n(\bm{X}) V_1}{\left(\sum_{j=1}^n [\partial_j U_n(\bm{X})]^2 \right)^{1/2}} \right| X_1, V_1 \right]\le \rE\left[ \left. \frac{|\partial_1 U_n(\bm{X})|}{|\nabla U_n(\bm{X})|}\right| X_1\right]\cdot |V_1|.\]
Let us denote $X_{-1}:=(X_2,\ldots,X_n)$, then $X_{-1}$ given $X_1$ has a conditional distribution with density that is proportional to $\exp(-U_n(X_{-1},X_1))$, which is a log-concave function of $X_{-1}$, with Hessian bounded between $m$ and $M$. By Theorem 5.2 of \cite{Ledouxconcentrationofmeasure},  $\mathcal{L}(X_{-1}|X_1)$ satisfies a log-Sobolev inequality with constant $C:=m^{-1}$. The functions $|\nabla U_n(\bm{X})|$ and $|\partial_1 U_n(\bm{X})|$ are $M$-Lipschitz in $X_{-1}$ given a fixed $X_1$, and hence  by Herbst's argument (see equation (5.8) on page page 95 of \cite{Ledouxconcentrationofmeasure}),
\begin{align*}
&\rP( |\partial_1 U_n(\bm{X})|-\rE(|\partial_1 U_n(\bm{X})||X_1)\ge t |X_1)\le \exp\left(-t^2\cdot \frac{2m}{M^2}\right),\\
&\rP( ||\nabla U_n(\bm{X})|-\rE(|\nabla U_n(\bm{X})| |X_1) \le -t |X_1)\le \exp\left(-t^2\cdot \frac{2m}{M^2}\right).
\end{align*}
Conditionally on $X_1$, define the event $G_t$ as
\[G_t:=\{ |\partial_1 U_n(\bm{X})|-\rE(|\partial_1 U_n(\bm{X})||X_1) < t \text{ and } |\nabla U_n(\bm{X})|-\rE(|\nabla U_n(\bm{X})| |X_1) > -t\},\]
then by the above bounds, we have $\rP(G_t|X_1)\ge 1-2\exp\left(-t^2\cdot \frac{2m}{M^2}\right)$ for every $t\ge 0$. Let $G_t^c$ denote the complement of $G_t$. Assuming that $0<t<\rE(|\nabla U_n(\bm{X})| |X_1)$, the quantity of interest can be bounded as
\begin{align}\nonumber&\rE\left[ \left. \frac{|\partial_1 U_n(\bm{X})|}{|\nabla U_n(\bm{X})|}\right| X_1\right]= \rE\left[ \left. \frac{|\partial_1 U_n(\bm{X})|}{|\nabla U_n(\bm{X})|} \cdot 1_{G_t}\right| X_1\right] + \rE\left[ \left. \frac{|\partial_1 U_n(\bm{X})|}{|\nabla U_n(\bm{X})|} \cdot 1_{G_t^c}\right| X_1\right]\\
&\le  \frac{ \rE(|\partial_1 U_n(\bm{X})||X_1) +t}{\rE(|\nabla U_n(\bm{X})| |X_1)-t} + 2\exp\left(-t^2\cdot \frac{2m}{M^2}\right),\label{eq:goodbadeventbound}
\end{align}
where we have used the fact that $\frac{|\partial_1 U_n(\bm{X})|}{|\nabla U_n(\bm{X})|}\le 1$. By Lemma 9 and equation (A.2), it follows that for any $n\ge 2$,
\[\rE(|\nabla U_n(\bm{X})| |X_1)\ge \rE(|\partial_{-1} U_n(\bm{X})| |X_1)\ge \sqrt{m (n-3/2)},\]
where $\partial_{-1}U_n(\bm{X})$ denotes the gradient vector without the first component. By Lemma 11,
\[\rE(|\partial_1 U_n(\bm{X})|)\le \sqrt{M}.\]
Note that $|\partial_{-1} U_n(\bm{X}_{-1},X_{1})-\partial_{-1} U_n(\bm{X}_{-1},X_{1}')|\le M |X_{1}-X'_1|$, and by Proposition 19 of \cite{vono2020efficient}, it follows that
\[W_1(\mathcal{L}(X_{-1}|X_1), \mathcal{L}(X_{-1}|X_1'))\le \frac{M}{m} |X_1-X_1'|,\]
therefore $g(X_1):=\rE(|\partial_1 U_n(\bm{X})||X_1)$ is $\frac{M^2}{m}$-Lipschitz in $X_1$. By log-Sobolev inequality and Herbst's argument, for any $s\ge 0$, we have
\[\rP(|X_1 -\rE(X_1)|\ge s)\le 2\exp\left(-s^2 \cdot 2m\right).\]
Therefore, it follows that
\begin{align*}
\sqrt{M}&\ge \rE[g]= \rE[g(X_1)-g(\rE(X_1))]+ g(\rE(X_1))\ge -\int_{r=0}^{\infty}\rP[g(X_1)-g(\rE(X_1))\le -r]\mathrm{d}r + g(\rE(X_1))\\
&\ge  -\int_{r=0}^{\infty}\rP\left[|X_1-\rE(X_1)|\ge r \frac{m}{M^2}\right] \mathrm{d}r + g(\rE(X_1))\ge -\int_{r=0}^{\infty}2 \exp\left(-r^2\frac{2m^3}{M^4}\right)\mathrm{d}r\\
&\ge g(\rE(X_1))-2 \sqrt{(\pi/2)M^4/m^3}\ge  g(\rE(X_1))-3 \frac{M^2}{m^{3/2}}.
\end{align*}
Thus $g(\rE(X_1))\le 4 \frac{M^2}{m^{3/2}}$, which implies by the Lipschitz property that implying that
\[\rE(|\partial_1 U_n(\bm{X})||X_1)=g(X_1)\le 4 \frac{M^2}{m^{3/2}}+\frac{M^2}{m} |X_1-\rE(X_1)|.\]
By simple algebra, $t=\sqrt{\log(n) M^2/(2m)}$ satisfies that for $n\ge 3/2+2\log(n) \frac{M^2}{m^2}$, we have $t\le \frac{1}{2}\sqrt{m(n-3/2)}$. By combining the above bound with \eqref{eq:goodbadeventbound} and using this $t$, we have
\[\rE\left[ \left. \frac{|\partial_1 U_n(\bm{X})|}{|\nabla U_n(\bm{X})|}\right| X_1\right]\le \frac{ 4 \frac{M^2}{m^{3/2}}+\frac{M^2}{m} |X_1-\rE(X_1)|+\sqrt{\log(n) M^2/(2m)}}{\frac{1}{2}\sqrt{m (n-3/2)}} + \frac{2}{n},\]
as long as $n\ge 3/2+2\log(n) \frac{M^2}{m^2}$. This tends to $0$ as $n\to \infty$.
\end{proof}

\begin{lemma}\label{lemma:inversemoment}
Suppose that $U_n$ satisfies Assumption~\ref{ass:potential} and let $\bm{X}\sim \bar{\pi}_n$. Then for any $\alpha >0$
$$\lim_{n\to \infty}\rE\left[\frac{1}{|\grad U_n(\bm{X})|^
\alpha} \right]= 0. $$
\end{lemma}
\begin{proof}[Proof of Lemma~\ref{lemma:inversemoment}]
We have
\begin{equation}\label{eq:expinvmomentbnd}\rE\left[\frac{1}{|\grad U_n(\bm{X})|^
\alpha} \right]=\int_{t=0}^{\infty}\rP\left[\frac{1}{|\grad U_n(\bm{X})|^
\alpha}\ge t \right] \mathrm{d}t=\rP\left[|\grad U_n(\bm{X})|\le t^{-1/\alpha} \right]\mathrm{d}t.
\end{equation}
The function $|\grad U_n(\bm{x})|$ is $M$-Lipschitz in $\bm{x}$, so by the log-Sobolev inequality and Herbst's argument (see \cite{Ledouxconcentrationofmeasure}), for any $s\ge 0$, we have
\[
\rP(|\grad U_n(\bm{x})|\le \rE(|\grad U_n(\bm{x})|)-s)\le \exp\left(-s^2 \cdot \frac{2m}{M^2}\right).
\]
In the proof of Proposition \ref{proposition:BPSexpectednbofbounces}, we have shown that $\rE(|\grad U_n(\bm{x})|)\ge \sqrt{n-\frac{1}{2}}\sqrt{m}$, hence for any $s\ge 0$,
\begin{equation}\label{eq:normgradlbnd1}
\rP\left(|\grad U_n(\bm{x})|\le \sqrt{n-\frac{1}{2}}\sqrt{m}-s\right)\le \exp\left(-s^2 \cdot \frac{2m}{M^2}\right).
\end{equation}
This bound will be used to control $\rP\left[|\grad U_n(\bm{X})|\le t^{-1/\alpha} \right]$ for small and intermediate values of $t$. However, for large $t$, the above concentration bound is not sufficiently sharp, as it does not tends to zero as $t\to \infty$. Hence we will use a different argument, that upper bounds the density of $\bar{\pi}_n$ and the volume of the space where $|\grad U_n(\bm{X})|\le r$.

First, note that by Assumption~\ref{ass:potential}, we have $U_n(0)=0$ and $U_n$ is minimized in $0$. Using the lower and upper bounds on the Hessian of $U_n$, it follows that
$\frac{m}{2} |x|^2\le U_n(\bm{x})\le \frac{M}{2} |\bm{x}|^2$. These bounds correspond to the log-likelihoods of Gaussian densities, so the normalising constant of $U_n$ can be bounded as
\begin{equation}
\frac{(2\pi)^{n/2}}{M^{n/2}}\le \int_{\bm{x}\in \rR^d} \exp(-U_n(\bm{x}))\mathrm{d}\bm{x}\le \frac{(2\pi)^{n/2}}{m^{n/2}}.
\end{equation}
Moreover, using the bounds on the Hessian of $U_n$, it follows that $|\grad U_n(\bm{X})|\le r$ implies that $|\bm{X}|\le \frac{r}{m}$. Since the volume of a ball of radius $\frac{r}{m}$ in $\rR^n$ is
\[V_n=\frac{\pi^{n/2}}{\Gamma(\frac{n}{2}+1)}\left(\frac{r}{m}\right)^{n}\le 6\left(\frac{r}{m}\right)^{n},\]
it follows that
\begin{equation}\label{eq:normgradlbnd2}
\rP(|\grad U_n(\bm{X})|\le r)\le \rP\left(|\bm{X}|\le \frac{r}{m}\right)\le  6 \frac{M^{n/2}}{(2\pi)^{n/2}} \left(\frac{r}{m}\right)^{n}.
\end{equation}
Let $a:=(\sqrt{n-\frac{1}{2}} \sqrt{m}/2)^{-\alpha}$, and $b=\left(\frac{m \sqrt{2\pi}}{2\sqrt{M}}\right)^{-\alpha}$. By upper bounding $\rP\left[|\grad U_n(\bm{X})|\le t^{-1/\alpha} \right]$ by $1$ for $0\le t\le a$, by $\exp\left(-\frac{(n-\frac{1}{2}) m^2}{2 M^2}\right)$ for $a<t\le b$ (using \eqref{eq:normgradlbnd1}), and by
$6 t^{-n/\alpha}\left(\frac{\sqrt{M}}{m\sqrt{2\pi}}\right)^{n}$ for $t>b$, by \eqref{eq:expinvmomentbnd}, for $n>\alpha$, we have
\begin{align*}\rE\left[\frac{1}{|\grad U_n(\bm{X})|^\alpha} \right]&\le \left(\sqrt{n-\frac{1}{2}} \sqrt{m}/2\right)^{-\alpha}+\left(\frac{m \sqrt{2\pi}}{2\sqrt{M}}\right)^{-\alpha}\cdot \exp\left(-\frac{(n-\frac{1}{2}) m^2}{2 M^2}\right)+6\left(\frac{\sqrt{M}}{m\sqrt{2\pi}}\right)^{n}\frac{b^{-\frac{n}{\alpha}+1}}{\frac{n}{\alpha}-1}\\
&\le \left(\sqrt{n-\frac{1}{2}} \sqrt{m}/2\right)^{-\alpha}+\left(\frac{m \sqrt{2\pi}}{2\sqrt{M}}\right)^{-\alpha}\cdot \exp\left(-\frac{(n-\frac{1}{2}) m^2}{2 M^2}\right)+6 \left(\frac{\sqrt{M}}{m}\right)^{\alpha} \frac{2^{-n}}{\frac{n}{\alpha}-1}
\end{align*}
which tends to $0$ as $n\to \infty$.
\end{proof}

\end{appendix}

\end{document}